\newcommand{\acli}[1]{\textit{\acl{#1}}}		
\newcommand{\acdef}[1]{\textit{\acl{#1}} \textup{(\acs{#1})}\acused{#1}}		
\colorlet{MyRed}{FireBrick!50!Crimson}
\colorlet{MyBlue}{DodgerBlue!75!black}
\colorlet{MyGreen}{DarkGreen!85!black}
\colorlet{MyViolet}{DarkMagenta}
\colorlet{MyLightBlue}{DodgerBlue!20}
\colorlet{MyLightGreen}{MyGreen!20}
\colorlet{PrimalColor}{MyBlue}
\colorlet{PrimalFill}{MyLightBlue}
\colorlet{DualColor}{MyRed}
\colorlet{RevColor}{MyRed}
\colorlet{LinkColor}{MediumBlue}
\newcommand{\afterhead}{.\;}		
\newcommand{\para}[1]{\medskip\paragraph{\textbf{#1\afterhead}}}
\setlist[enumerate,1]{label = \upshape(\arabic*)}
\newcommand{\EMAIL}[1]{\email{\href{mailto:#1}{#1}}}
\crefname{algo}{Algorithm}{Algorithms}
\crefname{assumption}{Assumption}{Assumptions}
\crefname{case}{Case}{Cases}
\theoremstyle{plain}
\newtheorem{lemma}{Lemma}		
\newtheorem{proposition}{Proposition}		
\newtheorem*{corollary*}{Corollary}		
\theoremstyle{definition}
\newtheorem{definition}{Definition}		
\newtheorem{example}{Example}		
\newtheorem*{definition*}{Definition}		
\newtheorem*{assumption*}{Assumptions}		
\newtheorem*{example*}{Example}		
\theoremstyle{remark}
\newtheorem*{remark*}{Remark}		
\def\endenv{\hfill\mbox{\small$\lozenge$}}
\newcounter{proofpart}
\numberwithin{example}{section}		
\newcommand{\debug}[1]{#1}		
\newcommand{\newmacro}[2]{\newcommand{#1}{\debug{#2}}}		
\newcommand{\newop}[2]{\DeclareMathOperator{#1}{\debug{#2}}}		
\DeclarePairedDelimiter{\braces}{\{}{\}}		
\DeclarePairedDelimiter{\bracks}{[}{]}		
\DeclarePairedDelimiter{\parens}{(}{)}		
\DeclarePairedDelimiter{\abs}{\lvert}{\rvert}		
\DeclarePairedDelimiter{\setof}{\{}{\}}		
\DeclarePairedDelimiterX{\setdef}[2]{\{}{\}}{#1:#2}		
\DeclarePairedDelimiterXPP{\exclude}[1]{\mathopen{}\setminus}{\{}{\}}{}{#1}		
\newcommand{\N}{\mathbb{N}}		
\newcommand{\R}{\mathbb{R}}		
\newcommand{\C}{\mathbb{C}}		
\DeclareMathOperator*{\argmax}{arg\,max}		
\DeclareMathOperator*{\argmin}{arg\,min}		
\DeclareMathOperator{\bigoh}{\mathcal{O}}		
\DeclareMathOperator{\cl}{cl}		
\DeclareMathOperator{\diag}{diag}		
\DeclareMathOperator{\dist}{dist}		
\DeclareMathOperator{\dom}{dom}		
\DeclareMathOperator{\eig}{eig}		
\DeclareMathOperator{\im}{im}		
\DeclareMathOperator{\one}{\mathds{1}}		
\DeclareMathOperator{\rank}{rank}		
\DeclareMathOperator{\relint}{ri}		
\DeclareMathOperator{\supp}{supp}		
\DeclareMathOperator{\vol}{vol}		
\newcommand{\cf}{cf.\xspace}		
\newcommand{\eg}{e.g.,\xspace}		
\newcommand{\ie}{i.e.,\xspace}		
\newcommand{\viz}{viz.\xspace}		
\newcommand{\textpar}[1]{\textup(#1\textup)}		
\newcommand{\txs}{\textstyle}		
\newcommand{\alt}[1]{#1'}		
\newcommand{\altalt}[1]{#1''}		
\newmacro{\dd}{\:d}		
\newcommand{\ddt}{\frac{d}{dt}}		
\newcommand{\eps}{\varepsilon}		
\newcommand{\insum}{\sum\nolimits}		
\newmacro{\const}{c}		
\newmacro{\Const}{\rho}		
\newmacro{\coefalt}{\mu}		
\NewDocumentCommand{\coef}{O{\lambda}}{\debug{#1}}
\newmacro{\param}{\theta}		
\newmacro{\params}{\Theta}		
\newmacro{\pexp}{p}		
\newmacro{\qexp}{q}		
\newmacro{\rexp}{r}		
\newmacro{\radius}{r}
\newmacro{\beforestart}{0}		
\newmacro{\start}{1}		
\newmacro{\afterstart}{2}		
\newmacro{\running}{\start,\afterstart,\dotsc}		
\newmacro{\halfrunning}{1,3/2,2\dotsc}		
\newmacro{\run}{n}		
\newmacro{\runalt}{k}		
\newmacro{\runaltalt}{\tau}		
\newmacro{\nRuns}{T}		
\newmacro{\runs}{\mathcal{\nRuns}}		
\newmacro{\state}{\denmat}		
\newmacro{\statealt}{\mathbf{Z}}		
\newcommand{\curr}[1][\state]{\debug{#1}_{\run}}		
\newmacro{\tstart}{0}		
\renewcommand{\time}{\debug{t}}		
\newmacro{\timealt}{s}		
\newmacro{\horizon}{T}		
\newmacro{\traj}{x}		
\newmacro{\trajalt}{y}		
\newmacro{\trajaltalt}{z}		
\newmacro{\flow}{\phi}		
\DeclarePairedDelimiterXPP{\flowof}[2]{\flow_{#1}}{(}{)}{}{#2}		
\newop{\Nash}{NE}		
\newop{\CE}{CE}		
\newop{\CCE}{CCE}		
\newop{\NI}{NI}		
\newop{\brep}{br}		
\newop{\reg}{Reg}		
\newop{\preg}{\overline{Reg}}		
\newop{\val}{val}		
\newmacro{\strat}{x}		
\newmacro{\stratalt}{\alt\strat}		
\newmacro{\strats}{\mathcal{X}}		
\newmacro{\play}{i}		
\newmacro{\playalt}{j}		
\newmacro{\playaltlalt}{k}		
\newmacro{\nPlayers}{N}		
\newmacro{\players}{\mathcal{\nPlayers}}		
\newmacro{\pure}{\alpha}		
\newmacro{\puredummy}{\kappa}		
\newmacro{\purealt}{\beta}		
\newmacro{\purealtalt}{\gamma}		
\newmacro{\nPures}{A}		
\newmacro{\pures}{\mathcal{\nPures}}		
\newmacro{\loss}{\ell}		
\newmacro{\pay}{u}		
\newmacro{\payv}{v}		
\newmacro{\payfield}{\mathbf{V}}		
\newmacro{\pot}{\Phi}		
\newmacro{\game}{\mathcal{G}}		
\newmacro{\gamefull}{\game(\players,\points,\pay)}		
\newmacro{\fingame}{\Gamma}		
\newmacro{\fingamefull}{\Gamma(\players,\pures,\pay)}		
\newmacro{\gmat}{g}		
\newmacro{\gdist}{\dist_{\gmat}}
\newmacro{\mfld}{M}		
\newmacro{\form}{\omega}		
\newmacro{\tvec}{z}		
\newmacro{\uvec}{u}		
\newmacro{\ball}{\basin}		
\newmacro{\sphere}{\mathbb{S}}		
\newmacro{\graph}{\mathcal{G}}
\newmacro{\vertices}{\mathcal{V}}
\newmacro{\edges}{\mathcal{E}}
\newcommand{\herm}[1][]{\mathbb{H}^{#1}}		
\newcommand{\psd}[1][]{\mathbb{H}_{+}^{#1}}		
\newmacro{\mat}{\mathbf{A}}		
\newmacro{\matval}{\lambda}		
\newmacro{\matvals}{\mathbf{\Lambda}}		
\newmacro{\matvec}{\mathbf{u}}		
\newmacro{\matvecs}{\mathbf{U}}		
\newmacro{\matalt}{\mathbf{M}}		
\newmacro{\hmat}{\mathbf{H}}		
\newmacro{\diagmat}{\mathbf{\Lambda}}		
\newmacro{\unitmat}{\mathbf{U}}		
\newop{\row}{row}		
\newop{\col}{col}		
\newmacro{\ones}{\mathbf{1}}		
\newmacro{\eye}{\mathbf{I}}		
\newmacro{\zer}{\mathbf{0}}		
\DeclareMathOperator{\tr}{tr}		
\DeclarePairedDelimiterXPP{\trof}[1]{\tr}{[}{]}{}{#1}
\DeclarePairedDelimiter{\norm}{\lVert}{\rVert}		
\DeclarePairedDelimiterXPP{\dnorm}[1]{}{\lVert}{\rVert}{_{\ast}}{#1}		
\DeclarePairedDelimiterXPP{\onenorm}[1]{}{\lVert}{\rVert}{_{1}}{#1}		
\DeclarePairedDelimiterXPP{\twonorm}[1]{}{\lVert}{\rVert}{_{2}}{#1}		
\DeclarePairedDelimiterXPP{\supnorm}[1]{}{\lVert}{\rVert}{_{\infty}}{#1}		
\DeclarePairedDelimiterXPP{\frobnorm}[1]{}{\lVert}{\rVert}{_{F}}{#1}		
\DeclarePairedDelimiter{\bra}{\langle}{\rvert}		
\DeclarePairedDelimiter{\ket}{\lvert}{\rangle}		
\DeclarePairedDelimiterX{\braket}[2]{\langle}{\rangle}{#1\mathopen{}\delimsize\vert\mathopen{}#2}
\DeclarePairedDelimiter{\obs}{\langle}{\rangle}		
\newmacro{\vecspace}{\mathcal{V}}		
\newmacro{\subspace}{\mathcal{W}}		
\newmacro{\coord}{\pure}		
\newmacro{\coordalt}{\purealt}		
\newmacro{\coordaltalt}{\purealtalt}		
\newmacro{\nCoords}{d}		
\newmacro{\dims}{\nCoords}		
\newmacro{\vdim}{\nCoords}		
\newmacro{\pvec}{z}		
\newmacro{\pvecalt}{r}		
\newmacro{\bvec}{e}		
\newmacro{\bvecs}{\mathcal{E}}		
\newmacro{\cvec}{b}     
\newmacro{\cvecalt}{d}     
\newmacro{\pspace}{\vecspace}		
\newmacro{\dspace}{\vecspace^{\ast}}		
\newmacro{\dvec}{\dpoint}		
\newmacro{\dbvec}{\eps}		
\newmacro{\dpoint}{\mathbf{Y}}		
\newmacro{\dpointalt}{\alt\dpoint}		
\newmacro{\dpointaltalt}{\altalt\dpoint}		
\newmacro{\dpoints}{\mathcal{Y}}		
\newmacro{\dstate}{\dmat}		
\newmacro{\dbase}{v}		
\newcommand{\defeq}{\coloneqq}		
\newcommand{\from}{\colon}		
\newmacro{\argdot}{\textrm{\small\textbullet}}		
\newmacro{\fun}{f}		
\newop{\Opt}{Opt}		
\newop{\Sol}{Sol}		
\newop{\gap}{Gap}		
\newop{\orcl}{Or}		
\newmacro{\tfun}{f}		
\newmacro{\obj}{f}		
\newmacro{\objalt}{g}		
\newmacro{\sobj}{F}		
\newmacro{\gvec}{g}		
\newmacro{\oper}{A}		
\newmacro{\vecfield}{v}		
\newcommand{\sol}[1][\point]{#1^{\ast}}		
\newmacro{\solvec}{\vecfield(\sol)}		
\newmacro{\solpay}{\eq[\payfield]}		
\newmacro{\test}{\mathbf{P}}
\newmacro{\signal}{V}		
\newmacro{\step}{\gamma}		
\newmacro{\learn}{\eta}		
\newmacro{\vbound}{G}		
\newmacro{\lips}{L}		
\newmacro{\strong}{\mu}		
\newmacro{\smooth}{\beta}		
\newop{\tspace}{T}		
\newop{\tcone}{TC}		
\newop{\dcone}{\tcone^{\ast}}		
\newop{\ncone}{NC}		
\newop{\pcone}{PC}		
\newop{\hull}{\Delta}		
\newmacro{\cvx}{\mathcal{C}}		
\newmacro{\subd}{\partial}		
\newmacro{\minmax}{\mathcal{L}}		
\newmacro{\minvar}{\point_{1}}		
\newmacro{\minvaralt}{\alt\minvar}		
\newmacro{\minvars}{\points_{1}}		
\newmacro{\minsol}{\sol[\minvar]}		
\newmacro{\maxvar}{\point_{2}}		
\newmacro{\maxvaaltr}{\alt\maxvar}		
\newmacro{\maxvars}{\points_{2}}		
\newmacro{\maxsol}{\sol[\maxvar]}		
\newop{\Eucl}{\mathbf{\Pi}}		
\newop{\logit}{\mathbf{\Lambda}}		
\newop{\dkl}{KL}		
\newmacro{\hreg}{h}		
\newmacro{\hconj}{\hreg^{\ast}}		
\newmacro{\breg}{D}		
\newmacro{\mprox}{P}		
\newmacro{\mirror}{\mathbf{Q}}		
\newmacro{\fench}{F}		
\newmacro{\depth}{H}		
\newmacro{\hstr}{K}		
\newmacro{\hker}{\theta}		
\newmacro{\proxdom}{\denmats_{\hreg}}		
\newmacro{\proxdomi}{\denmats_{\hreg_{\play}}}		
\newmacro{\zone}{\mathbb{D}}		
\DeclarePairedDelimiterXPP{\proxof}[2]{\mprox_{#1}}{(}{)}{}{#2}		
\newmacro{\point}{z}		
\newmacro{\pointalt}{\alt\point}		
\newmacro{\pointaltalt}{\altalt\point}		
\newmacro{\points}{\mathcal{Z}}		
\newmacro{\intpoints}{\relint\points}		
\newmacro{\base}{p}		
\newmacro{\basealt}{q}		
\newmacro{\basealtalt}{u}		
\newmacro{\open}{\mathcal{U}}		
\newmacro{\closed}{\mathcal{C}}		
\newmacro{\cpt}{\mathcal{K}}		
\newmacro{\nhd}{\mathcal{U}}		
\newmacro{\nhdalt}{\alt\nhd}		
\newop{\ex}{\mathbb{E}}		
\newop{\prob}{\mathbb{P}}		
\newop{\Var}{Var}		
\newop{\simplex}{\hull}		
\DeclarePairedDelimiterXPP{\exof}[1]{\ex}{[}{]}{}{
 #1}
\DeclarePairedDelimiterXPP{\probof}[1]{\prob}{(}{)}{}{
 #1}
\DeclarePairedDelimiterXPP{\oneof}[1]{\one}{\{}{\}}{}{
 #1}
\newmacro{\sample}{\omega}		
\newmacro{\samples}{\Omega}		
\newmacro{\seed}{\theta}		
\newmacro{\seeds}{\Theta}		
\newmacro{\filter}{\mathcal{F}}		
\newmacro{\probspace}{(\samples,\filter,\prob)}		
\newmacro{\history}{\mathcal{H}}		
\newmacro{\event}{E}       
\newmacro{\eventalt}{H}       
\newmacro{\mean}{\mu}		
\newmacro{\sdev}{\sigma}		
\newmacro{\variance}{\sdev^{2}}		
\newmacro{\proper}{\tau}		
\newmacro{\error}{Z}		
\newmacro{\noise}{U}		
\newmacro{\bias}{b}		
\newmacro{\brown}{W}		
\newmacro{\serror}{\theta}		
\newmacro{\snoise}{\xi}		
\newmacro{\sbias}{\psi}		
\newmacro{\sbound}{M}		
\newmacro{\bbound}{B}		
\newmacro{\noisepar}{\sdev}		
\newmacro{\noisevar}{\variance}		
\newmacro{\evector}{\mathbf{u}}
\newmacro{\evalue}{x}
\newmacro{\lag}{\mathcal{L}}		
\newmacro{\hit}{\tau}		
\newmacro{\cont}{\mathcal{D}}		
\newmacro{\limitden}{\denmat_{\infty}}		
\newmacro{\limitnhd}{\strats^{*}}		
\newmacro{\bench}{\mathbf{P}}		
\newmacro{\invar}{\mu}		
\newmacro{\quotientmap}{q}		
\newmacro{\quotientspace}{\dmats_0}		
\newmacro{\quotientmirror}{\mirror_0}		
\newmacro{\Leb}{\lambda}		
\newmacro{\rankeq}{r}		
\newmacro{\limitnhdrank}{\limitnhd_\limitrank}		
\newmacro{\basin}{\nhd_0}		
\newmacro{\denmateq}{\denmat^*}		
\newmacro{\hilbert}{\mathcal{H}}		
\newmacro{\quant}{\psi}		
\newmacro{\quantalt}{\alt\psi}		
\newmacro{\quants}{\Psi}		
\newmacro{\povm}{\mathbf{P}}		
\newmacro{\pdist}{P}		
\newmacro{\outcome}{\sample}		
\newmacro{\outcomes}{\samples}		
\newmacro{\payobs}{U}		
\newmacro{\denop}{\chi}		
\newmacro{\denmat}{\mathbf{X}}		
\newmacro{\denval}{x}		
\newmacro{\denvals}{\diag(\mathbf{x})}		
\newmacro{\denvec}{\mathbf{u}}		
\newmacro{\denvecs}{\mathbf{U}}		
\newmacro{\denmatalt}{\alt\denmat}		
\newmacro{\denmats}{\boldsymbol{\mathcal{X}}}		
\newmacro{\basemat}{\mathbf{P}}		
\newmacro{\dmat}{\mathbf{Y}}		
\newmacro{\dmatalt}{\alt\dmat}		
\newmacro{\dmats}{\boldsymbol{\mathcal{Y}}}		
\newmacro{\pflow}{\boldsymbol{\chi}}		
\DeclarePairedDelimiterXPP{\pflowof}[2]{\pflow_{#1}}{(}{)}{}{#2}		
\newmacro{\dflow}{\boldsymbol{\psi}}		
\DeclarePairedDelimiterXPP{\dflowof}[2]{\dflow_{#1}}{(}{)}{}{#2}		
\newmacro{\pnhd}{\boldsymbol{\mathcal{U}}}		
\newmacro{\pnhdalt}{\alt\pnhd}		
\newmacro{\dnhd}{\boldsymbol{\mathcal{W}}}		
\newmacro{\limpoint}{\hat\denmat}		
\newmacro{\payent}{V}		
\newmacro{\denent}{X}		
\newcommand{\eq}{\sol[\denmat]}		
\newcommand{\eqs}{\sol[\denmats]}		
\newmacro{\qgame}{\mathcal{Q}}		
\newmacro{\qgamefull}{\qgame(\players,\quants,\pay)}		
\newmacro{\qgamecont}{\qgame(\players,\denmats,\pay)}		
\newcommand{\stateof}[2][]{\denmat_{#1}\parens{#2}}
\newcommand{\dstateof}[2][]{\dmat_{#1}\parens{#2}}
\newmacro{\energy}{E}		
\newmacro{\A}{\boldsymbol{\mathcal{A}}}		
\begin{document}


\title{Learning in Quantum Games}

\author
[K.~Lotidis]
{Kyriakos Lotidis$^{\ast}$}
\address{$^{\ast}$\,%
Department of Management Science \& Engineering, Stanford University.}
\EMAIL{klotidis@stanford.edu}
\author
[P.~Mertikopoulos]
{Panayotis Mertikopoulos$^{\sharp}$}
\address{$^{\sharp}$\,%
Univ. Grenoble Alpes, CNRS, Inria, Grenoble INP, LIG, 38000 Grenoble, France.}
\EMAIL{panayotis.mertikopoulos@imag.fr}
\author
[N.~Bambos]
{Nicholas Bambos$^{\ddag}$}
\address{$^{\ddag}$\,%
Department of Electrical Engineering and Management Science \& Engineering, Stanford University.}
\EMAIL{bambos@stanford.edu}

\subjclass[2020]{%
Primary 91A81, 37N40, 68Q32;
secondary 68T05, 81Q93, 91B80.}

\keywords{%
Quantum games;
Nash equilibrium;
regularized learning;
asymptotic stability}

\newacro{LHS}{left-hand side}
\newacro{RHS}{right-hand side}
\newacro{iid}[i.i.d.]{independent and identically distributed}

\newacro{NE}{Nash equilibrium}
\newacroplural{NE}[NE]{Nash equilibria}
\newacro{VI}{variational inequality}
\newacroplural{VI}[VIs]{variational inequalities}

\newacro{DGF}{distance-generating function}
\newacro{KKT}{Karush\textendash Kuhn\textendash Tucker}
\newacro{ODE}{ordinary differential equation}

\newacro{PVM}{projection-valued measure}
\newacro{POVM}{positive operator-valued measure}
\newacro{FTRL}{``follow the regularized leader''}
\newacro{FTQL}{``follow the quantum regularized leader''}
\newacro{q-FTRL}{``quantum follow the regularized leader''}
\newacro{QRD}{quantum replicator dynamics}
\newacro{GAN}{generative adversarial network}
\newacro{QGAN}{quantum generative adversarial network}

\newacro{MXW}{matrix exponential weights}
\newacro{MMW}{matrix multiplicative\,/\,exponential weights}
\newacro{MRD}{matrix replicator dynamics}

\begin{abstract}
%
%
In this paper, we introduce a class of learning dynamics for general quantum games, that we call \acdef{FTQL}, in reference to the classical \acs{FTRL} template for finite games.
We show that the induced quantum state dynamics decompose into
\begin{enumerate*}
[(\itshape i\hspace*{1pt}\upshape)]
\item
a classical, commutative component which governs the dynamics of the system's eigenvalues in a way analogous to the evolution of mixed strategies under \acs{FTRL};
and
\item
a non-commutative component for the system's eigenvectors which has no classical counterpart.
\end{enumerate*}
Despite the complications that this non-classical component entails, we find that the \ac{FTQL} dynamics incur no more than constant regret in \emph{all} quantum games.
Moreover, adjusting classical notions of stability to account for the nonlinear geometry of the state space of quantum games,
we show that only \emph{pure quantum equilibria} can be stable and attracting under \ac{FTQL}
while, as a partial converse,
pure equilibria that satisfy a certain ``variational stability'' condition are always attracting.
Finally, we show that the \ac{FTQL} dynamics are \emph{Poincaré recurrent} in quantum min-max games, extending in this way a very recent result for the \acl{QRD}.
\vspace{-\baselineskip}
\end{abstract}
\maketitle

\acresetall		

\section{Introduction}
\label{sec:introduction}

The advent of quantum information theory \textendash\ and, with it, the associated ``quantum advantage'' \cite{Pre18,AABB+19,ZWDC+20} \textendash\ has had a profound impact on computer science and machine learning, from quantum cryptography and shadow tomography \cite{Aar20}, to \acp{QGAN} and adversarial learning \cite{DK18,CHL19,LW18}.
At a high level, the advantages of quantum-based computing are owed to the possibility of preparing superpositions of binary-state quantum systems known as \emph{qubits}:
classical bits cannot lie in superposition, so the calculations that can be performed by classical computers are \emph{de facto} limited by their binary alphabet and memory structure.
In light of this, quantum computing has the potential to greatly accelerate the development of artificial intelligence algorithms and models, with Google's ``Sycamore'' $54$-qubit processor training an autonomous vehicle model in less than $200$ seconds \cite{AABB+19}.

In a similar manner, when such models are deployed in a multi-agent context \textendash\ \eg as in the case of \acp{QGAN} or autonomous vehicles \textendash\ the landscape changes drastically relative to classical non-cooperative frameworks.
The main reason for this is again the ``quantum advantage'':
due to the intricacies of decoherence and entaglement \textendash\ two quantum notions that have no classical counterpart \textendash\ quantum players can have a distinct advantage over ``classical'' players, achieving higher payoffs at equilibrium than would otherwise be possible \cite{Mey99,EWL99}.
This is again owed to the fact that probabilistic mixing works differently in the quantum and classical worlds:
in classical games, a mixed \emph{strategy} is a probabilistic convex combination of the constituent pure strategies;
in quantum games, a mixed \emph{state} is a probabilistic mixture of the quantum \emph{projectors} associated to each constituent state.
Because of this, a mixed quantum state can return payoffs that lie outside the convex hull of classical mixed strategies, thus providing a tangible advantage to players with access to quantum technologies \textendash\ \eg the ability to encode their action in a qubit register, which is then submitted to a ``referee'' (the natural mechanism determining the payoffs of a quantum game).

Of course, the extent to which the advantage of quantum players manifests itself is contingent on the players' actually reaching an equilibrium.
The recent work of \citet{BW22} has shown that the problem of computing an approximate \acl{NE} of a quantum game is included in PPAD, so, by the seminal work of \citet{DGP06,DGP09-acm}, it must be complete for this class (since computing a quantum equilibrium is at least as hard as computing a classical one).
Thus, given that the dimensionality of a quantum game is exponential in the number of qubits available to each player, computing a \acl{NE} of a quantum game quickly becomes an intractable affair, in all but the smallest games.
On that account, it seems more reasonable to turn to an online learning paradigm where each player seeks to minimize their individual regret, and instead ask:
\begin{center}
\itshape
Are all equilibrium outcomes equally likely under a quantum no-regret learning scheme?
\\
Is there a class of equilibria with an inherent selection bias \textendash\ either for or against?
\end{center}

\para{Our contributions}

First, to achieve no-regret in a quantum setting, we introduce a flexible model for learning in general $\nPlayers$-player quantum games based on the popular \acf{FTRL} template for \emph{finite} games \cite{SSS06,SS11}.
The resulting model, which we call \acdef{FTQL}, contains as a special case the \ac{MMW} dynamics that have been used extensively in quantum games and matrix learning \cite{TRW05,JW09,KSST12,ACHK+18,JPS22}, and which give rise to the \acl{QRD} \cite{Hid06,JPS22}.
Importantly, as we show in \cref{sec:dynamics}, the mixed-state dynamics of \ac{FTQL} decompose into a ``classical'' part (eigenvalues evolve as the \ac{FTRL} dynamics in finite games), plus a ``quantum'' component capturing the evolution of the system's eigenfunctions (and which has no classical analogue).

In terms of regret minimization, all \ac{FTQL} dynamics incur at most constant regret, so they represent a compelling choice from a learning standpoint.
However, deriving the dynamics' equilibrium convergence properties is significantly more difficult because of the nonlinear geometry of the game's state space.
Specifically, in contrast to finite games (where pure strategies are isolated extreme points), the pure states of a quantum game form a continuous manifold of stationary points (all of them extreme), so the study of stability and convergence questions becomes a highly involved affair.
Nonetheless, despite these topological complications, we show that \ac{FTQL} enjoys the following fundamental properties:
\begin{enumerate*}
[(\itshape i\hspace*{1pt}\upshape)]
\item
\aclp{NE} are stationary;
\item
limits of interior trajectories and Lyapunov stable states are Nash;
\item
only \emph{pure quantum equilibria} can be stable and attracting under \ac{FTQL} (up to the exclusion of trivial stationary states);
and
\item
as a partial converse, we show that pure states that satisfy a certain ``variational stability'' condition are attracting, irrespective of the chosen regularizer.
\end{enumerate*}
On that account, our results lead to an implicit quantum ``purification'' principle:
under \ac{FTQL},
\emph{mixed states are inherently fragile, and only pure quantum states can be consistently attracting.}%
\footnote{In the classical world, a version of the above collection of results is sometimes referred to as the ``folk theorem'' of evolutionary game theory \cite{Cre03,HS03,MS16,FVGL+20}.}

Finally, we complement our results with a closer look at two-player, zero-sum quantum games, where we show that
\ac{FTQL} exhibits a cycling property known as \emph{Poincaré recurrence:}
almost all trajectories of play return arbitrarily close to their starting point infinitely often.
This result is a broad generalization of a recent result by \citet{JPS22}, who established this property for the \ac{MMW} dynamics.
In this regard, our result shows that the \ac{MMW} result is not a coincidence:
despite the ``quantum advantage'', perfect competition cannot be resolved by the dynamics of regularized learning.

To simplify the presentation, we focus throughout on models that evolve in continuous time.
This allows us to sidestep issues having to do with hyperparameter tuning and the like, and instead spotlight the essential aspects of the theory.

\section{Preliminaries}
\label{sec:prelims}

We start by briefly reviewing some basics of quantum game theory and introducing the necessary context for our results.

\para{Notation}
Given a (complex) Hilbert space $\hilbert$, we will use Dirac's bra-ket notation to distinguish between an element $\ket{\quant}$ of $\hilbert$ and its adjoint $\bra{\quant}$;
otherwise, when a basis is implied by the context, we will use the dagger notation ``$\dag$'' to denote the Hermitian transpose $\quant^{\dag}$ of $\quant$.
We will also write $\herm[\vdim]$ for the space of $\vdim\times\vdim$ Hermitian matrices, and $\psd[\vdim]$ for the cone of positive-semidefinite matrices in $\herm[\vdim]$.
Finally, given a real function $\fun\from\R\to\R$ and a Hermitian matrix $\denmat\in\herm[\vdim]$ with unitary eigen-decomposition $\denmat = \sum_{\coord=1}^{\vdim} \denval_{\coord} \denvec_{\coord} \denvec_{\coord}^{\dag}$,
we will write $\fun(\denmat)$ for the (likewise Hermitian) matrix $\fun(\denmat) = \sum_{\coord=1}^{\vdim} \fun(\denval_{\coord}) \denvec_{\coord}\denvec_{\coord}^{\dag}$.

\para{Quantum games}

Following \cite{EWL99,GW07}, a \emph{quantum game} consists of the following primitives:
\begin{enumerate}
\item
A finite set of \emph{players} $\play\in\players = \{1,\dotsc,\nPlayers\}$.
\item
Each player $\play\in\players$ has access to a complex Hilbert space $\hilbert_{\play} \cong \C^{\vdim_{\play}}$ describing the set of (pure) \emph{quantum states} available to the player (typically a discrete register of qubits).
In more detail, a quantum state is an element $\quant_{\play}$ of $\hilbert_{\play}$ with unit norm, so the set of all such states is the unit sphere $\quants_{\play} \defeq \setdef{\quant_{\play}\in\hilbert_{\play}}{\norm{\quant_{\play}}=1}$ of $\hilbert_{\play}$.
We will also write $\quants \defeq \prod_{\play} \quants_{\play}$ for the space of all ensembles $\quant = (\quant_{1},\dotsc,\quant_{\nPlayers})$ of pure states $\quant_{\play}\in\quants_{\play}$ that are independently prepared by each player $\play\in\players$.
\item
The players' rewards are determined by their individual \emph{payoff functions} $\pay_{\play}\from\quants \to \R$.
These payoff functions are not arbitrary, but are obtained from a joint \acdef{POVM} quantum measurement process that unfolds as follows \cite{CN10}:
First, we assume given a finite set of possible \emph{measurement outcomes} $\outcome \in \outcomes$ that a referee can observe from the players' quantum states (\eg measure a player-prepared qubit to be ``up'' or ``down'').
Each such outcome $\outcome\in\outcomes$ is associated to a positive semi-definite operator $\povm_{\outcome}\from\hilbert\to\hilbert$ that acts on the tensor product $\hilbert \defeq \bigotimes_{\play} \hilbert_{\play}$ of the players' individual state spaces;
we further assume that $\sum_{\outcome\in\outcomes} \povm_{\outcome} = \eye$ so the joint probability of observing $\outcome\in\outcomes$ when the system is at state $\quant\in\quants$ is
\begin{equation}
\pdist_{\outcome}(\quant)
	= \bra{\quant_{1}\otimes\dotsm\otimes\quant_{\nPlayers}}
		\povm_{\outcome}
		\ket{\quant_{1}\otimes\dotsm\otimes\quant_{\nPlayers}}
\end{equation}
The payoff to each player $\play\in\players$ is given by the outcome of this measurement process via a \emph{payoff observable} $\payobs_{\play}\from\outcomes\to\R$;
specifically, in this context, $\pay_{\play}(\quant)$ denotes the player's expected payoff at state $\quant\in\quants$, \viz
\begin{equation}
\label{eq:pay-pure}
\pay_{\play}(\quant)
	\defeq \obs{\payobs_{\play}}
	\equiv \insum_{\outcome} \pdist_{\outcome}(\quant) \, \payobs_{\play}(\outcome).
\end{equation}
\end{enumerate}
A \emph{quantum game}
is then defined as a tuple $\qgame \equiv \qgamefull$ with players, quantum states, and payoff functions as above.

\para{Mixed states}

In addition to pure states, each player $\play\in\players$ can also prepare probabilistic mixtures thereof, known as \emph{mixed states}.
In contrast to mixed strategies in classical, finite games, these mixed states are \emph{not} convex combinations of their pure counterparts;
instead, given a family of pure quantum states $\quant_{\play\pure_{\play}} \in \quants_{\play}$ indexed by $\pure_{\play}\in\pures_{\play}$, a mixed state is described by a \emph{density matrix} of the form
\begin{equation}
\label{eq:denmat}
\denmat_{\play}
	= \smashoperator{\sum_{\pure_{\play}\in\pures_{\play}}}
		\strat_{\play\pure_{\play}}
		\ket{\quant_{\play\pure_{\play}}}
		\bra{\quant_{\play\pure_{\play}}}
\end{equation}
where
$\strat_{\play\pure_{\play}} \geq 0$ is the mixing weight of $\quant_{\play\pure_{\play}}$,
and
we assume that $\tr\denmat_{\play} = 1$ (the states $\quant_{\play\pure_{\play}}$ are not assumed to be orthogonal in this context).
By Born's rule, this means that if each player $\play\in\players$ prepares a mixed state according to $\denmat_{\play}$, the probability of observing $\outcome\in\outcomes$ under $\denmat = (\denmat_{1},\dotsc,\denmat_{\nPlayers})$ will be
\(
\pdist_{\outcome}(\denmat)
	= \sum_{\pure} \strat_{\pure} \bra{\quant_{\pure}} \povm_{\outcome} \ket{\quant_{\pure}},
\)
where, in multi-index notation,
$\pure = (\pure_{1},\dotsc,\pure_{\nPlayers})$,
$\strat_{\pure} = \prod_{\play} \strat_{\play\pure_{\play}}$,
and
$\quant_{\pure} = \bigotimes_{\play} \quant_{\play\pure_{\play}}$.
Thus, in a slight abuse of notation, the expected payoff to player $\play\in\players$ under $\denmat$ will be
\begin{align}
\label{eq:pay-mix}
\pay_{\play}(\denmat)
	&= \sum_{\outcome\in\outcomes} \sum_{\pure\in\pures}
		\strat_{\pure} \pdist_{\outcome}(\quant_{\pure}) \, \payobs_{\play}(\outcome)
	= \sum_{\pure\in\pures} \strat_{\pure}
		\pay_{\play}(\quant_{\pure})
	\notag\\
	&= \smashoperator{\sum_{\pure_{1}\in\pures_{1}}} \:\dotsi\: \smashoperator{\sum_{\pure_{\nPlayers}\in\pures_{\nPlayers}}}
		\strat_{\pure_{1}} \!\dotsm \strat_{\pure_{\nPlayers}}\,
		\pay_{\play}(\quant_{\pure_{1}},\dotsc,\quant_{\pure_{\nPlayers}}).
\end{align}

\para{Contrasting to other classes of games}

The expression \eqref{eq:pay-mix} for a player's expected payoff under a mixed state is reminiscent of mixed extensions of classical finite games, but this association is very tenuous.
From a conceptual standpoint, the principal differences are as follows:
\begin{enumerate}
\item
There is an infinite continuum of pure states $\quant\in\quants$, not a finite number thereof (as is the case in finite games).
\item
The decomposition \eqref{eq:denmat} of a density matrix into pure states is not unique;
generically, there may be a continuum of (non-equivalent) families of pure states and mixing weights giving rise to the same density matrix.
\item
The convex superposition $\coef\quant + (1-\coef)\quantalt$ of two pure states $\quant$ and $\quantalt$ may give rise to quantum interference terms of the form $\ket{\quant}\bra{\quantalt}$ and $\ket{\quantalt}\bra{\quant}$ in the induced payoff;
these cross-terms have no analogue in finite games.
\end{enumerate}

Because of the above, treating a quantum game as a ``tensorial'' extension of a finite game can be misleading.
Instead, it would be more appropriate to view a quantum game as a \emph{continuous game} where each player $\play\in\players$ controls a matrix variable $\denmat_{\play}$ drawn from the ``spectraplex''
\begin{equation}
\label{eq:denmats}
\denmats_{\play}
	= \setdef{\denmat_{\play}\in\psd[\vdim_{\play}]}{\tr\denmat_{\play} = 1}
\end{equation}
and the player's payoff function $\pay_{\play}\from\denmats \equiv \prod_{\playalt} \denmats_{\playalt} \to \R$ is linear in every player's density matrix $\denmat_{\playalt} \in \denmats_{\playalt}$, $\playalt\in\players$.

\para{\Acl{NE}}

In our quantum setting, the classical solution concept of a \acdef{NE} characterizes mixed quantum states $\eq \in \denmats$ which discourage unilateral deviations in the sense that
\begin{equation}
\label{eq:Nash}
\tag{NE}
\pay_{\play}(\eq)
	\geq \pay_{\play}(\denmat_{\play};\eq_{-\play})
	\quad
	\text{for all $\denmat_{\play}\in\denmats_{\play}$, $\play\in\players$}
\end{equation}
where we write $(\denmat_{\play};\denmat_{-\play}) = (\denmat_{1},\dots,\denmat_{\play},\dotsc,\denmat_{\nPlayers})$ for the choice of player $\play$ relative to all other players.
Since $\denmats_{\play}$ is convex and $\pay_{\play}$ is linear in $\denmat_{\play}$, the existence of \aclp{NE} follows from the seminal theorem of \citet{Deb52}.

Now, letting
\begin{equation}
\label{eq:payv}
\payfield_{\play}(\denmat)
	= \nabla_{\denmat_{\play}^{\top}} \pay_{\play}(\denmat)
\end{equation}
denote the individual payoff gradient of player $\play$, standard arguments \cite{FP03,SFPP10} show that the \aclp{NE} of a quantum game $\qgame$ are precisely the solutions of the \acl{VI}
\begin{equation}
\label{eq:VI}
\tag{VI}
\trof{\payfield(\eq) (\denmat - \eq)}
	\leq 0
	\quad
	\text{for all $\denmat\in\denmats$}
\end{equation}
where $\payfield(\denmat) = (\payfield_{1}(\denmat),\dotsc,\payfield_{\nPlayers}(\denmat))$.
We note here that, since $\pay_{\play}$ is linear in $\denmat_{\play}$, the $\play$-th block $\payfield_{\play}(\denmat)$ of $\payfield(\denmat)$ does not depend on $\denmat_{\play}$, and we have
\begin{equation}
\label{eq:pay-lin}
\pay_{\play}(\denmat_{\play};\denmat_{-\play})
	= \trof{\denmat_{\play} \payfield_{\play}(\denmat)}
	\quad
	\text{for all $\denmat\in\denmats$}.
\end{equation}
We will use these properties freely in the sequel.

\para{Regret}

Complementing the notion of a \acl{NE}, an important rationality requirement in dynamic environments is the minimization of a player's \emph{regret}, \ie the performance gap between the player's expected cumulative payoff over time versus the payoff of the best fixed state in hindsight.
Formally, the regret of the $\play$-th player against the trajectory of play $\stateof{\time} \in \denmats$, $\time\geq0$, is defined as
\begin{equation}
\label{eq:regret}
\reg_{\play}(\horizon)
	= \max_{\denmatalt_{\play}\in\denmats_{\play}}
		\int_{0}^{\horizon} \bracks{\pay_{\play}(\denmatalt_{\play};\stateof[-\play]{\time}) - \pay_{\play}(\stateof{\time})}
		\dd\time
\end{equation}
and we say that player $\play$ has \emph{no regret} if $\reg_{\play}(\horizon) = o(\horizon)$.
In the rest of the paper, we will focus on learning dynamics that incur no regret, and we will examine their convergence properties relative to the game's \aclp{NE}.

\section{Learning dynamics}
\label{sec:dynamics}


\para{Learning via quantum regularization}

In classical, finite games, the most widely studied class of no-regret dynamics is the so-called \acdef{FTRL} family of algorithms \cite{SS11,SSS06,MS16}.
The main idea behind this popular template is the following:
at each instance $\time\geq0$, every player $\play\in\players$ plays a mixed strategy that maximizes the player's cumulative payoff minus a certain regularization penalty.
In this way, strategies that perform consistently better tend to be preferred over their underpeforming counterparts, while the ``regularization penalty'' introduces a certain degree of exploration to avoid getting stuck.

In the quantum regime, the role of mixed strategies is played by the game's mixed quantum states, so the reinforcement mechanism behind \ac{FTRL} leads to the matrix-valued \acdef{FTQL} dynamics
\begin{equation}
\label{eq:FTQL-int}
\stateof[\play]{\time}
	= \argmax\limits_{\denmat_{\play}\in\denmats_{\play}}
		\braces*{
			\int_{0}^{\time} \pay_{\play}(\denmat_{\play};\stateof[-\play]{\timealt}) \dd\timealt
			- \hreg_{\play}(\denmat_{\play})
		}
\end{equation}
where $\hreg_{\play}\from\denmats_{\play}\to\R$ denotes the \emph{penalty function} \textendash\ or \emph{regularizer} \textendash\ of player $\play$ (discussed in detail below).
As stated, the \ac{FTQL} dynamics \eqref{eq:FTQL-int} are in integral form, which is not particularly well-suited for our analysis.
Instead, to obtain a more concrete, autonomous reformulation, consider as a first step the \emph{regularized best response maps}
\begin{equation}
\label{eq:mirror}
\mirror_{\play}(\dmat_{\play})
	\defeq \argmax\nolimits_{\denmat_{\play} \in \denmats_{\play}} \braces{\trof{\dmat_{\play}\denmat_{\play}} - \hreg_{\play}(\denmat_{\play})}
\end{equation}
defined for all Hermitian $\dmat_{\play} \in \dmats_{\play} \defeq \herm[\vdim_{\play}]$.
Then,
in view of \eqref{eq:pay-lin},
the integral dynamics \eqref{eq:FTQL-int} can be recast in differential form as
\begin{equation}
\label{eq:FTQL}
\tag{FTQL}
\dot\dstate_{\play}(\time)
	= \payfield_{\play}(\stateof{\time})
	\qquad
\stateof[\play]{\time}
	= \mirror_{\play}(\dstateof[\play]{\time}).
\end{equation}

The dynamics \eqref{eq:FTQL} will be the basis of our analysis, so some remarks are in order.
First, in terms of interpretation, \eqref{eq:FTQL} can be seen as a gradient-following process coupled with a regularized state selection scheme \textendash\ the dynamics $\dot\dmat_{\play} = \payfield_{\play}$ and the mapping $\mirror_{\play}\from\dmat_{\play} \mapsto \denmat_{\play}$ respectively.
In this regard, the regularizer $\hreg_{\play}$ which underlies the definition of $\mirror_{\play}$ plays a crucial role, and different choices of $\hreg_{\play}$ may yield very different dynamics.
For concreteness, we will only assume in the sequel that $\hreg_{\play}$ is a trace function of the form
\(
\hreg_{\play}(\denmat_{\play})
	= \trof{\hker_{\play}(\denmat_{\play})}
\)
where $\hker_{\play}\from[0,1] \to \R$ is continuous on $[0,1]$ and has $\inf_{x\in(0,1]}\hker_{\play}''(x) > 0$.
We will also say that $\hker_{\play}$ is \emph{steep} when $\lim_{x\to0^{+}}\hker_{\play}'(x) = -\infty$, and, for normalization purposes, we will assume that $\hker_{\play}(0) = 0$.

Suppressing player indices for simplicity, some standard examples of regularizers are as follows:

\begin{example}
[$L^{2}$ regularization]
\label{ex:Eucl}
If $\hker(x) = x^{2}/2$, the players' penalty function is the squared Frobenius norm $\hreg(\denmat) = (1/2)\frobnorm{\denmat}^{2}$, in which case \cref{eq:mirror} gives the orthogonal projector $\mirror(\dmat) = \Eucl_{\denmats}(\dmat) \equiv \argmin_{\denmat\in\denmats} \frobnorm{\denmat - \dmat}$.
This choice leads to the (Frobenius) projection dynamics:
\begin{equation}
\label{eq:PD}
\tag{PD}
\dot\dmat
	= \payfield(\denmat)
	\qquad
\denmat
	= \Eucl_{\denmats}(\dmat)
\end{equation}
\end{example}

\begin{example}
[Von Neumann regularization]
\label{ex:logit}
Another standard choice is $\hker(x) = x\log x$ which yields the (negative) von Neumann entropy $\hreg(\denmat) = \trof{\denmat\log\denmat}$.
By a standard calculation, this choice of regularizer gives rise to the \acdef{MMW} dynamics
\begin{equation}
\label{eq:MMW}
\tag{MMW}
\dot\dmat
	= \payfield(\denmat)
	\qquad
\denmat
	= \frac{\exp(\dmat)}{\trof{\exp(\dmat)}}
\end{equation}
A discrete-time version of these dynamics was introduced in the context of kernel learning by \citet{TRW05} and \citet{KSST12};
for a series of more recent developments in the context of quantum learning see \cite{JW09,ACHK+18,JPS22}.
\end{example}

\begin{example}
[Tsallis regularization]
\label{ex:Tsallis}
Interpolating between the above, the \emph{Tsallis regularizer} is given by $\hker(x) = [\qexp(1-\qexp)]^{-1} (x - x^{\qexp})$ for some $\qexp > 0$ (with the continuity convention $(x-x^{\qexp})/(1-\qexp) = x \log x$ for $\qexp=1$).
When $\qexp \gets 2$, we recover the projection dynamics \eqref{eq:PD};
by contrast, the choice $\qexp \gets 1$ gives rise to \eqref{eq:MMW};
finally,
the choice $\qexp \gets 1/2$ is particularly popular in the context of bandit online learning, \cf \cite{KSST12,ZS21} and references therein.
\end{example}

\para{The mixed-state dynamics of \ac{FTQL}}

Under \eqref{eq:FTQL}, the evolution of the players' mixed states $\stateof{\time}$ is described \emph{implicitly} via that of the auxiliary score matrix $\dstateof{\time}$.
On the other hand, obtaining an \emph{explicit} expression for the dynamics of $\stateof{\time} = \mirror(\dstateof{\time})$ is considerably more difficult because the rules of matrix calculus do not provide an analytic expression for the tensor derivative $\nabla_{\dmat}\mirror(\dmat)$ of $\mirror$, even when the latter is available in closed form.

To circumvent this difficulty, we will work with a unitary eigendecomposition of $\denmat$ of the form
\begin{equation}
\label{eq:eigen}
\denmat
	= \insum_{\pure=1}^{\vdim} \denval_{\pure} \denvec_{\pure} \denvec_{\pure}^{\dag}
\end{equation}
where
$\denval_{\pure} \geq 0$, $\pure=1,\dotsc,\vdim$, is an enumeration of the eigenvalues of $\denmat$,
$\denvec_{\pure} \in \hilbert$ is a unit-norm eigenvector of $\denmat$ corresponding to $\denval_{\pure}$,
and player indices have again been suppressed (to lighten notation).
Since, in general, $\denmat$ does not commute with $\payfield$ (and hence with $\dot\denmat$), the eigenvalues and eigenvectors of $\denmat$ will evolve in a coupled, concurrent manner;
our first result below provides an explicit expression for this co-evolution:

\begin{restatable}{theorem}{dynamics}
\label{thm:dynamics}
Let $\stateof{\time} = \sum_{\pure=1}^{\vdim} \denval_{\pure}(\time) \, \denvec_{\pure}(\time) \denvec_{\pure}^{\dag}(\time)$ be an eigendecomposition of $\stateof{\time}$ as per \eqref{eq:eigen}, and suppose that $\hreg$ is steep.
Then, under \eqref{eq:FTQL}, the entries $[\dot\denmat]_{\pure\purealt} = \denvec_{\pure}^{\dag} \dot\denmat \denvec_{\purealt}$ of $\dot\denmat$ follow the quantum state dynamics:
\begin{equation}
\label{eq:QD}
\tag{QD}
[\dot\denmat]_{\pure\purealt}
	= \begin{dcases}
		\frac{\payent_{\pure\pure}}{\hker''(\denval_{\pure})}
		- \frac
			{\sum_{\puredummy} \payent_{\puredummy\puredummy}/\hker''(\denval_{\puredummy})}
			{\sum_{\puredummy} \hker''(\denval_{\pure})/\hker''(\denval_{\puredummy})}
		&\pure=\purealt
		\\[\medskipamount]
		\frac
			{\denval_{\purealt} - \denval_{\pure}}
			{\hker'(\denval_{\purealt}) - \hker'(\denval_{\pure})}
			\payent_{\pure\purealt}
		&\pure\neq\purealt
	\end{dcases}
\end{equation}
where
$\payent_{\pure\purealt} = \denvec_{\pure}^{\dag} \payfield(\denmat) \denvec_{\purealt}$
and
we are using the continuity convention $(x-y) / \bracks{\hker'(x) - \hker'(y)} = 1/\hker''(x)$ when $y \to x$.
\end{restatable}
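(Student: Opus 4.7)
The plan is to differentiate, in time, the implicit optimality relation characterizing the mirror map $\mirror$ and to read off the result in the instantaneous eigenbasis of $\denmat(\time)$. Since $\hreg$ is steep, standard Legendre-type arguments imply that $\stateof{\time}$ lies in the relative interior of $\denmats$, so the maximizer defining $\mirror(\dmat)$ in \eqref{eq:mirror} is interior and characterized by the KKT conditions for the trace constraint $\trof{\denmat}=1$, namely
\begin{equation*}
\hker'(\denmat)\;=\;\dmat\;-\;\mu(\dmat)\,\eye,
\end{equation*}
where $\mu(\dmat)\in\R$ is the associated Lagrange multiplier. Here $\hker'(\denmat)$ is interpreted in the functional-calculus sense. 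Differentiating this identity along a solution of \eqref{eq:FTQL} yields $\tfrac{d}{dt}\hker'(\denmat) = \payfield(\denmat) - \dot\mu\,\eye$, which will be the main identity to decode.

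The next step is to evaluate $\tfrac{d}{dt}\hker'(\denmat)$ entrywise in the (time-varying) eigenbasis $\{\denvec_{\pure}(\time)\}$ of $\denmat(\time)$. The clean way to do this \textemdash\ avoiding any explicit handling of $\dot\denvec_{\pure}$ \textemdash\ is via the Daleckii\textendash Krein perturbation formula. Concretely, differentiating the eigen-equations $\denmat\,\denvec_{\purealt} = \denval_{\purealt}\denvec_{\purealt}$ and $\hker'(\denmat)\,\denvec_{\purealt} = \hker'(\denval_{\purealt})\,\denvec_{\purealt}$ and sandwiching with $\denvec_{\pure}^{\dag}$ for $\pure\neq\purealt$ gives $\denvec_{\pure}^{\dag}\dot\denmat\,\denvec_{\purealt} = (\denval_{\purealt}-\denval_{\pure})\,\denvec_{\pure}^{\dag}\dot\denvec_{\purealt}$ and $\denvec_{\pure}^{\dag}\bigl[\tfrac{d}{dt}\hker'(\denmat)\bigr]\denvec_{\purealt} = \bigl[\hker'(\denval_{\purealt})-\hker'(\denval_{\pure})\bigr]\,\denvec_{\pure}^{\dag}\dot\denvec_{\purealt}$; eliminating $\denvec_{\pure}^{\dag}\dot\denvec_{\purealt}$ produces the off-diagonal relation
\begin{equation*}
\bigl[\tfrac{d}{dt}\hker'(\denmat)\bigr]_{\pure\purealt}
 \;=\;\frac{\hker'(\denval_{\purealt})-\hker'(\denval_{\pure})}{\denval_{\purealt}-\denval_{\pure}}\,[\dot\denmat]_{\pure\purealt}\qquad(\pure\neq\purealt).
\end{equation*}
For the diagonal entries, observe that $[\dot\denmat]_{\pure\pure} = \dot\denval_{\pure}$ (the cross-terms involving $\dot\denvec_{\pure}$ vanish by the normalization $\denvec_{\pure}^{\dag}\denvec_{\pure}=1$), hence $\bigl[\tfrac{d}{dt}\hker'(\denmat)\bigr]_{\pure\pure} = \hker''(\denval_{\pure})\,[\dot\denmat]_{\pure\pure}$. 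The case of coinciding eigenvalues is handled by continuity using the convention in the statement.

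Substituting these two formulas into $\tfrac{d}{dt}\hker'(\denmat) = \payfield(\denmat) - \dot\mu\,\eye$ and reading off the $(\pure,\purealt)$-entry in the eigenbasis (where $[\payfield(\denmat)]_{\pure\purealt}=\payent_{\pure\purealt}$ by definition) yields the off-diagonal line of \eqref{eq:QD} directly, and the diagonal line up to the unknown $\dot\mu$: $[\dot\denmat]_{\pure\pure} = (\payent_{\pure\pure}-\dot\mu)/\hker''(\denval_{\pure})$. Finally, the Lagrange multiplier is pinned down by the trace constraint: differentiating $\trof{\denmat}=1$ gives $\sum_{\pure}[\dot\denmat]_{\pure\pure}=0$, so
\begin{equation*}
\dot\mu \;=\; \frac{\sum_{\puredummy}\payent_{\puredummy\puredummy}/\hker''(\denval_{\puredummy})}{\sum_{\puredummy}1/\hker''(\denval_{\puredummy})},
\end{equation*}
and substituting back produces the diagonal formula in \eqref{eq:QD} after multiplying numerator and denominator by $\hker''(\denval_{\pure})$. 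I expect the main subtleties to be twofold: first, justifying that $\stateof{\time}$ remains in $\relint\denmats$ so that KKT applies with strict positivity of all $\denval_{\pure}$ (this uses steepness, together with convexity of $\hreg$ granted by $\hker''>0$); and second, handling eigenvalue collisions, where the off-diagonal denominator formally vanishes \textemdash\ here one resorts to a smooth perturbation of the trajectory, applies the formula on the dense set where all eigenvalues are simple, and passes to the limit using the stated divided-difference convention, which is exactly what the Daleckii\textendash Krein machinery licenses.
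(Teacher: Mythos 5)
Your proposal is correct and follows essentially the same route as the paper: differentiate the KKT stationarity condition $\hker'(\denmat)=\dmat-\lambda\eye$ in time, express both $\dot\denmat$ and $\tfrac{d}{dt}\hker'(\denmat)$ in the moving eigenbasis via the differentiated eigen-equations, eliminate the $\denvec_{\pure}^{\dag}\dot\denvec_{\purealt}$ terms, and pin down $\dot\lambda$ from the trace constraint. The only cosmetic difference is that you package the eigenbasis step as a Daleckii\textendash Krein divided-difference identity, whereas the paper expands $\dot\denvec_{\purealt}$ in the eigenbasis and solves for its coefficients explicitly; the resulting formulas are identical.
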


The proof of \cref{thm:dynamics} is based on the fact that, if a Hermitian matrix $\mat$ with eigendecomposition $\mat = \sum_{\pure} \matval_{\pure} \matvec_{\pure}\matvec_{\pure}^{\dag}$ follows the dynamics $\dot\mat = \matalt$, a differentiation of the eigenvalue equation $\mat\matvec_{\pure} = \matval_{\pure} \matvec_{\pure}$ yields
\begin{equation}
\label{eq:eig-diff}
\dot\matval_{\pure} \delta_{\pure\purealt}
	= (\matval_{\pure} - \matval_{\purealt})
		\matvec_{\pure}^{\dag} \dot\matvec_{\purealt}
	+ \matvec_{\pure}^{\dag} \matalt \matvec_{\purealt}
	\quad
	\text{for all $\pure,\purealt$}.
\end{equation}
This identity allows us to analyze and derive an expression for $\dot\denmat$ by solving the Lagrangian associated with the maximization problem \eqref{eq:mirror}.
The full proof is relegated to the appendix;
instead, we focus here on a representative example.


\begin{example}
[The \acl{QRD}]
\label{ex:QRD}
An important special case of the dynamics \eqref{eq:QD} is obtained by the von Neumann regularizer $\hreg(\denmat) = \trof{\denmat\log\denmat}$ of \cref{ex:logit}.
This yields the \acli{QRD}
\begin{equation}
\label{eq:QRD}
\tag{QRD}
[\dot\denmat]_{\pure\purealt}
	= \begin{dcases*}
		\txs
		\denval_{\pure} \bracks{\payent_{\pure\pure} - \sum_{\puredummy} \denval_{\puredummy} \payent_{\puredummy\puredummy}}
			&for $\pure=\purealt$
		\\[\smallskipamount]
		\frac{\denval_{\purealt} - \denval_{\pure}}{\log\denval_{\purealt} - \log\denval_{\pure}}
		\payent_{\pure\purealt}
			&for $\pure\neq\purealt$
	\end{dcases*}
\end{equation}
for the diagonal and off-diagonal elements of $\denmat$ respectively.
The diagonal part of \eqref{eq:QRD} is formally analogous to the replicator dynamics of evolutionary game theory \cite{TJ78,Wei95,San10} and captures the evolution of the eigenvalues of $\stateof{\time}$.
Thus, taken together with its off-diagonal component, \eqref{eq:QRD} provides an explicit expression for the evolution of mixed states under \eqref{eq:MMW}.
Alternatively, by applying Fréchet's differentiation formula to \eqref{eq:MMW} directly, \eqref{eq:QRD} can be rewritten in basis-free notation as
\begin{equation}
\label{eq:QRD-free}
\dot\denmat
	= \int_{0}^{1} \denmat^{1-s} \payfield(\denmat) \denmat^{s} \dd s
		- \trof{\denmat \payfield(\denmat)} \denmat
\end{equation}
The dynamics \eqref{eq:QRD} and the coordinate-free expression \eqref{eq:QRD-free} agree with the dynamics of \citet{JPS22} (who derived an equivalent expression under the assumption that $\denmat$ and $\dot\denmat$ commute), but not with the dynamics of \citet{Hid06} that follow a different, unrelated quantization paradigm. 
We provide the relevant calculations in \cref{app:dynamics}.
\endenv
\end{example}

In the sections that follow, we will examine in detail how the classical and quantum components of \eqref{eq:FTQL} interface with each other to determine the player's long-run behavior.

\section{Regret minimization}
\label{sec:regret}

We begin our analysis of \eqref{eq:FTQL} with a result concerning the dynamics' regret minimization properties.
To provide the necessary context, it is known that the \ac{FTRL} dynamics incur at most constant regret in classical, \emph{finite} games \cite{KM17,MPP18}.
As we illustrate below, despite taking place over a \emph{continuum} of pure states, the matrix-valued dynamics \eqref{eq:FTQL} enjoy the same regret minimization guarantees in quantum games.
Formally, we have the following result.

\begin{restatable}{proposition}{regret}
\label{prop:regret}
Let $\stateof{\time} = \mirror(\dstateof{\time})$, $\time\geq\tstart$, be a trajectory of play induced by \eqref{eq:FTQL}.
Then, for all $\horizon\geq\tstart$, we have
\begin{equation}
\label{eq:reg-FTQL}
\reg_{\play}(\horizon)
	\leq \abs{\vdim_{\play} \cdot \hker_{\play}(1/\vdim_{\play}) - \hker_{\play}(1)}.
\end{equation}
\end{restatable}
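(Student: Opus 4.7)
The plan is to adapt the classical Fenchel--Young telescoping argument for \ac{FTRL} regret minimization to the matrix-valued setting, and then reduce the resulting bound to a one-dimensional calculation on the spectrum. First, using the linearity identity \eqref{eq:pay-lin}, I would rewrite
\[
\reg_{\play}(\horizon)
    = \max_{\denmatalt_{\play}\in\denmats_{\play}} \int_{0}^{\horizon} \trof{(\denmatalt_{\play} - \stateof[\play]{\time}) \, \payfield_{\play}(\stateof{\time})} \dd \time
\]
and observe that the primal law $\dot\dstate_{\play}(\time) = \payfield_{\play}(\stateof{\time})$, together with the conventional initial condition $\dstateof[\play]{0}=0$ inherited from \eqref{eq:FTQL-int}, turns the integrand into $\trof{(\denmatalt_{\play} - \stateof[\play]{\time})\,\dot\dstate_{\play}(\time)}$. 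The term $\int_{0}^{\horizon} \trof{\denmatalt_{\play}\dot\dstate_{\play}(\time)}\dd \time$ collapses to $\trof{\denmatalt_{\play}\,\dstateof[\play]{\horizon}}$ by the fundamental theorem of calculus, so the regret becomes
\[
\reg_{\play}(\horizon)
    = \max_{\denmatalt_{\play}\in\denmats_{\play}} \trof{\denmatalt_{\play}\,\dstateof[\play]{\horizon}}
    - \int_{0}^{\horizon} \trof{\stateof[\play]{\time}\,\dot\dstate_{\play}(\time)} \dd \time.
\]

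Next, I would introduce the Fenchel conjugate $\hconj_{\play}(\dmat_{\play}) = \max_{\denmat_{\play}\in\denmats_{\play}} \{\trof{\dmat_{\play}\denmat_{\play}} - \hreg_{\play}(\denmat_{\play})\}$, which is differentiable on $\dmats_{\play}$ thanks to the strict convexity of $\hreg_{\play}$ (i.e., $\inf \hker_{\play}''>0$), with matrix gradient $\nabla \hconj_{\play}(\dmat_{\play}) = \mirror_{\play}(\dmat_{\play})$ by Danskin's theorem. The chain rule along the trajectory $\dstateof[\play]{\time}$ then yields
\[
\int_{0}^{\horizon} \trof{\stateof[\play]{\time}\,\dot\dstate_{\play}(\time)} \dd \time
    = \hconj_{\play}(\dstateof[\play]{\horizon}) - \hconj_{\play}(0),
\]
while the Fenchel--Young inequality $\trof{\denmatalt_{\play}\dstateof[\play]{\horizon}} \leq \hreg_{\play}(\denmatalt_{\play}) + \hconj_{\play}(\dstateof[\play]{\horizon})$ lets the $\hconj_{\play}(\dstateof[\play]{\horizon})$ terms cancel. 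Since $\hconj_{\play}(0) = -\min_{\denmat_{\play}\in\denmats_{\play}} \hreg_{\play}(\denmat_{\play})$, this leaves
\[
\reg_{\play}(\horizon)
    \leq \max_{\denmatalt_{\play}\in\denmats_{\play}} \hreg_{\play}(\denmatalt_{\play})
    - \min_{\denmat_{\play}\in\denmats_{\play}} \hreg_{\play}(\denmat_{\play}),
\]
a bound that is independent of $\horizon$.

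The final step is to evaluate the range of $\hreg_{\play}$ on the spectraplex. Because $\hreg_{\play}(\denmat_{\play}) = \trof{\hker_{\play}(\denmat_{\play})} = \sum_{\pure=1}^{\vdim_{\play}} \hker_{\play}(\denval_{\pure})$ depends only on the spectrum of $\denmat_{\play}$, and $\hker_{\play}$ is strictly convex on $[0,1]$ with $\hker_{\play}(0)=0$, Jensen's inequality applied to the probability vector $(\denval_{1},\dotsc,\denval_{\vdim_{\play}})$ gives the minimum $\vdim_{\play}\,\hker_{\play}(1/\vdim_{\play})$ at the maximally mixed state, while the maximum of a convex symmetric sum over the simplex is attained at a vertex and equals $\hker_{\play}(1)$. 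Jensen also shows $\vdim_{\play}\hker_{\play}(1/\vdim_{\play}) \leq \hker_{\play}(1)$, so the difference equals $\abs{\vdim_{\play}\hker_{\play}(1/\vdim_{\play}) - \hker_{\play}(1)}$, completing the bound. The only step requiring any real care is the chain-rule identification of $\nabla \hconj_{\play}$ with $\mirror_{\play}$ in the Hermitian matrix setting; everything else is a direct translation of the standard scalar \ac{FTRL} analysis, with the spectraplex $\denmats_{\play}$ playing the role of the simplex.
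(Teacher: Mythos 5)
Your proposal is correct and follows essentially the same route as the paper: the paper packages your Fenchel--Young telescoping argument into the Fenchel coupling $\fench_{\play}(\bench_{\play},\dstateof[\play]{\time})$ and invokes \cref{lem:Fench}, but the computation is identical, down to the final spectral evaluation of $\max\hreg_{\play}-\min\hreg_{\play}$ at a pure state versus the barycenter. The only (harmless) difference is that you make the initialization $\dstateof[\play]{0}=0$ explicit, which the paper leaves implicit when it identifies $\hconj_{\play}(\dstateof[\play]{0})-\trof{\bench_{\play}\dstateof[\play]{0}}$ with $-\hreg_{\play}(\mirror_{\play}(\dstateof[\play]{0}))$.
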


The proof of \cref{prop:regret} builds on the general theory of \cite{KM17} and is presented in detail in \cref{app:regret}.
Instead, we only note here that the dependence of the bound \eqref{eq:reg-FTQL} on the dimensionality of the game depends crucially on the choice of regularizer:
\begin{enumerate*}
[(\itshape i\hspace*{1pt}\upshape)]
\item
the Euclidean regularizer $\hker(x) = x^{2}/2$ gives an $\bigoh(1)$ bound;
\item
the von Neumann entropy leads to an $\bigoh(\log\vdim)$ dependence;
and, finally,
\item
the bound for the Tsallis regularizer $\hker(x) \propto x - x^{\qexp}$ is $\bigoh(\vdim^{1-\qexp})$ for all $\qexp \in (0,1)$.
\end{enumerate*}
This should be contrasted to \emph{discrete-time} models of online learning, where quadratic regularization leads to suboptimal results relative to \emph{both} the entropic and Tsallis variants \cite{Sli19,LS20,ZS21}.
The reason for this discrete-to-continuous gap has to do with the fact that \eqref{eq:FTQL} admits an \emph{exact} energy function, the so-called \emph{Fenchel coupling}
\begin{align}
\label{eq:Fench}
\fench_{\play}(\denmat_{\play},\dmat_{\play})
	&= \hreg_{\play}(\denmat_{\play})
		+ \hconj_{\play}(\dmat_{\play})
		- \trof{\denmat_{\play}\dmat_{\play}}
\intertext{where $\denmat_{\play}\in\denmats_{\play}$, $\dmat_{\play}\in\dmats_{\play}$, and}
\label{eq:conj}
\hconj_{\play}(\dmat_{\play})
	&= \max\nolimits_{\denmat_{\play}\in\denmats_{\play}}\setof{\trof{\denmat_{\play}\dmat_{\play}} - \hreg_{\play}(\denmat_{\play})}
\end{align}
denotes the convex conjugate of $\hreg_{\play}$.
A version of this primal-dual coupling was first introduced by \cite{MS16} in the setting of finite games,
and it has the following fundamental property:

\begin{restatable}{lemma}{Fenchel}
\label{lem:Fench}
Let $\stateof{\time} = \mirror(\dstateof{\time})$, $\time\geq\tstart$, be a trajectory of play induced by \eqref{eq:FTQL}.
Then, for all $\basemat\in\denmats$, we have
\begin{equation}
\label{eq:dFench}
\ddt\fench_{\play}(\basemat_{\play},\dstateof[\play]{\time})
	= \trof{\payfield_{\play}(\stateof{\time}) \, (\stateof[\play]{\time} - \basemat_{\play})}.
\end{equation}
\end{restatable}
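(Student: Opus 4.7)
The argument is a short chain-rule computation once the correct envelope-theorem identity for $\hconj_{\play}$ is in place. Since $\basemat_{\play}$ is a fixed state (independent of $\time$), the term $\hreg_{\play}(\basemat_{\play})$ in \eqref{eq:Fench} drops out upon differentiation, leaving
\[
\ddt\fench_{\play}(\basemat_{\play},\dstateof[\play]{\time})
	= \ddt\hconj_{\play}(\dstateof[\play]{\time})
		- \trof{\basemat_{\play}\dot\dstate_{\play}(\time)}.
\]
By \eqref{eq:FTQL}, the second term immediately equals $-\trof{\basemat_{\play}\payfield_{\play}(\stateof{\time})}$, so the entire burden of the proof is to show that
\[
\ddt\hconj_{\play}(\dstateof[\play]{\time})
	= \trof{\stateof[\play]{\time}\payfield_{\play}(\stateof{\time})},
\]
at which point adding the two contributions yields \eqref{eq:dFench}.

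\textbf{The envelope step (the crux).} The hypothesis $\inf_{x\in(0,1]}\hker_{\play}''(x) > 0$ makes $\hreg_{\play}$ \emph{strictly} convex on $\denmats_{\play}$, so the maximization problem \eqref{eq:conj} defining $\hconj_{\play}(\dmat_{\play})$ admits a unique optimizer \textendash\ which, by inspection of \eqref{eq:mirror}, is precisely $\mirror_{\play}(\dmat_{\play})$. The standard Danskin/envelope theorem for parametric maxima with a unique argmax then gives the gradient identity
\[
\nabla\hconj_{\play}(\dmat_{\play}) = \mirror_{\play}(\dmat_{\play})
\]
for every $\dmat_{\play}\in\dmats_{\play}$. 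Applying the chain rule along the trajectory and using $\stateof[\play]{\time} = \mirror_{\play}(\dstateof[\play]{\time})$ together with $\dot\dstate_{\play}(\time) = \payfield_{\play}(\stateof{\time})$ produces the missing derivative and completes the calculation.

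\textbf{Expected obstacle.} The only non-routine point is justifying differentiability of $\hconj_{\play}$ \emph{everywhere} on $\dmats_{\play}$, rather than just uniqueness of the maximizer. When $\hker_{\play}$ is steep the argmax lies in $\relint\denmats_{\play}$ and differentiability is immediate from strict convexity plus finite-dimensional convex duality. In the non-steep case (\eg the Euclidean regularizer of \cref{ex:Eucl}), the argmax can sit on the boundary of $\denmats_{\play}$; strict convexity still gives uniqueness and continuity of $\mirror_{\play}$, which upgrades the subdifferential inclusion $\mirror_{\play}(\dmat_{\play}) \in \partial\hconj_{\play}(\dmat_{\play})$ (Moreau\textendash Rockafellar duality) to a full gradient identity. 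Modulo this convex-analytic check, the lemma follows from the one-line computation above.
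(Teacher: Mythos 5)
Your proposal is correct and follows essentially the same route as the paper: differentiate the three terms of the Fenchel coupling, use $\dot\dstate_{\play} = \payfield_{\play}$ for the linear term, and invoke the Danskin/envelope identity $\nabla\hconj_{\play} = \mirror_{\play}$ (which the paper packages as \cref{lem:mirror} and \cref{prop:Fenchel}) for the conjugate term. Your ``expected obstacle'' about differentiability of $\hconj_{\play}$ is exactly the point the paper disposes of via Danskin's theorem and the strong convexity of $\hreg_{\play}$, so there is no gap.
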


The importance of this lemma (which is proved in \cref{app:Fenchel}) lies in that the \acs{RHS} of \eqref{eq:dFench} is precisely the integrand of the regret, so \cref{prop:regret} is obtained by applying \cref{lem:Fench} to the state that witnesses the maximum in the definition \eqref{eq:regret} of the regret of player $\play$.
We defer the relevant calculations to \cref{app:regret}.

\section{Convergence, stability, and the folk theorem}
\label{sec:folk}

In view of the strong regret minimization guarantees of \cref{prop:regret}, the matrix-valued dynamics \eqref{eq:FTQL} emerge as a very compelling choice from a learning standpoint.
At the same time, even in the case of classical finite games, it is known that regret minimization \emph{does not} suffice to exclude non-rationalizable outcomes:
for example, as was shown by \citet{VZ13}, the players' empirical frequency of play under a no-regret policy may still end up assigning positive selection probability to (strictly) dominated strategies, and \emph{only} dominated ones.
On that account, our aim in the rest of this section will be to take a closer look at the convergence and stability properties of \eqref{eq:FTQL} relative to the game's \aclp{NE}.

\para{Notions of stability and convergence}

Our analysis will require some basic concepts from the theory of dynamical systems, which we quickly discuss below.
To begin with, recall that a \emph{flow} on an abstract metric space $\points$ is a continuous map $\flow \from \R\times\points \to \points$ such that
\begin{enumerate*}
[\upshape(\itshape a\upshape)]
\item
$\flowof{\tstart}{\point} = \point$;
and
\item
$\flowof{\time+\timealt}{\point} = \flowof{\time}{\flowof{\timealt}{\point}}$
\end{enumerate*}
for all $\time,\timealt\in\R$ and all $\point\in\points$.
Informally, a flow is usually generated by the \emph{solution orbits} of a system of well-posed \acp{ODE}, such as \eqref{eq:FTQL}:
in this interpretation, $\flowof{\time}{\point}$ simply denotes the position at time $\time$ of the \ac{ODE} solution that starts at $\point$ at time $\time=\tstart$.

With this in mind, the following notions of invariance and stability will play a key role in our analysis.
Given a point $\base\in\points$, we will say that:
\begin{enumerate}
\item
$\base$ is \emph{stationary} if $\flowof{\time}{\base} = \base$ for all $\time\in\R$.
\item
$\base$ is \emph{Lyapunov stable} \textendash\ or just \emph{stable} \textendash\ if, for every neighborhood $\nhd$ of $\base$ in $\points$, there exists some (smaller) neighborhood $\nhdalt$ of $\base$ in $\points$ such that $\flowof{\time}{\nhdalt} \subseteq \nhd$ for all $\time\in\R$.
In other words, $\base$ is stable if any orbit that starts close enough to $\base$ remains close enough.
\item
$\base$ is \emph{attracting} if it admits a neighborhood $\nhd$ such that $\lim_{\time\to\infty} \flowof{\time}{\point} = \base$ for all $\point\in\nhd$.
In other words, $\base$ is attracting if all nearby orbits converge to $\base$.
\item
$\base$ is \emph{asymptotically stable} if it is stable and attracting.
\end{enumerate}
In what follows, we will seek to characterize precisely the stable and/or attracting states of \eqref{eq:FTQL}.

\para{The classical regime}

To set the stage for the analysis to come, it will be useful to revisit the classical regime of learning in classical \emph{finite} games.
Focusing for concreteness on the standard case of the replicator dynamics (\cf \cref{ex:logit,ex:QRD} above), the stability and convergence landscape for general finite games can be encoded in the so-called ``folk theorem'' of evolutionary game theory, which states the following \cite{Cre03,HS03}:
\begin{enumerate}
\item
\aclp{NE} are stationary.
\item
Limits of interior orbits are \aclp{NE}.
\item
Lyapunov stable states are \aclp{NE}.
\item
A state is asymptotically stable if and only if it is a strict \acl{NE} (that is, every player has a unique best response at equilibrium).
\end{enumerate}
Modulo some technicalities, these properties extend to the entire class of \ac{FTRL} dynamics for learning in classical \emph{finite} games, \cf \cite{MS16,FVGL+20}, and references therein.
However, the nonlinear geometry of the players' state space in quantum games places severe structural limitations on which of these properties transfer over to the non-commutative, matrix-valued setting of \eqref{eq:FTQL}.
We explore this issue below.

\para{Regularized learning in the spectraplex}

A quick look at the \acl{QRD} \eqref{eq:QRD} reveals the following structural property:
an eigenvalue of $\denmat$ that is initially zero in \eqref{eq:QRD} will always remain zero;
likewise, an eigenvalue that is initially positive, will always remain positive.
Formally, this means that the kernel $\ker(\denmat)$ of $\denmat$ remains invariant under \eqref{eq:QRD};
hence, given that the linear span of a Hermitian matrix is the orthocomplement of its kernel, the same holds for $\im(\denmat)$ as well.

The fact that the kernel \textendash\ or, equivalently, the image \textendash\ of a density matrix remains invariant under \eqref{eq:QRD} is the quantum analogue of the fact that the support of a mixed strategy profile remains invariant under the standard replicator dynamics.
In the context of finite games, an immediate consequence of this invariance is that \emph{all} pure strategy profiles are stationary (as zero-dimensional faces of the simplex).
This property extends to \eqref{eq:QRD} and, in fact, to the entire class of mixed-state dynamics under study:
formally,
under \eqref{eq:QD},
\emph{all pure quantum states are stationary.}

That being said, the major qualitative difference between the quantum and classical regimes is that, in quantum games, there is a \emph{continuum} of pure states, namely the entire manifold $\quants$ of rank $1$ density matrices (a product of spheres).
By contrast, in finite games, the pure states are the corners of the simplex $\simplex$ spanned by the player's pure strategies, so they are finite in number and \emph{isolated}.
As a result, in classical finite games, a pure strategy profile \emph{can} be asymptotically stable;
in quantum games, since every pure state is surrounded by other invariant states, \emph{it cannot}.

A second major difference is that, in finite games, strict \aclp{NE} are \emph{robust:}
a small perturbation of the payoffs of the game does not change the game's strict equilibria.
In quantum games, this robustness disappears:
indeed, the variational characterization \eqref{eq:VI} of \aclp{NE} means that $\payfield(\eq)$ must be an element of the normal cone to $\denmats$ at $\eq$;
however, the normal cone to the spectraplex at a matrix of rank $1$ has empty topological interior, so the required membership property cannot be robust (for a graphical illustration, see \cref{fig:pure}).
In particular, any generic perturbation to the payoffs of a quantum game, no matter how small, may lead to a small displacement of the equilibrium in question on the manifold of pure states $\quants$.


\begin{figure}[t]
\centering
\includegraphics[scale=.75]{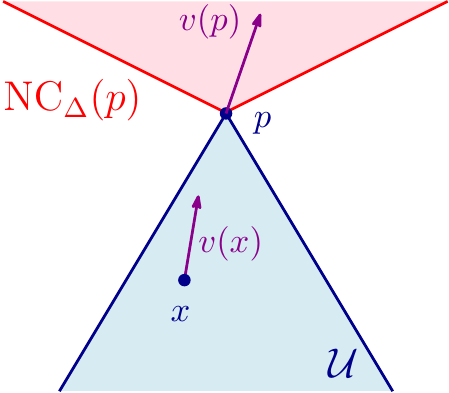}
\hspace{5em}
\includegraphics[scale=.75]{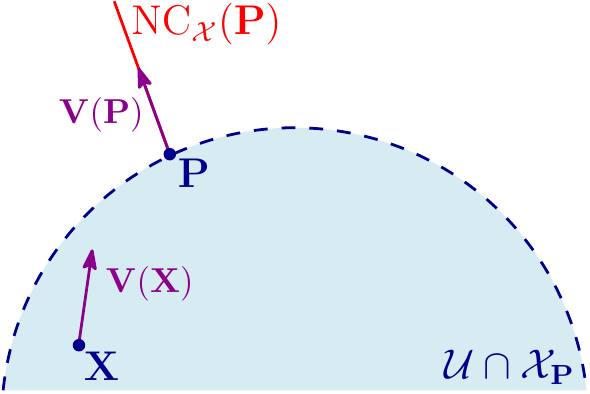}
\caption{The fundamental geometric discrepancy between the classical and quantum regimes (left and right respectively).
In classical finite games, the normal cone $\ncone_{\simplex}(\base)$ to the simplex at a pure strategy $\base$ has nonempty topological interior, so pure \aclp{NE} are generically \emph{robust}:
if $\payv(\base)$ is normal to $\simplex$ at $\base$, it will remain normal to $\simplex$ after a small perturbation.
On the other hand, in quantum games, the normal cone $\ncone_{\denmats}(\basemat)$ to the spectraplex at a pure state $\basemat$ is a ray, so pure \aclp{NE} cannot be robust in this sense.
We also note the different geometry of pure states:
in the simplex, pure strategy profiles are isolated extreme points;
in the spectraplex, pure states form a continuous manifold.}
\label{fig:pure}
\end{figure}


\para{Consistency and variational stability}

In view of the above, we can already draw two major conclusions for the quantum setting:
\begin{enumerate}
\item
Any concept of asymptotic stability must also include a notion of \emph{consistency:}
a state cannot be accessed if it is absent from the linear span of the dynamics' initial state.
\item
Any concept of robustness must likewise incorporate a notion of \emph{variational stability:}
small perturbations to an equilibrium state must tend to reinstate it.
\end{enumerate}
We formalize these two ideas as follows:
\smallskip

\begin{definition}
\label{def:consistent}
Fix a state $\basemat\in\denmats$ and let
\begin{equation}
\label{eq:support}
\denmats_{\basemat}
	\defeq \setdef{\denmat\in\denmats}{\ker(\denmat) \leq \ker(\basemat)}
\end{equation}
denote the \emph{domain of consistency} of $\basemat$ in $\denmats$, \ie the set of mixed states whose linear span contains that of $\basemat$.
Then, given a flow $\pflow \from \R\times\denmats \to \denmats$, we will say that:
\begin{enumerate}
\item
$\basemat$ is \emph{consistently attracting} if it admits a neighborhood $\pnhd$ such that $\lim_{\time\to\infty} \pflowof{\time}{\denmat} = \basemat$ for all $\denmat\in\pnhd\cap\denmats_{\basemat}$.
In other words, $\basemat$ is consistently attracting if it attracts all nearby consistent initializations.
\item
$\basemat$ is \emph{consistently asymptotically stable} if it is Lyapunov stable and consistently attracting.
\end{enumerate}
\end{definition}

\begin{definition}
\label{def:VS}
We say that $\eq\in\denmats$ is \emph{\textpar{locally} variationally stable} if there exists a neighborhood $\pnhd$ of $\eq$ in $\denmats$ such that
\begin{equation}
\label{eq:VS}
\tag{VS}
\trof{\payfield(\denmat) (\denmat - \eq)}
	< 0
	\quad
	\text{for all $\denmat\in\pnhd\exclude{\eq}$}.
\end{equation}
\end{definition}

Intuitively, \cref{def:consistent} captures precisely the accessibility condition that we discussed above:
$\denmats_{\basemat}$ is a dense convex set consisting of the relative interior of all faces of $\denmats$ that contain $\basemat$ (including $\denmats$ itself).
As for \cref{def:VS}, variational stability should be seen as an equilibrium refinement in the spirit of the seminal concept of \emph{evolutionary stability} \cite{MP73,May82} that underlies the ``folk theorem'' for finite games;
for a detailed discussion of \eqref{eq:VS} in the context of continuous games, see \cite{MZ19}.

\begin{remark*}
In \cref{def:consistent}, the notion of Lyapunov stability does not have a ``consistency'' caveat tacked on.
As we discuss in \cref{app:folk}, the reason for this is that, in the case of Lyapunov stability, the two notions end up coinciding, so it is not necessary to make this distinction.
\endenv
\end{remark*}

\para{A quantum ``folk theorem''}

We are now in a position to state the main result of this section.
To simplify the presentation, we assume below that \eqref{eq:FTQL} is run with a steep regularizer and, as per \cref{thm:dynamics}, the quantum state dynamics \eqref{eq:QD} refer to the flow induced by \eqref{eq:FTQL} on $\denmats$.

\begin{restatable}{theorem}{folk}
\label{thm:folk}
Let $\qgame \equiv \qgamefull$ be a quantum game, fix some state $\eq\in\denmats$, and let $\stateof{\time} = \mirror(\dstateof{\time})$ be a trajectory of play induced by \eqref{eq:FTQL}
with steep regularizers.
Then:
\begin{enumerate}
\item
If $\eq$ is a \acl{NE}, it is a rest point of \eqref{eq:QD}.
\item
If $\stateof{\time} \to \eq$ as $\time\to\infty$, $\eq$ is a \acl{NE} of $\qgame$.
\item
If $\eq$ is stable under \eqref{eq:QD}, it is a \acl{NE} of $\qgame$.
\item
If $\eq$ is consistently asymptotically stable, then it is pure.
\item
If $\eq$ satisfies \eqref{eq:VS}, it is consistently asymptotically stable.
\end{enumerate}
\end{restatable}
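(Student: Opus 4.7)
My plan is to establish the five claims in order, using three main ingredients: the KKT characterization of \acp{NE} via \eqref{eq:VI}, the Fenchel coupling identity of \cref{lem:Fench}, and the constant regret bound of \cref{prop:regret}. Part (4) will be the real difficulty, as it is where the non-commutative geometry of the spectraplex obstructs the classical folk theorem.

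For Part (1), the KKT conditions for \eqref{eq:VI} give, for a \acl{NE} $\eq$, scalars $\mu_\play$ and positive semidefinite $S_\play$ with $S_\play \eq_\play = 0$ such that $\payfield_\play(\eq) = \mu_\play \eye - S_\play$. Hence $\payfield_\play(\eq)$ commutes with $\eq_\play$, and in a common eigenbasis $\payent_{\pure\purealt}$ is diagonal with $\payent_{\pure\pure} = \mu_\play$ on the support of $\eq_\play$. Substituting into \eqref{eq:QD} and invoking the steepness convention $1/\hker''(0) = 0$, one verifies directly that all entries of $\dot\denmat$ vanish at $\eq$. For Parts (2) and (3), I argue by contradiction: if $\eq$ is not a \acl{NE}, some player $\play$ has a deviation $\denmatalt_\play$ with $\pay_\play(\denmatalt_\play; \eq_{-\play}) - \pay_\play(\eq) \geq \delta > 0$, and by continuity the same $\delta/2$ gap holds on a neighborhood $\pnhd$ of $\eq$. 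In Part (2), the hypothesis $\stateof{\time} \to \eq$ puts the orbit in $\pnhd$ for all large $\time$; in Part (3), Lyapunov stability provides initial data whose orbit never leaves $\pnhd$. In either case the integrand of $\reg_\play(\horizon)$ exceeds $\delta/2$ for all large $\time$, giving $\reg_\play(\horizon) = \Omega(\horizon)$ and contradicting \cref{prop:regret}.

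For Part (5), I use the aggregate Fenchel coupling $\fench(\eq, \dstate) = \sum_\play \fench_\play(\eq_\play, \dstate_\play)$ as a strict Lyapunov function. By \cref{lem:Fench}, $\ddt \fench(\eq, \dstate(\time)) = \trof{\payfield(\stateof{\time})(\stateof{\time} - \eq)}$, which by \eqref{eq:VS} is strictly negative on $\pnhd \setminus \{\eq\}$. Standard properties of the Bregman-type coupling under a strongly convex, steep $\hreg$ give $\fench \geq 0$ with equality iff $\mirror(\dstate) = \eq$, and a local quadratic bound $\fench \geq (K/2)\frobnorm{\mirror(\dstate) - \eq}^2$ translating small $\fench$ into small distance. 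Lyapunov stability is then immediate; for consistent attraction, an orbit starting in $\pnhd \cap \denmats_\eq$ has $\fench$ monotone decreasing and bounded below, and a LaSalle-type argument on the $\omega$-limit set together with \eqref{eq:VS} forces $\stateof{\time} \to \eq$.

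Part (4) is the main obstacle. Arguing the contrapositive, suppose $\eq$ has rank $r \geq 2$. The KKT decomposition $\payfield(\eq) = \mu \eye - S$ from Part (1) shows that $\payfield(\eq)$ acts as a scalar on $\im(\eq)$; the only restoring force of the dual dynamics $\dot\dstate = \payfield(\stateof{\time})$ is the component $-S$ supported on $\ker(\eq)$, which purifies residual kernel eigenvalues but is inert on the image. I would construct a perturbation by choosing a Hermitian matrix $H$ supported on $\im(\eq)$, setting $\dstate(\tstart) = \dstate_0 + \delta H$ with $\mirror(\dstate_0) = \eq$; steepness ensures $\mirror(\dstate(\tstart)) \in \denmats_\eq$ close to $\eq$. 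A Gronwall-type comparison with the idealized trajectory $\dstate_0 + \delta H + \time(\mu \eye - S)$ should show that as $\time \to \infty$, $\stateof{\time}$ converges to a limit $\eq_\delta \neq \eq$ (which is itself a \acl{NE} by Part (2)). Since the space of such perturbations has dimension $\geq r^2 - 1 \geq 3$ and lies in $\denmats_\eq$, it intersects every neighborhood of $\eq$, contradicting consistent attraction. The main technical difficulty is sustaining the Gronwall comparison uniformly over the \emph{unbounded} time horizon on which the purification term $-\time S$ acts, and identifying the limit point explicitly with the mirror image of the image-of-$\eq$ part of $\dstate_0 + \delta H$.
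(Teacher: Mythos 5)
Parts (1)--(3) of your outline are sound: your direct verification of stationarity by plugging the KKT decomposition $\payfield(\eq)=\mu\eye-S$ into \eqref{eq:QD} is a valid alternative to the paper's argument (which instead exhibits the dual ray $\dmat+\time\payfield(\eq)$ and invokes Picard--Lindel\"of), and your regret-based contradiction for (2)--(3) is the same computation as the paper's trapping lemma, since the regret integrand is exactly $\ddt\fench(\basemat,\dstateof{\time})$. In Part (5), however, your LaSalle step hides a real issue: the Lyapunov function $\fench(\eq,\dstateof{\time})$ lives on the \emph{dual} space, whose orbits are unbounded, so knowing that $\stateof{\time_{\run}}\to\eq$ along a subsequence does not by itself give $\fench(\eq,\dstateof{\time_{\run}})\to 0$. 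The paper needs a separate ``reciprocity'' result (\cref{prop:reciprocity}, which in turn requires the growth estimate $z\hker'(z)\to 0$ of \cref{lem:growth}) to pass from primal convergence to vanishing of the coupling before monotonicity can finish the argument; your proposal omits this ingredient.

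The genuine gap is Part (4). Your plan is to perturb the dual initial condition by $\delta H$ with $H$ supported on $\im(\eq)$ and to run a Gronwall comparison against the idealized trajectory $\dstate_{0}+\delta H+\time(\mu\eye-S)$ over an \emph{infinite} horizon, concluding that the perturbed orbit converges to some $\eq_{\delta}\neq\eq$. This does not close: the comparison presumes $\payfield(\stateof{\time})\approx\payfield(\eq)$ for all time, which is circular (it assumes the very confinement you are trying to refute), a Gronwall bound with no contraction degrades over an unbounded horizon, and even granting confinement you have shown neither that the perturbed orbit converges nor that its limit differs from $\eq$ --- the drift $\time\,\payfield(\stateof{\time})$ eventually dominates the fixed offset $\delta H$, and $\mirror$ could well collapse the whole perturbed family back to $\eq$. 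The paper's proof avoids all of this with a measure-theoretic argument: it quotients $\dmats$ by the $\coef\eye$ redundancy, shows via Liouville's formula that the quotiented dual flow is incompressible (\cref{prop:volume}/\cref{prop:volume-0}), pushes the Lebesgue measure forward through $\mirror$ to obtain a flow-invariant measure on $\relint\denmats$, and then derives a contradiction from the uniform convergence of a compact neighborhood to the single point $\eq$ (\cref{lem:uniform}), which would force the invariant measure of that neighborhood to vanish; lower-rank non-pure states are handled by restricting to the face $\setdef{\denmat\in\denmats}{\ker(\denmat)\geq\ker(\eq)}$ on which $\eq$ is full-rank. That volume-preservation idea --- an open set of consistent initializations cannot collapse to a point --- is precisely what your pointwise perturbation argument is missing, and without it I do not see how your sketch can be completed.
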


Before discussing the proof of \cref{thm:folk}, some remarks are in order.
Perhaps the most important one concerns the asymptotic stability part of the theorem (which is arguably the most salient point of the classical folk theorem as well).
Here, even though the standard notion of asymptotic stability is ruled out by the geometry of the game's state space, \eqref{eq:FTQL} achieves the next best thing:
by definition, states that are consistently asymptotically stable attract all but a measure zero of nearby initial conditions, and \cref{thm:folk} shows that \emph{only} pure states can have this property.
This selection result has important implications for quantum games because it shows that regularized learning essentially ``collapses'' an initial mixed state to a \emph{specific} pure state \textendash\ and this, despite the fact that any mixed state can be prepared by an infinitum of combinations of pure states.

On the flip side of all this, the implication that variationally stable states are also (consistently) asymptotically stable provides a relevant convergence criterion for \eqref{eq:FTQL} and indicates an inherent robustness to variations of player beliefs and predictions.
In particular, since \eqref{eq:VS} only involves the primitives of the underlying game, the fact that such states are attracting under \emph{all} \ac{FTQL} dynamics means that they can be seen as universal attractors \textendash\ and since only pure states can have this property, we also infer indirectly that variationally stable states are \emph{a fortiori} pure.


\begin{figure}[t]
\centering
\includegraphics[height=.25\textheight]{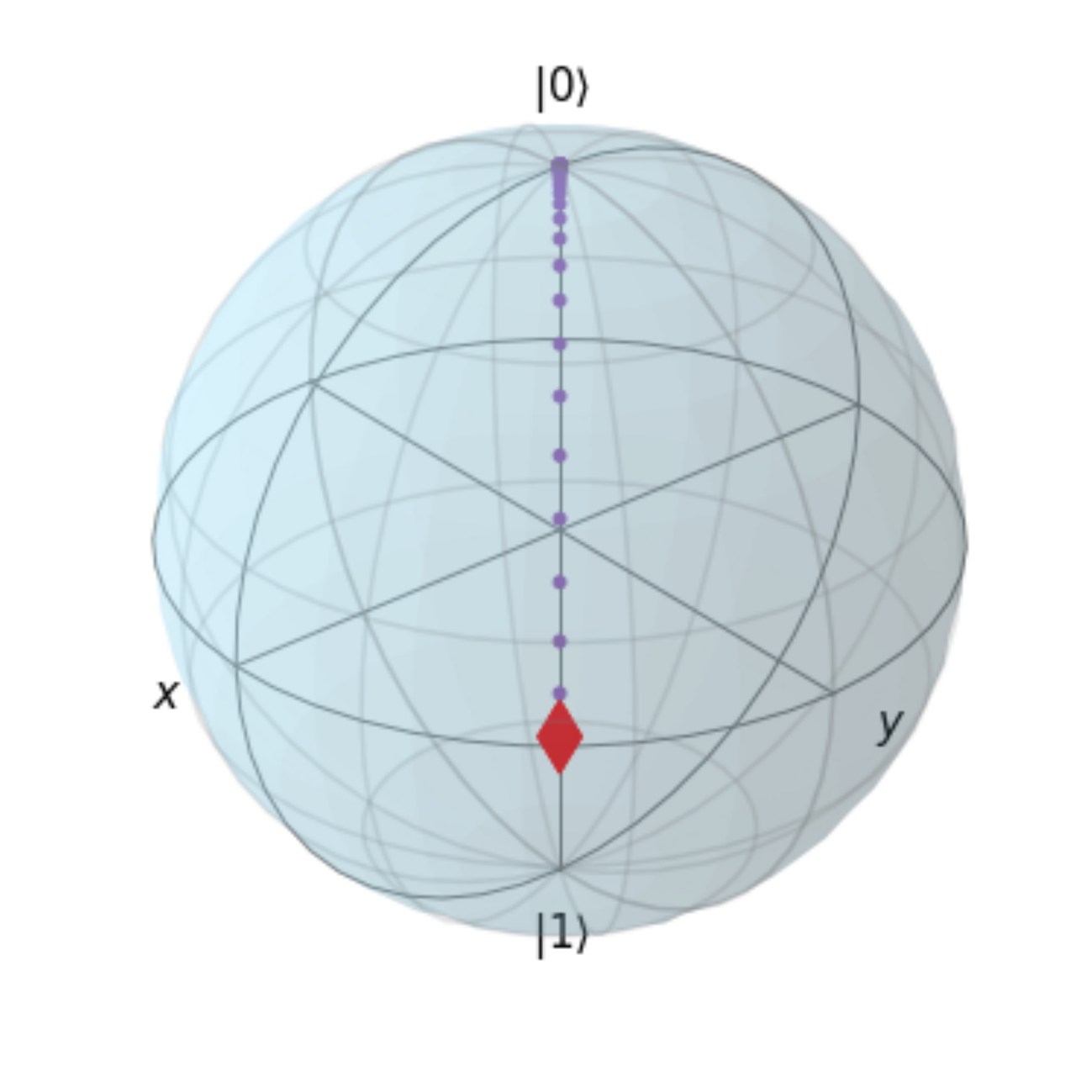}
\hspace{5em}
\includegraphics[height=.25\textheight]{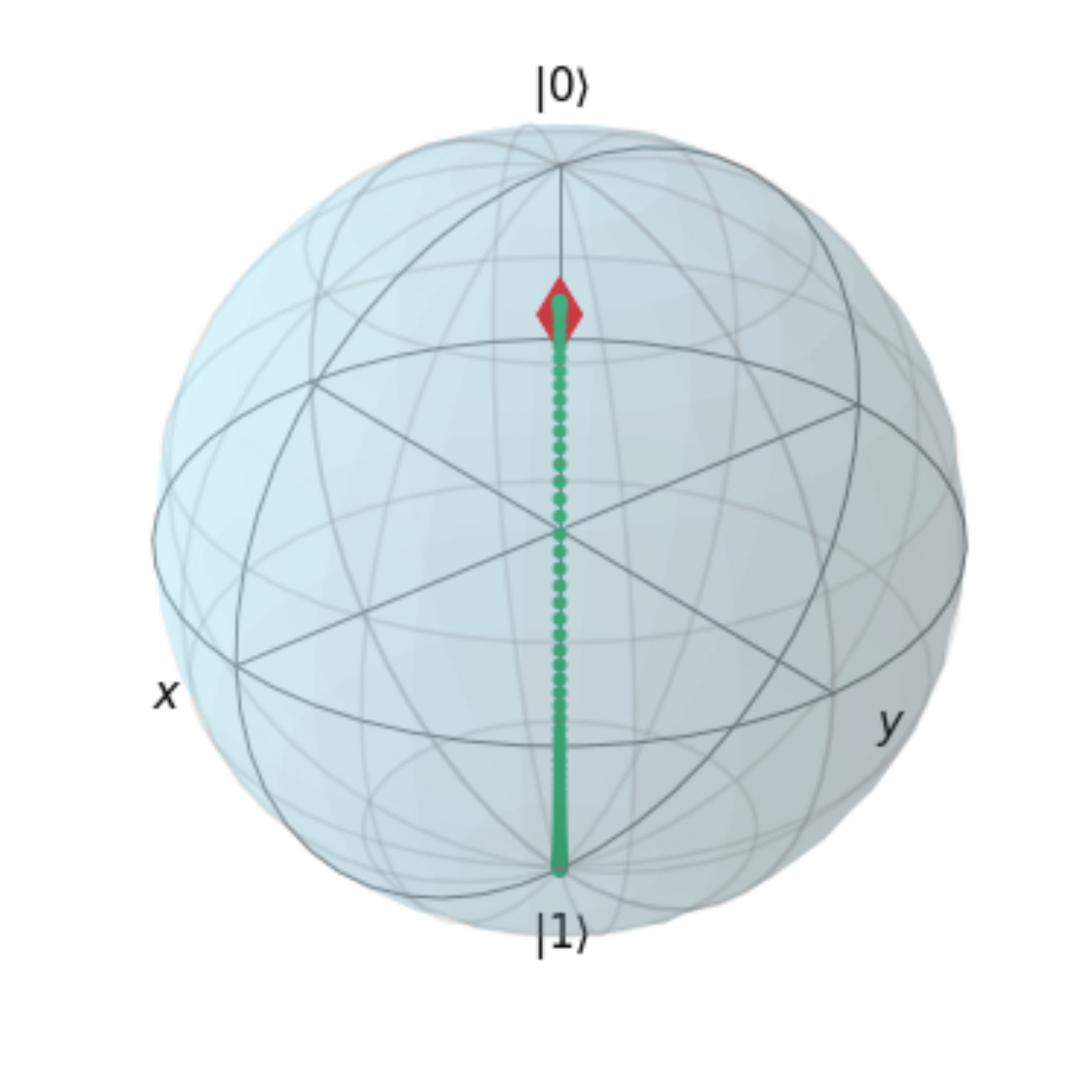}
\caption{Convergence to a variationally stable equilibrium in a quantum anti-coordination game.
Each player's trajectory is represented in their individual Bloch sphere;
see \cref{app:numerics} for the details.}
\vspace{-\baselineskip}
\label{fig:traj}
\end{figure}


\para{Proof sketch}

%

We conclude this section with a short outline of the proof of the last two parts of \cref{thm:folk} (the full proof is presented in \cref{app:folk}).
First, the implication that only pure states can be stable attractors hinges on the following volume preservation principle:

\begin{proposition}
\label{prop:volume}
Let $\dnhd \subseteq \dmats$ be an open set of initial conditions of \eqref{eq:FTQL}, and let $\dnhd_{\time} = \dflowof{\time}{\dnhd}$, $\time\geq0$, denote the evolution of $\dnhd$ under the flow $\dflow \from \R\times\dmats \to \dmats$ of \eqref{eq:FTQL}.
Then, $\vol(\dnhd_{\time}) = \vol(\dnhd)$ for all $\time\geq0$.
\end{proposition}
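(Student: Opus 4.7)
My plan is to prove volume preservation in the dual space $\dmats$ via Liouville's theorem, exploiting the block-separability structure of the payoff gradient field in quantum games.

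First I would observe that, by construction of \eqref{eq:FTQL}, the flow $\dflow$ on $\dmats$ is generated by the vector field
\begin{equation*}
\payfield \circ \mirror \from \dmats \to \dmats,
\qquad
\dmat \mapsto \bigl(\payfield_{1}(\mirror(\dmat)),\dotsc,\payfield_{\nPlayers}(\mirror(\dmat))\bigr),
\end{equation*}
so that $d\vol(\dnhd_{\time})/d\time = \int_{\dnhd_{\time}} \operatorname{div}(\payfield \circ \mirror) \, d\vol$ by the standard Liouville formula for the Jacobian determinant along a smooth flow. Thus it suffices to show that the divergence of $\payfield \circ \mirror$, computed with respect to the natural Lebesgue measure on $\dmats = \prod_{\play} \herm[\vdim_{\play}]$, vanishes identically.

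The key structural remark, which does the entire work, is the block decomposition already highlighted after \eqref{eq:pay-lin}: since each payoff $\pay_{\play}$ is \emph{linear} in its own argument $\denmat_{\play}$, the $\play$-th payoff gradient $\payfield_{\play}(\denmat)$ depends only on $\denmat_{-\play}$. Composing with $\mirror$ (which, since the regularizer $\hreg$ is separable across players, acts blockwise as $\mirror(\dmat) = (\mirror_{1}(\dmat_{1}),\dotsc,\mirror_{\nPlayers}(\dmat_{\nPlayers}))$), I obtain
\begin{equation*}
\payfield_{\play}(\mirror(\dmat))
    = \payfield_{\play}\bigl(\mirror_{-\play}(\dmat_{-\play})\bigr),
\end{equation*}
so $\payfield_{\play} \circ \mirror$ is independent of $\dmat_{\play}$. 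Consequently, the $\play$-th diagonal block of the Jacobian of $\payfield \circ \mirror$ with respect to the Hermitian coordinates of $\dmat_{\play}$ vanishes, and therefore so does the full trace
\begin{equation*}
\operatorname{div}(\payfield \circ \mirror)
    = \insum_{\play} \operatorname{tr}_{\dmat_{\play}}\!\bigl(\partial_{\dmat_{\play}} \payfield_{\play}(\mirror(\dmat))\bigr)
    = 0.
\end{equation*}

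Liouville's theorem then yields $\vol(\dnhd_{\time}) = \vol(\dnhd)$ for all $\time \geq 0$, as claimed. The only mildly delicate step is ensuring smoothness of $\mirror$ so that the flow is genuinely $C^{1}$ and Liouville applies on the domain of $\mirror$; this is taken care of by the steepness assumption on $\hker_{\play}$ (together with the strict convexity $\hker_{\play}'' > 0$ on $(0,1]$), which guarantees that each $\mirror_{\play}$ is a diffeomorphism from $\dmats_{\play}$ onto the relative interior $\intr\denmats_{\play}$, so that $\payfield \circ \mirror$ is smooth on all of $\dmats$ and the forward flow is well defined. Beyond this regularity check, the argument reduces to the observation that gradient fields of multilinear payoffs are always divergence-free in dual coordinates, which is the matrix analogue of the classical phenomenon noted in \cite{MPP18} for finite games.
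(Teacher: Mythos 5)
Your argument is correct and is essentially the paper's own proof: the paper likewise observes that $\payfield_{\play}$ does not depend on $\denmat_{\play}$, hence $\payfield_{\play}\circ\mirror$ does not depend on $\dmat_{\play}$ by the blockwise action of $\mirror$, so the generating field is divergence-free and Liouville's formula gives the claim. One minor caveat in your regularity aside: $\mirror_{\play}$ is \emph{not} a diffeomorphism onto $\relint\denmats_{\play}$, since $\mirror_{\play}(\dmat_{\play}+c\eye)=\mirror_{\play}(\dmat_{\play})$ for all $c\in\R$, but this does not affect the divergence computation or the conclusion.
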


\cref{prop:volume} is central to our work in several aspects.
First, this volume-preservation property applies to \emph{all} quantum games and \emph{all} \ac{FTQL} dynamics, generalizing in this way the very recent analysis of \citet{JPS22} for \eqref{eq:MMW}\,/\,\eqref{eq:QRD} in two-player, zero-sum games.
This degree of generality is due to the fact that each player's payoff function is individually linear in the player's own density matrix, so the individual gradient fields $\payfield_{\play}$ do not depend on $\denmat_{\play}$;
\cref{prop:volume} then follows by an application of Liouville's formula.

In this regard, \cref{prop:volume} can be seen as a quantum analogue of the universal volume-preservation property of \ac{FTRL} dynamics in \emph{finite} games \cite{CGM15,FVGL+20}.
However, as in the classical world, it is worth noting that \cref{prop:volume} concerns the flow of \eqref{eq:FTQL} on $\dmats$, not the induced flow on $\denmats$;
in particular, $\mirror(\dnhd_{\time})$ could still collapse to a boundary state of $\denmats$ if $\dnhd_{\time}$ escapes to infinity.
Because of this, although \cref{prop:volume} precludes convergence to full-rank equilibria relatively easily, excluding lower-rank equilibria requires more delicate arguments, where the notion of consistency plays a major role (and has no classical counterpart).

Finally, the last part of \cref{thm:folk} leverages an energy argument in the spirit of Lyapunov's direct method.
In tune with the primal-dual nature of \eqref{eq:FTQL}, a natural choice of energy function is the Fenchel coupling $\energy(\time) = \fench(\eq,\dstateof{\time})$ of \eqref{eq:Fench}, which is in turn linked to \eqref{eq:VS} via \cref{lem:Fench}.
Putting these two elements together readily yields that $\lim_{\time\to\infty} \energy(\time)$ exists;
then, by a trapping argument, it can be shown that
there exists a sequence of times $\time_{\run}\to\infty$ such that $\stateof{\time_{\run}} \to \eq$,
which allows us to conclude that $\energy(\time) \to 0$ and ultimately yields our claim.

\section{Poincaré recurrence in zero-sum games}
\label{sec:recurrence}

In this last section, our aim is to study the long-run behavior of \eqref{eq:FTQL} in zero-sum games, a class of games that arise frequently in applications \textendash\ and is possibly the most widely studied family of quantum games.

In classical finite games, it is well known that the replicator dynamics \textendash\ and, more generally, all \ac{FTRL} dynamics \textendash\ produce trajectories that cycle around interior equilibria, leading to a behavior known as \emph{Poincaré recurrence} \cite{PS14,MPP18}.
Formally, in our setting, the dynamics \eqref{eq:FTQL} are said to be Poincaré recurrent if, for almost every initial condition $\stateof{\tstart} = \mirror(\dstateof{\tstart}) \in \denmats$ (in the sense of Lebesgue), there exists an increasing sequence of times $\curr[\time] \to \infty$ such that $\stateof{\curr[\time]} \to \stateof{\tstart}$ as $\run\to\infty$, \ie almost all trajectories of \eqref{eq:FTQL} return infinitely close to their starting point infinitely often.

In our notation, a two-player zero-sum quantum game $\qgame \equiv \qgamefull$ corresponds to the case where $\players = \{1,2\}$ and $\pay_{1} = -\pay_{2} = -\minmax$ for some min-max merit function $\minmax \from \denmats_{1}\times\denmats_{2} \to \R$.
In this context, \citet{JPS22} recently showed that the \acl{MMW} dynamics \eqref{eq:MMW} are Poincaré recurrent in all two-player, zero-sum games that admit a full-rank equilibrium.
As we show below, this property extends to the \emph{entire} class of regularized learning dynamics under study.

\begin{restatable}{theorem}{recurrence}
\label{thm:recurrence}
Let $\qgame \equiv \qgamefull$ be a $2$-player zero-sum game admitting a full-rank equilibrium $\eq\in\relint\denmats$.
Then, almost every solution orbit $\stateof{\time} = \mirror(\dstateof{\time})$ of \eqref{eq:FTQL} is Poincaré recurrent:
there exists an increasing sequence of times $\curr[\time] \to \infty$ such that $\lim_{\run\to\infty} \stateof{\curr[\time]} = \stateof{\tstart}$. 
\end{restatable}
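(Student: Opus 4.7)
\textbf{Proof plan for \cref{thm:recurrence}.} The plan is to reduce to the classical Poincaré recurrence theorem by exhibiting (a) a conserved quantity whose level sets are compact and (b) a volume-preserving flow on those level sets. The argument mirrors the template used for \ac{FTRL} in finite games \cite{MPP18}, lifted to the non-commutative setting via \cref{lem:Fench,prop:volume}.

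First, I would establish a conservation law. Summing the identity in \cref{lem:Fench} over $\play\in\{1,2\}$ and using linearity of each payoff in the player's own density matrix gives
\begin{equation*}
\ddt\bracks[\big]{\fench_{1}(\eq_{1},\dstateof[1]{\time}) + \fench_{2}(\eq_{2},\dstateof[2]{\time})}
    = \minmax(\eq_{1},\stateof[2]{\time}) - \minmax(\stateof[1]{\time},\eq_{2}).
\end{equation*}
Since $\eq \in \relint\denmats$, the \eqref{eq:VI} characterization forces each block $\payfield_{\play}(\eq)$ into the normal cone to $\denmats_{\play}$ at $\eq_{\play}$; because $\eq$ is full-rank this normal cone is the line $\R\eye$. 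Unpacking, this means $\minmax(\denmat_{1},\eq_{2}) = \minmax(\eq_{1},\denmat_{2}) = \minmax(\eq_{1},\eq_{2})$ for all $\denmat$, so the \acs{RHS} above vanishes and $\energy(\time) \defeq \fench_{1}(\eq_{1},\dstateof[1]{\time}) + \fench_{2}(\eq_{2},\dstateof[2]{\time})$ is a first integral of \eqref{eq:FTQL}.

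Next, because each $\mirror_{\play}$ is invariant under the shift $\dmat_{\play} \mapsto \dmat_{\play} + c\eye$ (adding $c\trof{\denmat_{\play}}$ to the objective in \eqref{eq:mirror} does not change the argmax), the induced primal flow on $\denmats$ factors through the quotient $\quotientspace = \dmats_{1}/\R\eye \oplus \dmats_{2}/\R\eye$. I would then verify three things on $\quotientspace$: \textbf{(i)} the Fenchel coupling descends to a well-defined function $\tilde\energy\from\quotientspace\to\R_{\geq 0}$ with compact sublevel sets — compactness uses the strong convexity hypothesis $\inf_{x}\hker_{\play}''(x)>0$ and the full-rank property $\eq \in \relint\denmats$ to force $\tilde\energy(\dmat)\to\infty$ as $\dmat$ escapes any compact set of $\quotientspace$; \textbf{(ii)} the quotient flow is still volume-preserving, which follows from \cref{prop:volume} since the shift direction is a two-dimensional subspace along which the vector field $\payfield(\mirror(\dmat))$ is invariant (divergence drops to the quotient); and \textbf{(iii)} by the conservation law, each sublevel set $\{\tilde\energy \leq c\}$ is a compact, forward-invariant subset of $\quotientspace$ of finite Lebesgue measure.

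At this point the classical Poincaré recurrence theorem applies: a volume-preserving continuous flow on a finite-measure invariant set has the property that almost every initial condition returns arbitrarily close to itself infinitely often. Transported back via $\mirror$, which is continuous on $\quotientspace$, this yields an increasing sequence $\curr[\time] \to \infty$ with $\stateof{\curr[\time]} \to \stateof{\tstart}$ for almost every starting point, proving the theorem. The main technical obstacle I anticipate is item (i), namely verifying that the Fenchel coupling really does have compact sublevel sets on the quotient for a general steep regularizer (not just the von Neumann one used by \citet{JPS22}); this requires tracking how $\hker''$ controls the growth of $\hconj_{\play}$ at infinity modulo the trace-shift direction, and is where the full-rank hypothesis on $\eq$ enters essentially — for a rank-deficient $\eq$, the coupling is infinite on a hyperplane and level sets fail to be compact in $\quotientspace$.
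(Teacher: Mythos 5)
Your plan is correct and follows essentially the same route as the paper's proof: quotienting out the trace‑shift direction, volume preservation via Liouville, the Fenchel coupling relative to the full‑rank equilibrium as a constant of motion, coercivity/boundedness of its level sets on the quotient, and then Poincaré's theorem transported back through $\mirror$. The technical obstacle you flag in item (i) is exactly the paper's \cref{lem:bounded}, which it resolves by showing that along any sequence with constant Fenchel coupling the eigenvalue gaps of $\qmat$ cannot diverge (using \cref{lem:spectraplex} and the full‑rank assumption on $\eq$), rather than by a growth estimate on $\hconj_{\play}$.
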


The proof of \cref{thm:recurrence} comprises three main ingredients:
The first is \cref{prop:volume}, which shows that the flow induced by \eqref{eq:FTQL} on $\dmats$ is volume-preserving.
The second is that the Fenchel coupling \eqref{eq:Fench} that played the role of a local energy function in \cref{thm:folk} becomes a \emph{constant of motion} in zero-sum games (this is also where the full-rank assumption comes into play).
The third is that, modulo a certain quotient process that does not affect the dynamics' trajectories on $\denmats$, the level sets of the Fenchel coupling are bounded;
this is the most challenging part of the proof, and the point where the non-commutativity of the various processes involved complicates things considerably.
With these basic ingredients in place, our result follows by an application of Poincaré's theorem, which states that volume-preserving flows with bounded trajectories are recurrent.
We provide the relevant details in \cref{app:recurrence}.

\section{Concluding remarks}
\label{sec:discussion}

When quantum computing models are deployed in a multi-agent context \textendash\ from autonomous vehicles to quantum \acsp{GAN} \textendash\ the players' interaction landscape changes dramatically relative to classical interactions.
The study of game-theoretic learning in this quantum setting is still in its infancy, so it is not clear at this stage what can be expected by quantum players with bounded rationality.
In this regard, the study of the \ac{FTQL} dynamics provides the following important insights:
the geometric structure of quantum state space leads to an inflation of ``learning traps'' (stationary states) that have no classical counterpart;
nonetheless, the only states that can be stable and attracting under \ac{FTQL} are the game's pure quantum equilibria.
Solidifying our understanding of the limits of quantum game-theoretic learning is a particularly fruitful research direction with potentially far-reaching implications for the deployment of multi-agent quantum computing systems.




\appendix
\numberwithin{equation}{section}		
\numberwithin{lemma}{section}		
\numberwithin{proposition}{section}		
\numberwithin{theorem}{section}		
\numberwithin{corollary}{section}		

\section{Regularized best responses and the Fenchel coupling}
\label{app:Fenchel}

\renewcommand{\vecspace}{\boldsymbol{\mathcal{V}}}
\newmacro{\dval}{\dmat}
\newmacro{\dvals}{\mathbf{\Psi}}
\newmacro{\dvecs}{\mathbf{U}}

\newacro{lsc}[l.s.c.]{lower-semicontinuous}

In this appendix, we introduce the basic properties of the regularized best response map $\mirror$ and the Fenchel coupling.
To simplify notation, we will suppress player indices throughout;
formally, this means that, for example,
$\denmat$ stands for the profile $\denmat = (\denmat_{\play})_{\play\in\players}$;
the ensemble mirror map $\mirror\from\dmats\to\denmats$ denotes the product map $\mirror = \prod_{\play} \mirror_{\play}$;
the aggregate regularizer $\hreg$ on $\denmats$ represents the sum $\hreg = \sum_{\play} \hreg_{\play}$, etc.
The proper substitutions are easily inferred from the context, so there is no danger of confusion.

We will also assume that $\pspace$ is endowed with some abstract norm $\norm{\cdot}$, not necessarily induced by the Hilbert structure of the quantum state space $\hilbert$.
Also, for technical reasons, it will also be convenient to view $\hreg$ as an extended-real-valued function $\hreg\from\pspace\to\R \cup \{\infty\}$ defined over the entire space $\pspace \equiv \herm[\vdim]$ of $\vdim\times\vdim$ Hermitian matrices by assigning the value $\hreg(\denmat) = +\infty$ to all $\denmat\in\pspace\setminus\denmats$.
Following \citet{Roc70}, this allows us to define the \emph{subdifferential} of $\hreg$ at $\denmat\in\denmats$ as
\begin{equation}
\label{eq:subdiff}
\subd\hreg(\denmat)
	\defeq \setdef
		{\dmat\in\dmats}
		{\hreg(\denmatalt) \geq \hreg(\denmat) + \trof{\dmat (\denmatalt - \denmat)} \; \text{for all $\denmatalt\in\pspace$}},
\end{equation}
where $\dmats \equiv \dmats$ plays here the role of the algebraic dual of $\pspace$.
The \emph{domain of subdifferentiability} of $\hreg$ is then defined as
\begin{equation}
\label{eq:subdom}
\dom\subd\hreg
	\defeq \setdef{\denmat\in\dom\hreg}{\subd\hreg \neq \varnothing}
\end{equation}
and the \emph{convex conjugate} of $\hreg$ is given by the expression
\begin{equation}
\label{eq:hconj}
\hconj(\dmat)
	= \max_{\denmat\in\denmats} \{ \trof{\dmat \denmat} - \hreg(\denmat) \}
	\quad
	\text{for all $\dmat\in\dmats$}.
\end{equation}

Since $\hreg$ is a trace function of the form $\hreg(\denmat) = \trof{\hker(\denmat)}$ with $\hker$ strongly convex over $[0,1]$, it readily follows that $\hreg$ is also $\hstr$-strongly convex relative to $\norm{\cdot}$ for some positive constant $\hstr > 0$.
We then have the following basic results:

\begin{lemma}
\label{lem:mirror}
Let $\hreg$ be a $\hstr$-strongly convex regularizer on $\denmats$ as above, and let $\mirror\from\dmats\to\denmats$ be its induced mirror map.
Then:
\begin{enumerate}
\item
$\mirror$ is single-valued on $\dmats$:
in particular, for all $\denmat\in\denmats$, $\dmat\in\dmats$, we have $\denmat = \mirror(\dmat) \iff \dmat \in \subd\hreg(\denmat)$.
\item
The prox-domain $\proxdom \defeq \im\mirror$ of $\hreg$ satisfies $\relint\denmats \subseteq \proxdom \subseteq \denmats$.
\item
$\mirror$ is $(1/\hstr)$-Lipschitz continuous and $\mirror = \nabla\hconj$.
\end{enumerate}
\end{lemma}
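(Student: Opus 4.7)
The plan is to follow the standard recipe from convex analysis (as in \citet{Roc70}) and adapt it to the matrix-valued setting on $\denmats$, equipped with the extended-real-valued regularizer $\hreg$ that takes the value $+\infty$ outside $\denmats$.

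For part (1), I would first observe that the map $\denmat \mapsto \trof{\dmat\denmat} - \hreg(\denmat)$ is $\hstr$-strongly concave on $\denmats$ (since $\hreg$ is $\hstr$-strongly convex there), so the problem defining $\mirror(\dmat)$ admits a unique maximizer; this takes care of single-valuedness on all of $\dmats$. To obtain the subdifferential characterization $\denmat = \mirror(\dmat) \iff \dmat \in \subd\hreg(\denmat)$, I would apply Fermat's rule to the extended-real-valued convex function $\denmat \mapsto \hreg(\denmat) - \trof{\dmat\denmat}$: its global minimum on $\pspace$ is attained precisely at the points $\denmat\in\denmats$ for which $0$ belongs to the subdifferential of this function, i.e., $\dmat \in \subd\hreg(\denmat)$.

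For part (2), the inclusion $\proxdom \subseteq \denmats$ is immediate from the definition of $\mirror$. For the reverse inclusion $\relint\denmats \subseteq \proxdom$, I would invoke the standard fact that a proper convex function is subdifferentiable on the relative interior of its effective domain: for any $\denmat\in\relint\denmats$, there exists $\dmat\in\subd\hreg(\denmat)$, and by part (1) this $\denmat$ is precisely $\mirror(\dmat)\in\proxdom$.

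For part (3), the identity $\mirror = \nabla\hconj$ follows from the fact that the conjugate of a strongly convex function is Fréchet differentiable on the interior of its domain and its gradient coincides with the unique argmax in the Legendre transform (Danskin's theorem applied to \eqref{eq:hconj}, with uniqueness guaranteed by strong concavity of the objective). For the Lipschitz estimate, I would set $\denmat_{i} = \mirror(\dmat_{i})$ for $i=1,2$; by part (1), $\dmat_{i} \in \subd\hreg(\denmat_{i})$. Adding the two subdifferential inequalities around $\denmat_{1}$ and $\denmat_{2}$ and using the strong convexity of $\hreg$ yields
\begin{equation*}
\trof{(\dmat_{1} - \dmat_{2})(\denmat_{1} - \denmat_{2})}
    \;\geq\; \hstr\,\norm{\denmat_{1} - \denmat_{2}}^{2},
\end{equation*}
and the Fenchel--Young/Cauchy--Schwarz bound then gives $\norm{\denmat_{1} - \denmat_{2}} \leq (1/\hstr)\,\dnorm{\dmat_{1} - \dmat_{2}}$, which is the desired $(1/\hstr)$-Lipschitz estimate.

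The main delicate point I foresee is the use of the extended-valued formalism in part (1): because $\denmats$ has a nonempty relative boundary (lower-rank density matrices), the subdifferential $\subd\hreg(\denmat)$ at boundary points contains the entire normal cone $\ncone_{\denmats}(\denmat)$, so classical smooth first-order conditions do not suffice and Fermat's rule for extended-real-valued functions must be invoked carefully. Once this is set up correctly, the rest of the proof is essentially a mechanical transfer of the standard vector-space theory to the Hermitian matrix setting, with the trace inner product playing the role of the ambient pairing between $\pspace$ and $\dspace$.
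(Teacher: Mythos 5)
Your proposal is correct and follows essentially the same route as the paper's proof: existence and uniqueness via strong convexity and lower-semicontinuity, the subdifferential characterization via Fermat's rule for the extended-real-valued $\hreg$, the inclusion $\relint\denmats \subseteq \dom\subd\hreg$ from standard convex analysis, and $\mirror = \nabla\hconj$ via Danskin's theorem. The only difference is cosmetic: where the paper cites a variational-analysis reference for the $(1/\hstr)$-Lipschitz bound, you spell out the underlying strong-monotonicity argument (adding the two subgradient inequalities), which is exactly the standard proof of that cited result.
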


Variants of these properties can be found in several points in the literature \textendash\ see \eg \cite{MZ19} and references therein \textendash\ but we provide a few quick pointers here for completeness.

\begin{proof}
For the first property of $\mirror$, note that the maximum in \eqref{eq:mirror} is attained for all $\dmat\in\dmats$ because $\hreg$ is \ac{lsc} and strongly convex.
Furthermore, by Fermat's theorem for stationarity points, $\denmat$ solves \eqref{eq:mirror} if and only if $\dmat - \subd\hreg(\denmat) \ni 0$, \ie if and only if $\dmat\in\subd\hreg(\denmat)$.
The above also shows that $\proxdom = \dom\subd\hreg$;
since $\relint\denmats \subseteq \dom\subd\hreg \subseteq \denmats$ \citep[Chap.~26]{Roc70}, our second claim follows.
Finally, for our third claim, the expression $\mirror = \nabla\hconj$ is an immediate consequence of Danskin's theorem \cite{SDR09}, while the Lipschitz continuity of $\mirror$ follows from standard variational analysis results, \citep[see \eg][Theorem 12.60(b)]{RW98}.
\end{proof}

The next properties of $\mirror$ are more specialized and are intimately related to the structure of the spectraplex:

\begin{lemma}
\label{lem:spectraplex}
With notation and assumptions as in \cref{lem:mirror}, we have:
\begin{enumerate}
\item
$\subd\hreg(\denmat) = \hker'(\denmat) + \ncone(\denmat)$, where $\ncone(\denmat)$ denotes the normal cone to $\denmats$ at $\denmat$;
in particular, for all $\denmat\in\relint\denmats$, we have $\subd\hreg(\denmat) = \setdef{\hker'(\denmat) + \coef \eye}{\coef\in\R}$.
\item
$\dmat$ and $\mirror(\dmat)$ commute for all $\dmat\in\dmats$.
\item
If $\eig(\dmat) = \{\dval_{\coord}\}_{\coord = 1}^{\vdim}$ and $\eig(\denmat) = \{\denval_{\coord}\}_{\coord = 1}^{\vdim}$ respectively denote the eigenvalues of $\dmat\in\dmats$ and $\denmat = \mirror(\dmat) \in \denmats$, we have $\denval_{\coord} \to 0$ whenever $\dval_{\coord} - \dval_{\coordalt} \to -\infty$ for some $\coordalt\neq\coord$.
\end{enumerate}
\end{lemma}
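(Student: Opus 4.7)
For part~(1), the plan is to decompose $\hreg$ as $\hreg = \hreg_0 + \iota_{\denmats}$, where $\hreg_0(\denmat) = \trof{\hker(\denmat)}$ is the smooth extension of the regularizer to an open neighborhood of $\denmats$ in the ambient space $\pspace = \herm[\vdim]$, and $\iota_{\denmats}$ is the indicator function of $\denmats$. Since $\hreg_0$ is finite and differentiable around $\denmats$, the effective domains of the two summands have intersecting relative interiors, so the Moreau\textendash Rockafellar sum rule yields $\subd\hreg(\denmat) = \nabla\hreg_0(\denmat) + \ncone(\denmat) = \hker'(\denmat) + \ncone(\denmat)$. For $\denmat \in \relint\denmats$, the matrix $\denmat$ is positive definite so the PSD inequality is inactive and only the trace equality contributes; the normal cone therefore collapses to the one-dimensional subspace $\R\eye$ generated by the gradient of the trace functional, which yields the second identity.

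For part~(2), my strategy is to describe the normal cone to the spectraplex explicitly using the representation $\denmats = \psd \cap \{\denmatalt : \trof{\denmatalt} = 1\}$. The trace constraint is affine, so the standard constraint qualification holds throughout $\denmats$ and I get $\ncone(\denmat) = \ncone_{\psd}(\denmat) + \R\eye$. By the classical characterization of the normal cone to the PSD cone, $\ncone_{\psd}(\denmat) = \{-\mat : \mat \succeq 0, \; \mat\denmat = 0\}$, and the complementarity condition $\mat\denmat = 0$ with $\mat,\denmat\succeq 0$ forces $\im\mat \subseteq \ker\denmat$, which in turn implies that $\mat$ commutes with $\denmat$. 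Combining this with the fact that $\hker'(\denmat)$ is a scalar function of $\denmat$ (and hence trivially commutes with it), the membership $\dmat - \hker'(\mirror(\dmat)) \in \ncone(\mirror(\dmat))$ from part~(1) gives $[\dmat,\mirror(\dmat)] = 0$.

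For part~(3), part~(2) lets me diagonalize $\dmat$ and $\denmat = \mirror(\dmat)$ simultaneously in a common unitary basis, reducing \eqref{eq:mirror} to the scalar convex program
\begin{equation*}
\max \Bigl\{ \txs\sum_{\coord} \dval_{\coord}\denval_{\coord} - \sum_{\coord} \hker(\denval_{\coord}) \;:\; \denval_{\coord} \geq 0,\; \txs\sum_{\coord} \denval_{\coord} = 1 \Bigr\}
\end{equation*}
over the standard simplex in $\R^{\vdim}$. The \ac{KKT} conditions read $\hker'(\denval_{\coord}) = \dval_{\coord} - \coef$ whenever $\denval_{\coord} > 0$, where $\coef$ is the Lagrange multiplier of the trace constraint. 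Since $\sum_\coord \denval_\coord = 1$, some bulk index $\coord^{\ast}$ always satisfies $\denval_{\coord^{\ast}} \geq 1/\vdim$; for this index $\hker'(\denval_{\coord^{\ast}})$ is bounded (because $\hker'$ is continuous on $[1/\vdim,1]$), so $\coef$ lies within a uniformly bounded distance of $\dval_{\coord^{\ast}}$. Under the hypothesis $\dval_{\coord} - \dval_{\coordalt} \to -\infty$, the identity $\hker'(\denval_{\coord}) - \hker'(\denval_{\coordalt}) = \dval_{\coord} - \dval_{\coordalt}$ together with the upper bound $\hker'(\denval_{\coordalt}) \leq \hker'(1)$ forces $\hker'(\denval_{\coord}) \to -\infty$ in the steep regime, while in the non-steep regime $\dval_{\coord} - \coef$ eventually drops below $\hker'(0^{+})$ and the complementary slackness condition pushes $\denval_{\coord}$ to the boundary. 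In both cases, the strict monotonicity of $\hker'$ (afforded by $\hker'' \geq \hstr > 0$) gives $\denval_{\coord} \to 0$.

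The main obstacle I anticipate lies in part~(2): the additive split $\ncone(\denmat) = \ncone_{\psd}(\denmat) + \R\eye$ and the commutativity of every $\ncone_{\psd}(\denmat)$-element with $\denmat$ both require delicate bookkeeping at boundary points where the rank of $\denmat$ drops and several eigenvalues coincide. Once this structural description of the normal cone is secured, parts~(2) and~(3) reduce to a clean simultaneous-diagonalization argument and a one-dimensional KKT analysis on the simplex.
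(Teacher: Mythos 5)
Your proposal is correct, but it follows a genuinely different route from the paper for parts (2) and (3); only part (1) coincides (the paper likewise writes $\hreg = \trof{\hker(\cdot)} + \delta_{\denmats}$ and applies the subdifferential sum rule). For part (2), the paper does \emph{not} analyze the normal cone at all: it exploits the unitary invariance of $\trof{\hker(\cdot)}$ and of $\denmats$ to show that the maximum in \eqref{eq:mirror} is already attained at a matrix diagonal in an eigenbasis of $\dmat$, and then invokes uniqueness of the maximizer (strong convexity) to conclude that $\mirror(\dmat)$ commutes with $\dmat$. Your alternative \textendash\ splitting $\ncone(\denmat)$ as $\ncone_{\psd}(\denmat)+\R\eye$ and using the complementarity $\mat\denmat=0$ to force commutation \textendash\ is valid (the constraint qualification holds because $\eye/\vdim$ lies in $\relint\psd$ and on the trace hyperplane), and it buys you an explicit description of the normal cone that the paper only uses implicitly elsewhere; the paper's invariance argument is cheaper because it sidesteps the rank-deficient bookkeeping you correctly flag as the delicate point. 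For part (3), the paper uses a primal exchange argument rather than KKT: assuming $\denval_{\coord,\run}\geq\eps$ along a subsequence, it compares the optimal value against the perturbed spectrum $\denval_{\run}+\eps(\bvec_{\coordalt}-\bvec_{\coord})$ and obtains $\eps(\dval_{\coord,\run}-\dval_{\coordalt,\run})\geq\min\hreg-\max\hreg$, contradicting divergence to $-\infty$. Your KKT route reaches the same conclusion, but be careful with the step invoking ``the identity $\hker'(\denval_{\coord})-\hker'(\denval_{\coordalt})=\dval_{\coord}-\dval_{\coordalt}$'': stationarity gives equality only at indices with strictly positive eigenvalue, and at $\coordalt$ you only get the one-sided bound $\dval_{\coordalt}-\coef\leq\hker'(\denval_{\coordalt})\leq\hker'(1)$ once the nonnegativity multiplier is accounted for. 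That one-sided bound is in fact all you need (it bounds $\dval_{\coord}-\coef\leq\dval_{\coord}-\dval_{\coordalt}+\hker'(1)\to-\infty$ while $\dval_{\coord}-\coef=\hker'(\denval_{\coord})$ would be bounded if $\denval_{\coord}\not\to0$), so the gap is presentational rather than substantive; the paper's exchange argument avoids this case analysis entirely and is the more robust of the two.
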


\begin{proof}
For our first claim, let $\delta_{\denmats}$ denote the (convex) indicator of $\denmats$, \viz $\delta_{\denmats}(\denmat) = 0$ for all $\denmat\in\denmats$ and $\delta_{\denmats}(\denmat) = \infty$ for all $\denmat\in\pspace\setminus\denmats$.
By standard convex analysis arguments \cite{Roc70}, we have $\subd\delta_{\denmats}(\denmat) = \ncone(\denmat)$ for all $\denmat\in\denmats$, so
\begin{equation}
\subd\hreg(\denmat)
	= \subd \bracks*{\trof{\hker(\denmat)} + \delta_{\denmats}(\denmat)}
	= \hker'(\denmat) + \ncone(\denmat),
\end{equation}
and our assertion follows.

For our second claim, let $\dmat = \dvecs \dvals \dvecs^{\dag}$ be an eigen-decomposition of $\dmat$.
Then the objective of \eqref{eq:mirror} can be rewritten as
\begin{align}
\trof{\dmat \denmat} - \hreg(\denmat)
	= \trof{\dvecs \dvals \dvecs^{\dag} \denmat} - \trof{\hker(\denmat)}
	&= \trof{\dvals \dvecs^{\dag} \denmat \dvecs} - \trof{\dvecs^{\dag} \hker(\denmat) \dvecs}
	\notag\\
	&= \trof{\dvals \dvecs^{\dag} \denmat \dvecs} - \trof{\hker(\dvecs^{\dag} \denmat \dvecs)}
\end{align}
since $\denmat$ and $\dvecs^{\dag} \denmat \dvecs$ have the same eigenvalues and $\trof{\hker(\cdot)}$ only depends on the eigenvalues of its argument.
Since $\denmats$ remains invariant under conjugation by any unitary matrix (\ie $\dvecs^{\dag} \denmats \dvecs = \denmats$ for every unitary $\dvecs$), it follows that
\begin{equation}
\max_{\denmat\in\denmats} \braces{\trof{\dmat \denmat} - \hreg(\denmat)}
	= \max_{\denmat\in\denmats} \braces{\trof{\dvals \denmat - \hker(\denmat)}}
	= \max_{\denval\in\simplex_{\vdim}}
		\braces*{\sum_{\coord=1}^{\vdim} \bracks{\dval_{\coord}\denval_{\coord} - \hker(\denval_{\coord})}}
\end{equation}
where, in obvious notation, $\denval_{\coord}$ and $\dval_{\coord}$ respectively denote the eigenvalues of $\denmat$ and $\dmat$.
This expression shows that the maximum value of \eqref{eq:mirror} is attained when $\dmat$ and $\denmat$ have a common basis of eigenvectors, which in turn means that they commute.
Since $\hker$ is strongly convex by assumption, \eqref{eq:mirror} admits a unique solution, and our claim follows.

Finally, for our third claim, let $\denmat_{\run} = \mirror(\dmat_{\run})$ for some sequence $\dmat_{\run} \in \dmats$, and write $\denval_{\coord,\run}$ and $\dval_{\coord,\run}$ for the eigenvalues of $\denmat_{\run}$ and $\dmat_{\run}$ respectively.
We seek to show that $\denval_{\coord,\run} \to 0$ if there exists some $\coordalt$ such that $\dval_{\coord,\run} - \dval_{\coordalt,\run} \to -\infty$.

Now, by descending to a subsequence if necessary, we may assume without loss of generality that there exists some $\eps>0$ such that $\denval_{\pure,\run} \geq \eps > 0$ for all $\run$.
Then, by the defining relation \eqref{eq:mirror} of $\mirror(\dmat)$, we have
\begin{flalign}
\label{eq:Qcomp1}
\sum_{\coord=1}^{\vdim} \bracks{\dval_{\coord,\run} \denval_{\coord,\run} - \hker(\denval_{\coord,\run})}
	\geq \sum_{\coord=1}^{\vdim} \bracks{\dval_{\coord,\run} \alt\denval_{\coord,\run} - \hker(\alt\denval_{\coord,\run})}
\end{flalign}
for all $\alt\denval\in\simplex_{\vdim}$.
Therefore, taking $\denval_{\run}' = \denval_{\run} + \eps(\bvec_{\purealt} - \bvec_{\pure})$, we readily obtain
\begin{equation}
\label{eq:Qcomp2}
\eps (\dval_{\pure,\run} - \dval_{\purealt,\run})
	\geq \sum_{\kappa} \bracks{\hker(\denval_{\kappa,\run}) - \hker(\alt\denval_{\kappa,\run})}
	\geq \min\hreg - \max\hreg
\end{equation}
which contradicts our original assumption that $\dval_{\pure,\run} - \dval_{\purealt,\run} \to -\infty$.
With $\denmats$ compact, the above shows that $\denval_{\pure}^{\ast} = 0$ for any limit point $x^{\ast}$ of $\denval_{\run}$, as claimed.
\end{proof}

With all this in hand, we proceed below with the basic properties of the Fenchel coupling \textendash\ so named to account for the fact that it collects all terms of Fenchel's inequality, \cf \cite{MZ19}.
The two main properties we will need are encoded in the following proposition

\begin{proposition}
\label{prop:Fenchel}
With notation and assumptions as in \cref{lem:mirror}, we have:
\begin{enumerate}
\item
$\fench(\basemat,\dmat) \geq (\hstr/2) \, \norm{\mirror(\dmat) - \basemat}^{2}$
	for all $\basemat\in\denmats$, $\dmat\in\dmats$.
\item
$\nabla_{\dmat^{\top}} \fench(\basemat,\dmat) = \mirror(\dmat) - \basemat$
	for all $\basemat\in\denmats$, $\dmat\in\dmats$.
\end{enumerate}
\end{proposition}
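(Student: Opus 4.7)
My plan is to prove both parts directly from the definition $\fench(\basemat,\dmat) = \hreg(\basemat) + \hconj(\dmat) - \trof{\basemat\dmat}$, leveraging the characterizations of $\mirror$ and $\hconj$ already established in \cref{lem:mirror}.

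For part (2), I would simply differentiate the defining expression in $\dmat$. The term $\hreg(\basemat)$ is constant in $\dmat$, the term $\trof{\basemat\dmat}$ is linear in $\dmat$ with gradient $\basemat$, and by \cref{lem:mirror}(iii) we have $\nabla\hconj = \mirror$. Putting these together gives $\nabla_{\dmat^{\top}}\fench(\basemat,\dmat) = \mirror(\dmat) - \basemat$.

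For part (1), I would set $\denmat^{\ast} \defeq \mirror(\dmat)$ and use the Legendre\textendash Fenchel identity $\hconj(\dmat) = \trof{\dmat\denmat^{\ast}} - \hreg(\denmat^{\ast})$, which holds because $\denmat^{\ast}$ attains the maximum in \eqref{eq:hconj}. Substituting this into the definition of $\fench$ and collecting terms yields
\begin{equation}
\fench(\basemat,\dmat)
    = \hreg(\basemat) - \hreg(\denmat^{\ast}) - \trof{\dmat(\basemat - \denmat^{\ast})}.
\end{equation}
The key observation is then that $\dmat \in \subd\hreg(\denmat^{\ast})$, which is the equivalence in \cref{lem:mirror}(i). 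Combined with the $\hstr$-strong convexity of $\hreg$ relative to $\norm{\cdot}$, this yields the strong subgradient inequality
\begin{equation}
\hreg(\basemat) \geq \hreg(\denmat^{\ast}) + \trof{\dmat(\basemat - \denmat^{\ast})} + \tfrac{\hstr}{2} \norm{\basemat - \denmat^{\ast}}^{2},
\end{equation}
and substituting this into the previous display gives exactly $\fench(\basemat,\dmat) \geq (\hstr/2)\norm{\mirror(\dmat) - \basemat}^{2}$.

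I do not anticipate a real obstacle here: both assertions are essentially bookkeeping once the basic convex-analytic facts of \cref{lem:mirror} are in hand. The only subtle point is ensuring that the strong-convexity inequality for $\hreg$ applies at $\denmat^{\ast} \in \proxdom$ against an arbitrary $\basemat \in \denmats$; this is legitimate because $\subd\hreg(\denmat^{\ast})$ is nonempty (so $\denmat^{\ast} \in \dom\subd\hreg$) and $\hreg$, viewed as an extended-real-valued function with $\hreg \equiv +\infty$ off $\denmats$, inherits $\hstr$-strong convexity across the whole spectraplex from the strong convexity of $\hker$ on $[0,1]$.
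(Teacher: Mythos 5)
Your proposal is correct and follows essentially the same route as the paper: part (2) is the identical computation via $\nabla\hconj = \mirror$ from \cref{lem:mirror}, and part (1) rests on the same combination of the optimality of $\mirror(\dmat)$ (equivalently $\dmat\in\subd\hreg(\mirror(\dmat))$) with the $\hstr$-strong convexity of $\hreg$. The only cosmetic difference is that you invoke the strong subgradient inequality as a packaged fact, whereas the paper derives it inline by evaluating along the segment $\denmat + t(\basemat-\denmat)$ and letting $t\to 0^{+}$.
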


\begin{proof}
For our first claim, let $\denmat=\mirror(\dmat)$ so $\hconj(\dmat) = \trof{\dmat \denmat} - \hreg(\denmat)$ by the definition \eqref{eq:hconj} of $\hconj$.
The strong convexity of $\hreg$ then yields
\begin{flalign}
\hreg(\denmat) + t \trof{\dmat (\basemat - \denmat)}
	\leq \hreg(\denmat + t(\basemat - \denmat))
	\leq t \hreg(\basemat) + (1-t) \hreg(\denmat) - \tfrac{1}{2} \hstr t(1-t) \norm{\denmat - \basemat}^{2},
\end{flalign}
leading to the bound
\begin{equation}
\label{eq:divbound}
\tfrac{1}{2} \hstr(1-t) \norm{\denmat - \basemat}^{2}
	\leq \hreg(\basemat) - \hreg(\denmat) - \trof{\dmat (\basemat - \denmat)}
	= \fench(\basemat,\dmat)
\end{equation}
for all $t\in(0,1]$.
Our assertion then follows by letting $t\to0^{+}$ in \eqref{eq:divbound}.

As for our second claim, \cref{lem:mirror} readily yields
\begin{flalign}
\nabla\fench(\basemat,\dmat)
	= \nabla_{\dmat^{\top}} \hconj(\dmat) - \nabla_{\dmat^{\top}} \trof{\dmat\basemat}
	= \mirror(\dmat) - \basemat
\end{flalign}
as claimed.
\end{proof}

We are now in a position to prove \cref{lem:Fench} on the evolution of the Fenchel coupling under the dynamics \eqref{eq:FTQL};
for convenience, we restate the result below.

\Fenchel*

\begin{proof}
By definition, we have
\begin{flalign}
\label{eq:dF}
\frac{d}{dt} \fench_{\play}(\basemat_{\play},\dstateof[\play]{\time})
	&= \ddt \bracks*{\hreg_{\play}(\basemat_{\play}) + \hconj_{\play}(\dstateof[\play]{\time}) - \trof{\dstateof[\play]{\time} \basemat_{\play}} }
	\notag\\
	&= \trof{\dot\dstate_{\play}(\time) \, (\stateof[\play]{\time} - \basemat_{\play})}
	= \trof{\payfield_{\play}(\stateof{\time}) \, (\stateof{\time} - \basemat)},
\end{flalign}
where, in the last line, we used \cref{prop:Fenchel}.
\end{proof}

The last result we will need is a ``reciprocity'' property in the spirit of \cite{MZ19} which shows that the convergence topology induced by $\fench$ on $\denmats$ is compatible with the ordinary one.
While this property is relatively easy to establish in linear polytopes with decomposable regularizers, the matrix setting complicates things considerably.

\begin{proposition}
\label{prop:reciprocity}
Let $\dmat_{\run}$, $\run=\running$, be a sequence in $\dmats$.
Then $\denmat_{\run} = \mirror(\dmat_{\run})$ converges to $\basemat \in \denmats$ if and only if $\lim_{\run\to\infty} \fench(\basemat,\dmat_{\run}) = 0$.
\end{proposition}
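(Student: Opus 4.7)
The forward implication is a direct consequence of the quadratic lower bound of Proposition~\ref{prop:Fenchel}(1): from $\fench(\basemat,\dmat_{\run}) \geq (\hstr/2)\norm{\mirror(\dmat_{\run})-\basemat}^{2}$, any sequence with $\fench(\basemat,\dmat_{\run}) \to 0$ must have $\denmat_{\run} \to \basemat$. All the work lies in the converse.

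For the converse, my first move is to exploit the \emph{gauge invariance} of the Fenchel coupling: $\fench(\basemat,\dmat+c\eye) = \fench(\basemat,\dmat)$ for every $c\in\R$, because $\hconj(\dmat+c\eye) = \hconj(\dmat)+c$ (since every $\denmat \in \denmats$ has unit trace) and $\trof{\basemat\cdot c\eye}=c$, so the two translations cancel. This lets me replace $\dmat_{\run}$ by any gauge-equivalent representative. In the benign case $\basemat \in \proxdom$, there is a $\dbase$ with $\mirror(\dbase)=\basemat$ (say $\dbase = \hker'(\basemat)$, modulo gauge). The strong convexity of $\hreg$ guarantees that the restricted map $\mirror$ has a continuous inverse on its image modulo $\R\eye$ (Lemma~\ref{lem:mirror}(3) together with the subdifferential characterization of Lemma~\ref{lem:spectraplex}(1)). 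Hence $\denmat_{\run}\to\basemat$ forces a normalized dual sequence to converge to $\dbase$, and continuity of $\hconj$ and of the trace pairing then gives $\fench(\basemat,\dmat_{\run}) \to \fench(\basemat,\dbase) = 0$.

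The genuinely difficult case is $\basemat \in \denmats \setminus \proxdom$ --- typically a low-rank state under a steep regularizer, where $\proxdom = \relint\denmats$. My plan is to work in the common eigenbasis $\{\evector_{\pure,\run}\}$ of $\denmat_{\run}$ and $\dmat_{\run}$ granted by Lemma~\ref{lem:spectraplex}(2), with respective eigenvalues $\denval_{\pure,\run}$ and $\dval_{\pure,\run}$. In this basis the Fenchel coupling decomposes as
\[
\fench(\basemat,\dmat_{\run}) \;=\; \hreg(\basemat) - \sum_{\pure} \hker(\denval_{\pure,\run}) + \sum_{\pure} \dval_{\pure,\run}\bigl(\denval_{\pure,\run} - \evector_{\pure,\run}^{\dag}\basemat\evector_{\pure,\run}\bigr),
\]
and each piece must be controlled. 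The first two scalar terms converge to $0$ by continuity of $\hker$ on $[0,1]$ and the fact that the eigenvalues of $\denmat_{\run}$ approach those of $\basemat$. The third sum is delicate: by Lemma~\ref{lem:spectraplex}(3), indices $\pure$ with $\denval_{\pure,\run} \to 0$ may force $\dval_{\pure,\run}$ to diverge to $-\infty$ after gauge fixing, so one must argue that the mismatch between $\denval_{\pure,\run}$ and the overlap $\evector_{\pure,\run}^{\dag}\basemat\evector_{\pure,\run}$ vanishes faster than $|\dval_{\pure,\run}|$ blows up. This is the fundamentally noncommutative obstruction absent from the simplex, where the ambient basis is fixed once and for all. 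I expect to resolve it by combining the extremal characterization of $\mirror$ --- which pins $\dval_{\pure,\run}$ to $\hker'(\denval_{\pure,\run})$ modulo the gauge --- with a quantitative eigenvector perturbation estimate extracted from $\denmat_{\run}\to\basemat$, forcing the eigenvectors paired with vanishing eigenvalues to rotate into $\ker(\basemat)$ at precisely the rate needed to tame the divergence.
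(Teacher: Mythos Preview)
Your setup matches the paper's almost exactly: you correctly use the gauge invariance $\fench(\basemat,\dmat+c\eye)=\fench(\basemat,\dmat)$, the common eigenbasis of $\denmat_{\run}$ and $\dmat_{\run}$ from \cref{lem:spectraplex}, and the identification $\dval_{\pure,\run}=\hker'(\denval_{\pure,\run})$ modulo gauge. The paper does not split into the cases $\basemat\in\proxdom$ versus $\basemat\notin\proxdom$; its argument is uniform, but your split is harmless.

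The genuine gap is in your last paragraph. After writing the dangerous term as $\sum_{\pure}\hker'(\denval_{\pure,\run})\bigl(\denval_{\pure,\run}-\base_{\pure,\run}\bigr)$ with $\base_{\pure,\run}=\evector_{\pure,\run}^{\dag}\basemat\evector_{\pure,\run}$, you propose to control it via a quantitative eigenvector perturbation bound, arguing that the $\evector_{\pure,\run}$ with $\denval_{\pure,\run}\to 0$ rotate into $\ker(\basemat)$ ``at precisely the rate needed''. This does not close the argument, for two reasons. First, even granting perfect control on $\base_{\pure,\run}$, the piece $\hker'(\denval_{\pure,\run})\cdot\denval_{\pure,\run}$ is still there, and it is a pure $0\cdot\infty$ in the steep case; no amount of eigenvector information touches it. Second, Davis--Kahan--type bounds depend on a spectral gap that may vanish when $\basemat$ has a degenerate kernel, so the promised ``precise rate'' is not available in general.

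The paper's resolution is a one-line analytic fact you are missing: because $\hker$ is continuous on $[0,1]$ and strictly convex, its derivative $\hker'$ is monotone and integrable on $(0,1]$, which forces
\[
\lim_{z\to 0^{+}} z\,\hker'(z)=0
\]
(this is the content of \cref{lem:growth}). With this in hand, no eigenvector perturbation is needed at all: for indices with $\base_{\pure}=\lim_{\run}\base_{\pure,\run}=0$ one simply uses the crude bound $\denval_{\pure,\run}-\base_{\pure,\run}\leq\denval_{\pure,\run}$ (just positive semidefiniteness of $\basemat$) and invokes $z\hker'(z)\to 0$; for indices with $\base_{\pure}>0$ the factor $\hker'(\denval_{\pure,\run})$ stays bounded and the difference vanishes by continuity. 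So the ``noncommutative obstruction'' you flag is real, but it is dissolved by an integrability property of $\hker$, not by tracking eigenvector rotation.
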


\begin{proof}
We begin by showing the direct implication ``$\implies$''.
Indeed, assuming that $\denmat_{\run} \to \basemat$, the definition of $\fench$ gives
\begin{align}
\fench(\basemat,\denmat_{\run})
	&= \hreg(\basemat)
		+ \hconj(\dmat_{\run})
		- \trof{\dmat_{\run} \basemat}
	\notag\\
	&= \hreg(\basemat) - \hreg(\denmat_{\run})
		+ \trof{\dmat_{\run} \, (\denmat_{\run} - \basemat)}
	\notag\\
	&= \hreg(\basemat) - \hreg(\denmat_{\run})
		+ \trof{\hker'(\denmat_{\run}) (\denmat_{\run} - \basemat)}	
\end{align}
where, in the last equality, we used the first part of \cref{lem:spectraplex}.
Since $\denmat_{\run} \to \basemat$, the term $\hreg(\basemat) - \hreg(\denmat_{\run})$ vanishes as $\run\to\infty$, so it suffices to show the same for the second term as well.
To that end, if we let $\denmat_{\run} = \sum_{\coord} \denval_{\coord,\run} \denvec_{\coord,\run} \denvec_{\coord,\run}^{\dag}$ be an eigen-decomposition of $\denmat_{\run}$ and we set $\base_{\coord,\run} = \denvec_{\coord,\run}^{\dag} \basemat \denvec_{\coord,\run}$ and $\base_{\coord} = \lim_{\run\to\infty} \base_{\coord,\run}$ (the limit exists because $\denmat_{\run}$ converges to $\basemat$ by assumption), we readily get
\begin{align}
\trof{\hker'(\denmat_{\run}) (\denmat_{\run} - \basemat)}
	&= \sum_{\coord=1}^{\vdim} \hker'(\denval_{\coord,\run}) (\denval_{\coord,\run} - \base_{\coord,\run})
	\notag\\
	&= \sum_{\coord\in\supp(\base)} \hker'(\denval_{\coord,\run}) (\denval_{\coord,\run} - \base_{\coord,\run})
		+ \sum_{\coord:\base_{\coord} = 0} \hker'(\denval_{\coord,\run}) (\denval_{\coord,\run} - \base_{\coord,\run})
\end{align}
Since $\denmat_{\run}$ converges to $\basemat$ the first sum above vanishes:
this is because $\lim_{\run\to\infty} \denval_{\pure,\run} > 0$, so $\hker'(\denval_{\coord,\run})$ has a finite limit for all $\coord\in\supp(\base)$.
As for the second, since $\denval_{\coord,\run} - \base_{\coord,\run} \leq \denval_{\coord,\run}$, it suffices to show that $z \hker'(z) \to 0$ as $z\to0^{+}$;
we prove that this is so in \cref{lem:growth} below, for $f = \hker'$.

Finally, as for the converse implication ``$\Longleftarrow$'', our assertion follows immediately from the first part of \cref{prop:Fenchel}.
\end{proof}

\begin{lemma}
\label{lem:growth}
Let $f:(0,1] \to \R$ be differentiable such that (i) $\lim_{x\to 0^+} f(x) = -\infty$, (ii) $\inf_{x \in (0,1]} f'(x) > 0$, and (iii) $\int_0^1 \abs{f(x)}\dd x < \infty$. Then, $\lim_{x\to 0^+} xf(x) = 0$.
\end{lemma}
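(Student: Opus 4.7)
The key observation is that condition (ii) forces $f$ to be strictly increasing, which combined with (i) gives a very strong structural handle: for $x$ sufficiently close to $0$, $f(x) < 0$, and on any such interval $(0,x]$ the function $|f| = -f$ is \emph{decreasing}. One can then bound $x|f(x)|$ from above by an integral tail that is known to vanish by (iii).

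First, I would fix $\delta \in (0,1]$ small enough that $f(x) < 0$ for all $x \in (0, \delta]$; this is possible by (i). Then, for any $x \in (0,\delta]$ and any $t \in (0,x]$, monotonicity of $f$ (from (ii)) yields $f(t) \leq f(x) < 0$, hence $|f(t)| \geq |f(x)|$. Integrating this pointwise inequality on $(0,x]$ immediately gives the crucial bound
\begin{equation*}
x\,|f(x)|
    \;=\; \int_{0}^{x} |f(x)|\,dt
    \;\leq\; \int_{0}^{x} |f(t)|\,dt.
\end{equation*}

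Next, I would invoke (iii): since $|f|$ is integrable on $(0,1]$, the function $F(x) \defeq \int_{0}^{x} |f(t)|\,dt$ is continuous on $[0,1]$ with $F(0)=0$, so $F(x) \to 0$ as $x \to 0^{+}$. Combined with the previous display, this yields $x|f(x)| \to 0$, and since $f$ is eventually negative, $xf(x) \to 0$ as well, proving the claim.

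The argument is essentially a one-liner once the monotonicity is exploited, so there is no substantial obstacle; the only thing to be careful about is to notice that condition (ii) (which a priori just enforces strong convexity of $\hker$) is precisely what is needed to guarantee monotonicity of $f = \hker'$ and hence the one-sided comparison $|f(t)| \geq |f(x)|$ for $t \leq x$. Conditions (i) and (iii) then play complementary roles: (i) supplies the interval on which $f$ is negative so that $|f|$ is genuinely the decreasing rearrangement of $-f$, and (iii) ensures the integral tail vanishes.
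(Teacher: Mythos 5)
Your proof is correct and follows essentially the same route as the paper's: both exploit monotonicity of $f$ (from (ii)) together with eventual negativity (from (i)) to get the comparison $x\abs{f(x)} \leq \int_0^x \abs{f(t)}\dd t$, and then let the integral tail vanish via (iii). The only cosmetic difference is that the paper works with $g=-f$ and justifies $\int_0^{\eps}\abs{f}\dd t \to 0$ by an explicit dominated-convergence argument, whereas you cite the (equivalent) absolute continuity of the integral of an integrable function.
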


\begin{proof}
For convenience, let $g = -f$. Then, we readily get (i) $\lim_{x\to 0^+} g(x) = +\infty$, (ii) $\inf_{x \in (0,1]} g'(x) < 0$, and (iii) $\int_0^1 \abs{g(x)}\dd x < \infty$. Since $\inf_{x \in (0,1]} g'(x) < 0$, we get that $g$ is strictly decreasing. Now, $\lim_{x\to 0^+} g(x) = +\infty$ implies that $g$ is strictly positive close to $0$, \ie $g(x) > 0$ in $(0,c)$ for some $c > 0$. Let $\eps \in (0,c)$. Since $g$ is decreasing and positive in $(0,c)$, we have:
\begin{equation}
\label{eq:sandwich}
0 \leq \eps g(\eps) \leq \int_0^{\eps}g(x)\dd x
\end{equation}
We will now show that $\lim_{\eps\to 0^+}\int_0^{\eps}g(x)\dd x = 0$. For this, we can write $\int_0^{\eps}g(x)\dd x$ as:
\begin{equation}
\label{eq:Lebesgue_notation}
\int_0^{\eps}g(x)\dd x = \int\one_{(0,\eps)}g(x)\dd x
\end{equation}
where $\one_{A}(x) = 1$ if $x \in A$ and $0$, otherwise.
Let $\setof{\eps_n}_{n \in \N}$ be a sequence of positive reals with $\eps_n \to 0$ as $n \to \infty$. Then, it holds:
\begin{equation}
    \one_{(0,\eps_n)}g(x) \xrightarrow{n \to \infty} 0 \quad \text{for all } x \in (0,1]
\end{equation}
since, for fixed $x \in (0,1]$, we have that $\eps_n < x$ for all $n$ large enough, which implies that $\one_{(0,\eps_n)}g(x) = 0$. Moreover, for all $x \in (0,1]$, it holds:
\begin{equation}
    \abs{\one_{(0,\eps_n)}g(x)} \leq \abs{g(x)}
\end{equation}
with $\int_0^1 \abs{g(x)}\dd x < \infty$.
Hence, by the dominated convergence theorem \cite{Fol99}, we get that:
\begin{equation}
    \lim_{n\to\infty}\int_0^1 \one_{(0,\eps_n)}g(x)\dd x = 0
\end{equation}
and, since $\setof{\eps_n}_{n\in\N}$ was arbitrary, we conclude that $\lim_{\eps\to 0^+}\int_0^{\eps}g(x)\dd x = 0$. Hence, combining it with \eqref{eq:sandwich}, we get that $\lim_{x\to 0^+} xg(x) = 0$, \ie $\lim_{x\to 0^+} xf(x) = 0$.
\end{proof}

\section{General properties of the dynamics}
\label{app:dynamics}

\dynamics*

\begin{proof}
First of all, according to the dynamics described in \cref{sec:dynamics},  $\denmat(\time)$ is obtained as a regularized best-response, \ie a solution of the maximization problem:
\begin{equation}
\label{eq:regbr}
\mirror(\dmat)
	= \argmax\nolimits_{\denmatalt \in \denmats} \braces{\trof{\dmat\denmatalt} - \trof{\hker(\denmatalt)}}.
\end{equation}
Letting $\stateof{\time} = \sum_{\pure=1}^{\vdim} \denval_{\pure}(\time) \, \denvec_{\pure}(\time) \denvec_{\pure}^{\dag}(\time)$ be an eigendecomposition of $\stateof{\time}$, since $\hreg$ is steep, we readily obtain that $\denmat(\time) \in \relint{\denmats}$, which implies that $\denval_{\pure}(\time) > 0$ for all $\pure$. Since $\inf_{x\in(0,1]}\hker''(x) > 0$, the function $x\mapsto\hker(x)$ is strictly convex, and so is $\denmat \mapsto \trof{\hker(\denmat)}$, see \citep[Theorem~2.10]{Carlen2009}. Hence, \eqref{eq:regbr} has a unique solution in $\denmats$. 

By the KKT conditions, the dual variables associated with the positive semi-definiteness constraints, $\denval_{\pure}(\time) \geq 0$ for $\pure = 1,\dots,\vdim$, are equal to zero in the optimal solution, since the inequalities are strict, as argued before. Hence, it is enough to consider the ``reduced'' Lagrangian:
\begin{equation}
\lag(\denmat;\lambda) = \trof{\dmat\denmat} - \trof{\hker(\denmat)} - \lambda(\trof{\denmat}-1)
\end{equation}
where $\lambda \in \R$ is the dual variable associated with the constraint $\trof{\denmat} = 1$. Then, differentiating $\lag$ with respect to $\denmat^\top$, the solutions need to satisfy $\nabla_{\denmat^\top}\lag(\denmat;\lambda) = \dmat - \hker'(\denmat) - \lambda\eye = 0$, \ie
\begin{equation}
\label{eq:sol}
\hker'(\denmat) = \dmat  - \lambda\eye
\end{equation}
where we used that $\nabla_{\denmat^\top}\trof{\hker(\denmat)} = \theta'(\denmat)$, see \citep{PP2012}.
Differentiating \eqref{eq:sol} with respect to $\time$, and invoking that $\dot\dmat = \payfield$, we obtain:
\begin{align}
\ddt\hker'(\denmat) &= \dot\dmat - \dot\lambda\eye \notag\\
&= \payfield -  \dot\lambda\eye
\end{align}
Writing $\theta'(\denmat)$ in the same eigenbasis as $\denmat$, \ie $\hker'(\denmat) = \sum_{\pure = 1}^{\vdim} \hker'(\denval_{\pure})\denvec_{\pure}\denvec_{\pure}^{\dag}$, the above equation becomes:
\begin{align}
\label{eq:time-derivative}
\sum_{k = 1}^{\vdim}\hker''(\denval_k)\dot\denval_{k}\denvec_{k}\denvec_{k}^{\dag} + \sum_{k = 1}^{\vdim}\hker'(\denval_k)\dot\denvec_{k}\denvec_{k}^{\dag} + \sum_{k = 1}^{\vdim}\hker'(\denval_k)\denvec_{k}\dot\denvec_{k}^{\dag} =  \payfield -  \dot\lambda\eye
\end{align}
Applying $\denvec_{\pure}^{\dag}$ on the left and $\denvec_{\purealt}$ on the right of \eqref{eq:time-derivative}, we obtain:
\begin{equation}
\sum_{k = 1}^{\vdim}\hker''(\denval_k)\dot\denval_{k}\denvec_{\pure}^{\dag}\denvec_{k}\denvec_{k}^{\dag}\denvec_{\purealt} + \sum_{k = 1}^{\vdim}\hker'(\denval_k)\denvec_{\pure}^{\dag}\dot\denvec_{k}\denvec_{k}^{\dag}\denvec_{\purealt} + \sum_{k = 1}^{\vdim}\hker'(\denval_k)\denvec_{\pure}^{\dag}\denvec_{k}\dot\denvec_{k}^{\dag}\denvec_{\purealt} =  \denvec_{\pure}^{\dag}\payfield\denvec_{\purealt} -  \dot\lambda\denvec_{\pure}^{\dag}\denvec_{\purealt}
\end{equation}
and using that $\denvec_{k}^{\dag}\denvec_{\ell} = \delta_{k\ell}$, the above relation becomes:
\begin{equation}
\label{eq:fancy-name}
\hker''(\denval_{\pure})\dot\denval_{\pure}\delta_{\pure\purealt} + \hker'(\denval_{\purealt})\denvec_{\pure}^{\dag}\dot\denvec_{\purealt} + \hker'(\denval_{\pure})\dot\denvec_{\pure}^{\dag}\denvec_{\purealt} =  \payfield_{\pure\purealt} -  \dot\lambda\delta_{\pure\purealt}
\end{equation}
Now, we observe that since $\denvec_{\pure}^{\dag}\denvec_{\purealt} = \delta_{\pure\purealt}$, differentiating it with respect to $\time$, we get:
\begin{equation}
\dot\denvec_{\pure}^{\dag}\denvec_{\purealt} + \denvec_{\pure}^{\dag}\dot\denvec_{\purealt} = 0
\end{equation}
and, hence, \eqref{eq:fancy-name} becomes:
\begin{equation}
\label{eq:key}
\hker''(\denval_{\pure})\dot\denval_{\pure}\delta_{\pure\purealt} + (\hker'(\denval_{\purealt}) - \hker'(\denval_{\pure}))\denvec_{\pure}^{\dag}\dot\denvec_{\purealt} =  \payfield_{\pure\purealt} -  \dot\lambda\delta_{\pure\purealt}
\end{equation}
With the above equation in hand, we proceed to the final steps of the proof. Following the same procedure as before, the $\pure\purealt$-entry of $\dot\denmat$, can be written as:
\begin{equation}
\label{eq:dot-expression}
[\dot\denmat]_{\pure\purealt} = \dot\denval_{\pure}\delta_{\pure\purealt} + (\denval_{\purealt} - \denval_{\pure})\denvec_{\pure}^{\dag}\dot\denvec_{\purealt}
\end{equation}

\begin{enumerate}
\item
For $\pure = \purealt$, equation \eqref{eq:dot-expression} gives:
\begin{equation}
\label{eq:dot-diag}
[\dot\denmat]_{\pure\pure} = \dot\denval_{\pure}
\end{equation}
and, equation \eqref{eq:key} becomes:
\begin{equation}
\hker''(\denval_{\pure})\dot\denval_{\pure} =  \payfield_{\pure\pure} - \dot\lambda
\end{equation}
Since $\denval_{\pure} >0$, by the hypothesis on $\hker$, we have $\hker''(\denval_{\pure}) > 0$, and therefore:
\begin{equation}
\label{eq:dot-eval}
\dot\denval_{\pure} =  \frac{\payfield_{\pure\pure}}{\hker''(\denval_{\pure})} - \frac{\dot\lambda}{\hker''(\denval_{\pure})}
\end{equation}
Summing the above for $\purealt = 1,\dots,\vdim$, and using the fact that $\sum_{\purealt = 1}^{\vdim}\dot\denval_{\purealt} = 0$ (since $\sum_{\purealt = 1}^{\vdim}\denval_{\purealt} = 1$), we obtain:
\begin{equation}
\label{eq:dot-lambda}
\dot\lambda = \frac{\sum_{\purealt = 1}^{\vdim}
			\payent_{\purealt\purealt}
			/\hker''(\denval_{\purealt})}
			{\sum_{\purealt = 1}^{\vdim} 1/\hker''(\denval_{\purealt})}
\end{equation}
Hence, combining \eqref{eq:dot-lambda} with \eqref{eq:dot-diag} and \eqref{eq:dot-eval}, we obtain:
\begin{equation}
\label{eq:eigval_dyn}
[\dot\denmat]_{\pure\pure} =  \frac{\payent_{\pure\pure}}{\hker''(\denval_{\pure})}
		- \frac
			{\sum_{\purealt} \payent_{\purealt\purealt}/\hker''(\denval_{\purealt})}
			{\sum_{\purealt} \hker''(\denval_{\pure})/\hker''(\denval_{\purealt})}
\end{equation}

\item
For $\pure = \purealt$, equation \eqref{eq:dot-expression} gives:
\begin{equation}
\label{eq:dot-nondiag}
[\dot\denmat]_{\pure\purealt} = (\denval_{\purealt} - \denval_{\pure})\denvec_{\pure}^{\dag}\dot\denvec_{\purealt}
\end{equation}
and, equation \eqref{eq:key} becomes:
\begin{equation}
\label{eq:eigvec_1}
(\hker'(\denval_{\purealt}) - \hker'(\denval_{\pure}))\denvec_{\pure}^{\dag}\dot\denvec_{\purealt} =  \payfield_{\pure\purealt} 
\end{equation}
Expressing $\dot\denvec_{\purealt}$ in the basis $\denvec_{1},\dots,\denvec_{\vdim}$, we write it as $\dot\denvec_{\purealt} = \sum_{k=1}^{\vdim} B_{\purealt k}\denvec_{k}$ with the coefficients $B_{\purealt k}$'s to be determined. Hence, \eqref{eq:eigvec_1} can be written as:
\begin{equation}
(\hker'(\denval_{\purealt}) - \hker'(\denval_{\pure}))\sum_{k = 1}^{\vdim}B_{\purealt k}\denvec_{\pure}^{\dag}\denvec_{k} =  \payfield_{\pure\purealt} 
\end{equation}
and, since $\denvec_{k}^{\dag}\denvec_{\ell} = \delta_{k\ell}$, we readily get:
\begin{equation}
(\hker'(\denval_{\purealt}) - \hker'(\denval_{\pure}))B_{\purealt\pure} =  \payfield_{\pure\purealt} 
\end{equation}
or, equivalently:
\begin{equation}
\label{eq:Betas}
B_{\purealt\pure} =  \frac{\payfield_{\pure\purealt}}{\hker'(\denval_{\purealt}) - \hker'(\denval_{\pure})}
\end{equation}
Now, it is easy to see that $\denvec_{\pure}^{\dag}\dot\denvec_{\purealt} = B_{\purealt\pure}$. Therefore, combining it with \eqref{eq:dot-nondiag} and \eqref{eq:Betas}, we conclude that:
\begin{align}
\label{eq:eigvec_dyn}
[\dot\denmat]_{\pure\purealt} = \frac{\denval_{\purealt} - \denval_{\pure}}{\hker'(\denval_{\purealt}) - \hker'(\denval_{\pure})}\payfield_{\pure\purealt}
\end{align}
\end{enumerate}
This concludes the proof.
\end{proof}

\begin{proposition}
\label{prop:basis_free_dynamics}
The \ac{QRD} can be written in the form:
\begin{equation}
\label{eq:QRD-free_prop}
\dot\denmat
	= \int_{0}^{1} \denmat^{1-s} \payfield(\denmat) \denmat^{s} \dd s
		- \trof{\denmat \payfield(\denmat)} \denmat
\end{equation}
\end{proposition}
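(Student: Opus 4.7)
The plan is to differentiate the closed-form \eqref{eq:MMW} expression $\denmat = \exp(\dmat)/Z$, where $Z \defeq \trof{\exp(\dmat)}$, via the Duhamel (Fréchet) formula for the matrix exponential, and then convert everything from $\dmat$ back to $\denmat$ using the fact that the two matrices share an eigenbasis.

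First, I would recall the Duhamel identity: for a differentiable curve of Hermitian matrices,
\begin{equation*}
\ddt \exp(\dmat)
    = \int_{0}^{1} \exp((1-s)\dmat) \, \dot\dmat \, \exp(s\dmat) \dd s.
\end{equation*}
Under \eqref{eq:MMW} we have $\dot\dmat = \payfield(\denmat)$, so the quotient rule applied to $\denmat = \exp(\dmat)/Z$ gives
\begin{equation*}
\dot\denmat
    = \frac{1}{Z} \int_{0}^{1} \exp((1-s)\dmat) \, \payfield(\denmat) \, \exp(s\dmat) \dd s
      - \frac{\dot Z}{Z} \, \denmat.
\end{equation*}
A quick calculation using the cyclic property of the trace and $\int_0^1 \exp((1-s)\dmat)\exp(s\dmat)\dd s = \exp(\dmat)$ yields $\dot Z / Z = \trof{\denmat \, \payfield(\denmat)}$, which already produces the second term of \eqref{eq:QRD-free_prop}.

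The key step is the identification of the integrand. Since $\denmat = \exp(\dmat)/Z$ is positive, we have $\log\denmat = \dmat - (\log Z)\eye$, and therefore
\begin{equation*}
\denmat^{s}
    = \exp(s\log\denmat)
    = \exp(s\dmat) \, Z^{-s},
\end{equation*}
so $\exp(s\dmat) = Z^{s}\denmat^{s}$ and likewise $\exp((1-s)\dmat) = Z^{1-s}\denmat^{1-s}$. Substituting into the integral and using $Z^{s}\cdot Z^{1-s} = Z$, the prefactor $1/Z$ cancels and the integrand collapses to $\denmat^{1-s}\payfield(\denmat)\denmat^{s}$, which is precisely the first term in \eqref{eq:QRD-free_prop}.

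The only mild obstacle is justifying the scalar‐exponent identity $\denmat^{s}=\exp(s\dmat)/Z^{s}$, but this is immediate from the fact that $\dmat$ and $\denmat$ are simultaneously diagonalizable (\cref{lem:spectraplex}) and that $\denmat$ is strictly positive along any trajectory of \eqref{eq:MMW} (steepness of the von Neumann regularizer). Combining the two pieces yields \eqref{eq:QRD-free_prop}, and a further projection of this basis-free expression onto the eigenbasis of $\denmat$ recovers the entrywise form \eqref{eq:QRD}, confirming consistency with \cref{thm:dynamics}.
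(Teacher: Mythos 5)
Your proposal is correct and follows essentially the same route as the paper's proof: differentiate the closed-form \eqref{eq:MMW} expression via the quotient rule, apply the Fréchet (Duhamel) formula for $\ddt\exp(\dmat)$, and convert $\exp(s\dmat)$ into $Z^{s}\denmat^{s}$ to collapse the normalization factors. The only (cosmetic) difference is that you compute $\dot Z/Z$ separately before handling the integrand, whereas the paper converts the Fréchet integral to the $\denmat$-variables first and then takes its trace; both orderings use the same cyclic-trace identity $\trof{\denmat^{1-s}\payfield(\denmat)\denmat^{s}} = \trof{\denmat\,\payfield(\denmat)}$.
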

\begin{proof}
In the case of the von Neumann regularizer, we ontain the replicator dynamics:
\begin{equation}
\dot\dmat
	= \payfield(\denmat)
	\qquad
\denmat
	= \frac{\exp(\dmat)}{\trof{\exp(\dmat)}}
\end{equation}
Hence, differentiating with respect to $\time$ and using the chain rule, we obtain:
\begin{align}
\label{eq:basis_free}
\dot\denmat
	&= \frac{1}{\trof{\exp(\dmat)}}\ddt\exp\parens{\dmat} + \exp(\dmat)\ddt\frac{1}{\trof{\exp(\dmat)}} \notag\\
	&= \frac{1}{\trof{\exp(\dmat)}}\ddt\exp\parens{\dmat} - \frac{\exp(\dmat)}{\trof{\exp(\dmat)}^2}\ddt\trof{\exp(\dmat)} \notag\\
	&= \frac{1}{\trof{\exp(\dmat)}}\ddt\exp\parens{\dmat} - \frac{\exp(\dmat)}{\trof{\exp(\dmat)}^2}\trof*{\ddt\exp(\dmat)}
\end{align}
By the Fr\'echet derivative \cite{Wil67}, it holds:
\begin{align}
\label{eq:frechet}
	\ddt\exp\parens{\dmat} &= \int_{0}^{1}e^{(1-s)\dmat}\dot\dmat e^{s\dmat}\dd s \notag\\
	&= \int_{0}^{1}e^{(1-s)\dmat}\payfield(\denmat) e^{s\dmat}\dd s \notag\\
	&= \trof{\exp(\dmat)}\int_{0}^{1}\denmat^{1-s}\payfield(\denmat) \denmat^{s}\dd s
\end{align}
and, using \eqref{eq:frechet} in \eqref{eq:basis_free}, we get:
\begin{align}
\dot\denmat
	&= \frac{1}{\trof{\exp(\dmat)}}\trof{\exp(\dmat)}\int_{0}^{1}\denmat^{1-s}\payfield(\denmat) \denmat^{s}\dd s
	\notag\\
	&\hspace{4em}
		- \frac{\exp(\dmat)}{\trof{\exp(\dmat)}^2}\trof*{\trof{\exp(\dmat)}\int_{0}^{1}\denmat^{1-s}\payfield(\denmat) \denmat^{s}\dd s}
	\notag\\
	&= \int_{0}^{1}\denmat^{1-s}\payfield(\denmat) \denmat^{s}\dd s - \frac{\exp{\dmat}}{\trof{\exp(\dmat)}} \trof*{\int_{0}^{1}\denmat^{1-s}\payfield(\denmat) \denmat^{s}\dd s}
	\notag\\
	&= \int_{0}^{1}\denmat^{1-s}\payfield(\denmat) \denmat^{s}\dd s - \frac{\exp{\dmat}}{\trof{\exp(\dmat)}} \int_{0}^{1}\trof{\denmat^{1-s}\payfield(\denmat) \denmat^{s}}\dd s
	\notag\\
	&= \int_{0}^{1}\denmat^{1-s}\payfield(\denmat) \denmat^{s}\dd s - \frac{\exp{\dmat}}{\trof{\exp(\dmat)}} \int_{0}^{1}\trof{\denmat\payfield(\denmat)}\dd s
	\notag\\
	&= \int_{0}^{1}\denmat^{1-s}\payfield(\denmat) \denmat^{s}\dd s - \trof{\denmat\payfield(\denmat)}\denmat
\end{align}
as asserted.
\end{proof}

\section{Regret minimization}
\label{app:regret}

Our aim in this appendix is to prove that the dynamics \eqref{eq:FTQL} incur at most constant regret.
For convenience, we restate the relevant result below.

\regret*

Our proof essentially follows the analysis of \cite{KM17};
the specialization of the proof occurs in deriving the exact value of the regret bound \eqref{eq:reg-FTQL} but, for completeness, we provide the entire proof.

\begin{proof}
Let $\bench_\play \in \denmats$ be a best fixed action in hindsight, \ie:
\begin{equation}
\label{eq:best_fixed}
\bench_\play \in \argmax_{\denmatalt_{\play}\in\denmats_{\play}}
\int_{0}^{\horizon}\pay_{\play}(\denmatalt_{\play};\stateof[-\play]{\time})\dd \time    
\end{equation}
and define the function $\energy_\play(\time) \defeq \fench_{\play}(\bench_\play,\dmat_\play(\time))$ for $\time \geq 0$.
By \cref{lem:Fench}, we have that:
\begin{align}
\label{eq:fencheldiff}
\ddt\energy_\play(\time)
	= \trof{\payfield_\play(\denmat(\time))(\denmat_\play(\time)-\bench_\play)}
	= \pay_{\play}(\denmat(\time)) - \pay_{\play}(\bench_{\play};\denmat_{-\play}(\time))
\end{align}
Integrating over time, and combining \eqref{eq:regret} with \eqref{eq:best_fixed}, we obtain:
\begin{align}
\label{eq:regbound}
\reg_\play(\horizon)
    &= \energy_\play(0) - \energy_\play(\horizon)
    \notag\\
    &= \hconj_{\play}(\dmat_{\play}(0))
		- \trof{\bench_{\play}\dmat_{\play}(0)} - \hconj_{\play}(\dmat_{\play}(\horizon))
		+ \trof{\bench_{\play}\dmat_{\play}(\horizon)}
    \notag\\
    &\leq \hconj_{\play}(\dmat_{\play}(0))
		- \trof{\bench_{\play}\dmat_{\play}(0)} + \hreg_{\play}(\bench_{\play})
    \notag\\
    &= \hreg_{\play}(\bench_{\play}) - \hreg_{\play}(\mirror_{\play}(\dmat_\play(0)))
    \notag\\
    &= \trof{\hker_{\play}(\bench_{\play})} - \trof{\hker_{\play}(\mirror_{\play}(\dmat_\play(0)))}
    \notag\\
    &\leq \max_{\denmatalt_{\play}\in\denmats_{\play}} \trof{\hker_{\play}(\denmatalt_{\play})} - \min_{\denmatalt_{\play}\in\denmats_{\play}} \trof{\hker_{\play}(\denmatalt_{\play})}
\end{align}
where we used the fact that
\begin{align}
\hconj_{\play}(\dmat_{\play})
	= \max\nolimits_{\denmatalt_{\play}\in\denmats_{\play}}\setof{\trof{\denmatalt_{\play}\dmat_{\play}} - \hreg_{\play}(\denmatalt_{\play})}
    \geq \trof{\bench_{\play}\dmat_{\play}} - \hreg_{\play}(\bench_{\play}) 
\end{align}
with equality if and only if $\denmatalt_{\play} = \mirror_\play(\dmat_{\play})$.
The precise bound \eqref{eq:reg-FTQL} then follows by noting that the maximum difference in the values of $\hreg$ is attained between the barycenter of the spectraplex and any pure density matrix of rank $1$, which thus hields the bound
\begin{equation}
\max\hreg_{\play} - \min\hreg_{\play}
	= \abs{\vdim_{\play} \cdot \hker_{\play}(1/\vdim_{\play}) - \hker_{\play}(1)},
\end{equation}
and our assertion follows.
\end{proof}

\section{The folk theorem}
\label{app:folk}

\newmacro{\pcpt}{\boldsymbol{\mathcal{C}}}

In this appendix, our aim is to prove \cref{thm:folk}.
To lay the necessary groundwork, we begin with a series of helper lemmas.
The first is related to the verification parts of \cref{thm:folk}, and concerns the behavior of the dynamics near its trapping regions.

\begin{lemma}
\label{lem:trap}
Fix some $\eq\in\denmats$, and assume that every neighborhood $\pnhd$ of $\eq$ in $\denmats$ admits a solution trajectory $\denmat(\time) = \mirror(\dmat(\time))$ of \eqref{eq:FTQL} such that $\denmat(\time) \in \pnhd$ for all $\time \geq 0$.
Then $\eq$ is a Nash equilibrium.
\end{lemma}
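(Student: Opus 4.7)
The plan is to argue by contradiction using the Fenchel coupling as a decreasing energy. Suppose $\eq$ is \emph{not} a Nash equilibrium. Then some player $\play$ has a profitable deviation $\bench_{\play}\in\denmats_{\play}$, that is, $\delta \defeq \pay_{\play}(\bench_{\play};\eq_{-\play}) - \pay_{\play}(\eq) > 0$. Since $\pay_{\play}$ is (multi-)linear in the players' density matrices, it is continuous on $\denmats$, so I can choose a neighborhood $\pnhd$ of $\eq$ small enough that
\begin{equation*}
\pay_{\play}(\bench_{\play};\denmat_{-\play}) - \pay_{\play}(\denmat)
	\geq \delta/2
	\quad\text{for all $\denmat\in\pnhd$}.
\end{equation*}

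By hypothesis, this specific $\pnhd$ admits a solution trajectory $\denmat(\time) = \mirror(\dmat(\time))$ of \eqref{eq:FTQL} with $\denmat(\time)\in\pnhd$ for all $\time\geq0$. Applying \cref{lem:Fench} to the test point $\bench_{\play}$ and invoking the bilinearity expressed in \eqref{eq:pay-lin}, I get
\begin{equation*}
\ddt\fench_{\play}(\bench_{\play},\dmat_{\play}(\time))
	= \trof{\payfield_{\play}(\denmat(\time))(\denmat_{\play}(\time) - \bench_{\play})}
	= \pay_{\play}(\denmat(\time)) - \pay_{\play}(\bench_{\play};\denmat_{-\play}(\time))
	\leq -\delta/2
\end{equation*}
for every $\time\geq0$. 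Integrating from $0$ to $\time$ then yields $\fench_{\play}(\bench_{\play},\dmat_{\play}(\time)) \leq \fench_{\play}(\bench_{\play},\dmat_{\play}(0)) - (\delta/2)\time$, so the Fenchel coupling diverges to $-\infty$ along the trajectory.

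This contradicts the nonnegativity guaranteed by \cref{prop:Fenchel}, which gives $\fench_{\play}(\bench_{\play},\dmat_{\play}(\time)) \geq (\hstr_{\play}/2)\norm{\denmat_{\play}(\time) - \bench_{\play}}^{2} \geq 0$. Hence $\eq$ must be a Nash equilibrium. There is no real obstacle here: the only subtlety is making sure the neighborhood $\pnhd$ is selected \emph{after} fixing the deviation $\bench_{\play}$ (so that the coercivity constant $\delta/2$ is uniform along the trajectory) and then invoking the hypothesis for that particular $\pnhd$ to extract a trajectory that remains trapped inside it.
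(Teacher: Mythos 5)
Your proof is correct and follows essentially the same route as the paper: both argue by contradiction, fix a profitable deviation, shrink the neighborhood so the payoff gap is uniformly bounded away from zero, and then use \cref{lem:Fench} to drive the Fenchel coupling to $-\infty$ along the trapped trajectory, contradicting its nonnegativity from \cref{prop:Fenchel}. The only cosmetic difference is that the paper phrases the deviation through the variational characterization \eqref{eq:VI} at the profile level, whereas you work with a single player's deviation directly via \eqref{eq:pay-lin} — these are equivalent here.
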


\begin{proof}
We argue by contradiction.
Indeed, if $\eq$ is not Nash, then, by the variational characterization \eqref{eq:VI} of \aclp{NE}, there exists some deviation $\basemat\in\denmats$ such that $\trof{\payfield(\eq)\,(\basemat - \eq)} > 0$.
Then, by continuity, there exists a sufficiently small compact neighborhood $\pcpt$ of $\eq$ in $\denmats$ such that $\trof{\payfield(\denmat) \, (\basemat - \denmat)} > 0$ for all $\denmat \in \pcpt$, implying in turn that $\const \defeq \inf_{\denmat\in\pcpt} \trof{\payfield(\denmat) \, (\basemat - \denmat)} > 0$, 
Now, by assumption, there exists a solution orbit $\stateof{\time} = \mirror(\dstateof{\time})$ of \eqref{eq:FTQL} such that $\stateof{\time} \in \pnhd$ for all $\time\geq0$ so, by \cref{lem:Fench}, we readily obtain
\begin{equation}
\ddt \fench(\basemat,\dstateof{\time})
	= \trof{\payfield(\stateof{\time}) \, (\stateof{\time} - \basemat)}
	\leq -\const
	< 0
	\quad
	\text{for all $\time\geq0$}.
\end{equation}
A simple integration then yields $\fench(\basemat,\dstateof{\time}) \to -\infty$ as $\time\to\infty$, a contradiction that proves our claim.
\end{proof}

The second intermediate result we will need is a uniform convergence lemma for states that are consistently asymptotically stable:

\begin{lemma}
\label{lem:uniform}
Suppose that $\eq \in \relint\denmats$ is consistently asymptotically stable under \eqref{eq:FTQL}.
Then, for every sufficiently small compact neighborhood $\pnhd$ of $\eq$ in $\relint\denmats$, we have
\begin{equation}
\label{eq:uniform}
\lim_{\time \to \infty} \sup_{\denmat \in \pnhd} \norm{\pflowof{\time}{\denmat}-\eq}
	= 0.
\end{equation}
\end{lemma}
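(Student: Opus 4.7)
The plan is to derive the uniform convergence on a compact neighborhood from pointwise attraction plus Lyapunov stability, via a standard compactness argument. The key observation is that, since $\eq \in \relint\denmats$, the consistency domain $\denmats_{\eq}$ coincides with all of $\denmats$; consequently, the hypothesis of consistent asymptotic stability reduces to the statement that there exists some open neighborhood $\pnhd_0$ of $\eq$ in $\denmats$ such that $\eq$ is Lyapunov stable and $\lim_{\time\to\infty}\pflowof{\time}{\denmat} = \eq$ for every $\denmat\in\pnhd_0$.

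With this in place, fix a compact neighborhood $\pnhd\subseteq\pnhd_{0}$ of $\eq$, and let $\eps > 0$. First, I would invoke Lyapunov stability at $\eq$ to obtain some $\delta = \delta(\eps) > 0$ such that every orbit with initial data in the $\delta$-ball around $\eq$ remains in the $\eps$-ball around $\eq$ for all forward times. Next, for each fixed $\denmat\in\pnhd$, pointwise convergence $\pflowof{\time}{\denmat}\to\eq$ supplies a time $T_{\denmat}\geq 0$ with $\norm{\pflowof{T_{\denmat}}{\denmat} - \eq} < \delta/2$. Continuity of the flow in its initial data — which holds because \eqref{eq:FTQL} is driven by the smooth field $\payfield$ through the Lipschitz mirror map $\mirror$ (cf.\ \cref{lem:mirror}) — then yields an open neighborhood $\nhd_{\denmat}\subseteq\pnhd_{0}$ of $\denmat$ such that $\norm{\pflowof{T_{\denmat}}{\denmatalt} - \eq} < \delta$ for every $\denmatalt\in\nhd_{\denmat}$.

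Re-applying Lyapunov stability to the shifted initial condition $\pflowof{T_{\denmat}}{\denmatalt}$, which by the semi-group property $\pflowof{\time}{\denmatalt} = \pflowof{\time - T_{\denmat}}{\pflowof{T_{\denmat}}{\denmatalt}}$ controls all later times, one obtains $\norm{\pflowof{\time}{\denmatalt} - \eq} < \eps$ for every $\denmatalt\in\nhd_{\denmat}$ and every $\time\geq T_{\denmat}$. The collection $\{\nhd_{\denmat}\}_{\denmat\in\pnhd}$ is an open cover of the compact set $\pnhd$; extracting a finite subcover $\nhd_{\denmat_{1}},\dotsc,\nhd_{\denmat_{\runalt}}$ and setting $\horizon = \max_{\play\leq\runalt} T_{\denmat_{\play}}$ then gives
\[
\sup_{\denmat\in\pnhd}\norm{\pflowof{\time}{\denmat} - \eq} < \eps
\qquad
\text{for all $\time\geq\horizon$,}
\]
which, since $\eps>0$ was arbitrary, establishes \eqref{eq:uniform}.

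The only delicate point is the appeal to continuity of the flow in its initial conditions: one must justify that the ensemble mirror map $\mirror$ and the payoff field $\payfield$ together generate a well-posed flow on $\denmats$ that depends continuously on the initial score matrix $\dstate(\tstart)$. This is already guaranteed by the Lipschitz continuity of $\mirror$ from \cref{lem:mirror} combined with the smoothness of the multilinear payoff gradients $\payfield_{\play}$, so no new analytic input is needed beyond what has already been set up.
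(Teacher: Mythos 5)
Your proof is correct and follows essentially the same route as the paper's: both rest on pointwise attraction plus Lyapunov stability, continuity of the flow in its initial data, and compactness of $\pnhd$ (you phrase it as a direct finite-subcover argument, the paper as a proof by contradiction via a convergent subsequence, but these are interchangeable). One small inaccuracy: for full-rank $\eq$ the consistency domain is $\denmats_{\eq} = \relint\denmats$ rather than all of $\denmats$, though this is immaterial here since $\pnhd$ is taken inside $\relint\denmats$.
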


\begin{proof}
First of all, since $\eq \in \relint\denmats$, we have that $\ker(\eq) = \setof{0}$, which implies that $ \denmats_{\eq} = \relint\denmats$.
Hence, for any open set $O$ in $\relint\denmats$, it holds that $O\cap\denmats_{\eq} = O$.
Now, let $\pnhd_0$ be the basin of attraction of $\eq$, according to the definition of the consistent asymptotic stability in \cref{def:consistent}, and let $\pnhd \subseteq \pnhd_0$ be a compact neighborhood of $\eq$.
Suppose, for the sake of contradiction, that $\sup_{\denmat \in \pnhd} \norm{\pflowof{\time}{\denmat}-\eq} \not\to 0$.
This implies that there exists $\eps >0$, a sequence $\{\time_n\}_{n \in \N}$ with $\time_n \to \infty$ as $n \to \infty$, and $\denmat_{n} \in \pnhd$ for $n \in \N$, such that
\begin{equation}
\label{eq:largedist}
\norm{\pflowof{\time_n}{\denmat_n}-\eq} \geq \eps \quad \text{for all } n \in \N
\end{equation}
Since $\pnhd$ is compact, we may assume (by taking subsequences, if necessary) that $\denmat_n$ converges to some limit point $\limitden \in \pnhd$.
Now, since $\eq$ is Lyapunov stable, there exists a neighborhood $\pnhd'$ of $\eq$, such that the trajectory $\pflowof{\time}{\denmat}$ remains within $\eps/2$-distance of $\eq$, if $\denmat \in \pnhd'$, or, equivalently:
\begin{equation}
\norm{\pflowof{\time}{\denmat}-\eq} < \frac{\eps}{2} \quad \text{for all } \time \geq 0 \text{ and } \denmat \in \pnhd'
\end{equation}
Define the hitting time $\hit \defeq \inf\{\time \geq 0: \pflowof{\time}{\limitden} \in \pnhd'\}$, as the first time that the trajectory enters $\pnhd'$ when starting from $\limitden$.
It is easy to see that $\hit < \infty$, since $\limitden \in \pnhd \subseteq \pnhd_0$, and, thus, $\norm{\pflowof{\time}{\limitden}-\eq} \to 0$ as $\time \to \infty$ by the asymptotic stability of $\eq$.

By continuity of $\pflow$, we readily get that there exists a neighborhood $\cont$ of $\limitden$ such that $\pflowof{\hit}{\cont} \subseteq \nhd''$, where $\nhd''$ is a neighborhood of $\eq$ with $\norm{\pflowof{\time}{\denmat}-\eq} < \eps$ for all $\time \geq 0$ and $\denmat \in \nhd''$.

Since $\denmat_n \to \limitden$, we conclude that $\denmat_n \in \cont$ for all sufficiently large $n$, which, in turn, implies that $\pflowof{\hit}{\denmat_n} \in \nhd''$.
Moreover, by definition of the sequence $\{\time_n\}_{n \in \N}$, we have that $\time_n \to \infty$, which gives that $\time_n > \hit$ for $n$ sufficiently large, since $\hit < \infty$, as argued before.

Therefore, for all $n$ sufficiently large, we have $\norm{\pflowof{\time_n}{\denmat_n}-\eq} < \eps$, which contradicts \eqref{eq:largedist}.
\end{proof} 

Our last step before proving \cref{thm:folk} is a comparison between the notions of Lyapunov stability induced by the flow $\dflow$ on $\denmats$ via $\mirror$, and the corresponding notion relative to the conjugate flow $\pflow$ on $\denmats$.
Formally, we have the following definition.

\begin{definition}
We say that $\eq \in \denmats$ is:
\begin{enumerate}
\item
$\pflow$-stable, if for all neighborhoods $\pnhd$ of $\eq \in \denmats$, there exists a neighborhood $\pnhd'$ of $\eq \in \denmats$, such that $\pflowof{\time}{\denmat} \in \pnhd$ for all $\denmat \in \pnhd'$ and all $\time \geq 0$.
\item
$\dflow$-stable, if for all neighborhoods $\pnhd$ of $\eq \in \denmats$, there exists a neighborhood $\pnhd'$ of $\eq \in \denmats$, such that $\pflowof{\time}{\denmat} \in \pnhd$ for all $\denmat \in \pnhd' \cap \im\mirror$ and all $\time \geq 0$.
\end{enumerate}
\end{definition}

\begin{lemma} 
\label{lem:stable-conj}
A point $\eq \in \denmats$ is $\pflow$-stable if and only if it is $\dflow$-stable.
\end{lemma}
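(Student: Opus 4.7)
The forward direction ($\pflow$-stable $\Rightarrow$ $\dflow$-stable) is immediate: if $\pnhd'$ witnesses $\pflow$-stability relative to some neighborhood $\pnhd$ of $\eq$, then $\pnhd' \cap \im\mirror \subseteq \pnhd'$ witnesses $\dflow$-stability relative to the same $\pnhd$. So the real content is in the converse.

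For the converse, the plan is to exploit the density of $\im\mirror$ in $\denmats$ together with the continuity of the flow in its initial conditions. More precisely, assume $\eq$ is $\dflow$-stable and let $\pnhd$ be an arbitrary open neighborhood of $\eq$ in $\denmats$. First, I would pick a smaller open neighborhood $\tilde\pnhd$ of $\eq$ with $\overline{\tilde\pnhd} \subseteq \pnhd$ (using that $\denmats$ is a metric space). By $\dflow$-stability applied to $\tilde\pnhd$, there exists an open neighborhood $\pnhd'$ of $\eq$ such that $\pflowof{\time}{\denmat} \in \tilde\pnhd$ for every $\denmat \in \pnhd' \cap \im\mirror$ and every $\time \geq 0$. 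I claim that $\pnhd'$ already witnesses $\pflow$-stability relative to $\pnhd$.

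Indeed, pick any $\denmat \in \pnhd'$. Since $\relint\denmats \subseteq \im\mirror$ by \cref{lem:mirror}, the set $\im\mirror$ is dense in $\denmats$, and the openness of $\pnhd'$ yields a sequence $\denmat_n \in \pnhd' \cap \im\mirror$ with $\denmat_n \to \denmat$. For each $n$ we have $\pflowof{\time}{\denmat_n} \in \tilde\pnhd$ for all $\time \geq 0$ by $\dflow$-stability. Passing to the limit along $n$ at a fixed $\time$ using the continuity of $\pflow$ in initial conditions places the limit trajectory in the closure, i.e.\ $\pflowof{\time}{\denmat} \in \overline{\tilde\pnhd} \subseteq \pnhd$. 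Since $\time \geq 0$ was arbitrary, this is exactly $\pflow$-stability.

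The step that requires care is the continuity of $\pflow$ in initial conditions at points of $\im\mirror$. The cleanest route is to lift: since the driving vector field of \eqref{eq:FTQL} on $\dmats$ is the composition of the smooth payoff field $\payfield$ with the $(1/\hstr)$-Lipschitz map $\mirror$ (\cref{lem:mirror}), the dual flow $\dflow$ on $\dmats$ depends continuously on its initial condition by standard ODE theory. Lifting a convergent sequence $\denmat_n \to \denmat$ in $\im\mirror$ to a convergent sequence $\dmat_n \to \dmat$ in $\dmats$ (using that $\mirror = \nabla\hconj$ is a gradient map whose fibers only differ by additive multiples of $\eye$, a symmetry under which the dynamics and $\mirror$ are invariant) and then applying $\mirror$ to $\dflowof{\time}{\dmat_n}$ yields the required continuity of $\pflow$ at $\denmat$. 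This compatibility between the lifting and the continuity estimate is the only technical subtlety; once it is in place, the density/closure sandwich above finishes the proof.
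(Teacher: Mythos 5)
Your proof is correct and follows essentially the same route as the paper's: both arguments sandwich a smaller neighborhood whose closure lies inside $\pnhd$, approximate an arbitrary initial point by a sequence in the dense set $\im\mirror$, and pass to the limit using continuity of $\pflow$ (the paper phrases this as a contradiction via a finite hitting time, you argue directly, but the content is identical). The only caveat is that the continuity you need is at the limit point $\denmat$, which may lie \emph{outside} $\im\mirror$ \textemdash\ so your lifting argument, which only covers points of $\im\mirror$, does not quite deliver it \textemdash\ but the paper sidesteps this entirely by building continuity into its definition of a flow, so this is not a gap relative to the paper's own standard.
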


\begin{proof}
The ``only if'' part is trivial, so we focus on the ``if'' part.
To that end, suppose that $\eq$ is $\dflow$-stable but not $\pflow$-stable.
Then, there exists a neighborhood $\pnhd$ of $\eq$ in $\denmats$ such that for all neighborhoods $\pnhd'$ of $\eq$, there exists $\denmat \in \pnhd'$ such that $\pflowof{\time}{\denmat}\not\in \nhd$ for some $\time \geq 0$.
Now, let $\pnhd_0$ be a neighborhood of $\eq$ such that $\cl(\pnhd_0) \subseteq \pnhd$.
Then, by $\dflow$-stability of $\eq$, there exists a neighborhood $\pnhd_0'$ of $\eq$ such that $\pflowof{\time}{\denmat} \in \pnhd_0$ for all $\denmat \in \pnhd_0'\cap\im\mirror$ and all $\time \geq 0$.
But, since $\eq$ is not $\pflow$-stable, there exists $\denmat \in \pnhd_0'\cap(\im\mirror)^c$ such that the hitting time $\hit = \inf \setof{\time \geq 0: \pflowof{\time}{\denmat}\not\in \pnhd}$ is finite.
Moreover, let $\setof{\denmat_n}_{n \in \N} \subseteq \pnhd_0'\cap\im\mirror$ be a sequence converging to $\denmat$ (such a sequence exists, because $\cl(\im\mirror) = \denmats$).
Since $\eq$ is $\dflow$-stable, it holds that $\pflowof{\hit}{\denmat_n} \in \pnhd_0$ for all $n$, and, therefore, using the continuity of $\pflow$, we get:
\begin{equation}
\pflowof{\hit}{\denmat} = \lim_{n \to \infty}  \pflowof{\hit}{\denmat_n}
\end{equation}
which implies that $\pflowof{\hit}{\denmat} \in \cl(\pnhd_0) \subseteq \pnhd$, as a limit point of $\pnhd_0$.
But, since $\pflowof{\hit}{\denmat} \not\in\pnhd$ by assumption, we obtain a contradiction that completes our proof.
\end{proof}

Thanks to the above proposition, we do not need to distinguish between ``Lyapunov stable'' and ``consistently Lyapunov stable'' states, as the two notions are equivalent.
Thus, with this issue settled, we are in a position to prove our quantum version of the ``folk theorem'', which we restate for convenience below:

\folk*

We prove the various parts of the theorem below.

\begin{proof}[Proof of Part 1]
Suppose that $\eq$ is a \acl{NE} of $\qgame$.
If $\rank(\eq) = 1$, there is nothing to show, so it suffices to establish our claim for the case $\rank(\eq) > 1$.
In that case, by restricting the dynamics if necessary to the relative interior of the face $\eqs = \setdef{\denmat\in\denmats}{\ker\denmat \geq \ker\eq}$ of $\denmats$ whose relative interior contains $\eq$, we may assume without loss of generality that $\eq$ has, in fact, full rank.
With this in mind, fix some $\dmat \in \mirror^{-1}(\eq)$, and consider the trajectory $\dstateof{\time} = \dmat + \time \payfield(\eq)$ starting at $\dmat$ at time $\time = \tstart$.
Since $\payfield(\eq) \in \ncone(\eq)$ by the variational characterization \eqref{eq:VI} of \aclp{NE}, we conclude by \cref{lem:mirror} that $\mirror(\dstateof{\time}) = \mirror(\dmat)$ for all $\time\geq\tstart$.
In turn, this shows that $\dot\dmat(\time) = \payfield(\eq) = \payfield(\mirror(\dstateof{\time}))$ for all $\time = \tstart$, \ie $\dstateof{\time}$ is a solution orbit of \eqref{eq:FTQL}.
By the Picard-Lindelöf theorem, this implies that $\dstateof{\time}$ is the unique solution thereof;
hence, given that $\stateof{\time} = \mirror(\dstateof{\time}) = \eq$ for all $\time\geq\tstart$, our claim follows.
\end{proof}

\begin{proof}[Proof of Parts 2 and 3]
Convergence to and Lyapunov stability of $\eq$ both imply the assumption of \cref{lem:trap}, so our assertions follow immediately by the conclusion of said lemma.
\end{proof}

\begin{proof}[Proof of Part 4]
The proof of this part consists of three steps.
First, we will construct a measure $\invar$ on $\relint\denmats$ that is invariant under the flow $\pflow$, \ie $\invar(A) = \invar(\pflowof{\time}{A})$ for all measurable sets $A$ and $\time \geq 0$, where by $\pflowof{\time}{A}$ we denote the image of $A$ after time $\time$.
Next, we will show that if $\eq \in \relint \denmats$, it cannot be consistently asymptotically stable.
Finally, we will conclude that if $\eq \in \denmats$, is consistently asymptotically stable, then it is pure with a simple reduction argument.

\para{Step 1} By the definition of $\mirror$, we have $\dmat = \nabla_{\denmat^{\top}}\hker(\denmat) + c\cdot\eye$, \ie $\mirror^{-1}(\denmat) = \{\nabla_{\denmat^{\top}}\hker(\denmat) + c\cdot\eye : c \in \R\}$.
Let $\quotientspace \defeq\dmats/\sim$ be the quotient space induced by the equivalence relation $\sim$, where $\dmat \sim \dmatalt$ if $\dmat - \dmatalt = c\cdot\eye$ for some $c \in \R$, and let $\quotientmap: \dmats \to \quotientspace$ be the corresponding quotient map, defined as $\quotientmap(\dmat) = \{\dmat + c\cdot\eye: c \in \R \}$.
Thus, by descending to the quotient space $\quotientspace$, the mirror map $\mirror$ factors through $\quotientspace$ as $\quotientmirror: \quotientspace \to \denmats$ defined as $\quotientmirror \defeq \mirror \circ \quotientmap$.
Now, let $\Leb$ denote the Lebesgue measure on $\quotientspace$, and let $\invar = \quotientmirror \texttt{\#} \Leb$ denote the pushforward of $\Leb$ to $\denmats$ via $\quotientmirror$, \ie $\invar(A) = \Leb(\quotientmirror^{-1}(A))$ for all measurable sets $A$ in $\denmats$.
Finally, by \cref{prop:volume-0}, we readily get that $\invar(\pflowof{\time}{A}) = \invar(A)$ for all measurable sets $A \subseteq \denmats$ and all $\time \geq 0$.

\para{Step 2} For the sake of contradiction, suppose $\eq \in \relint\denmats$ is consistently asymptotically stable, and let $\pnhd$ be a sufficiently small compact neighborhood of $\eq$, as per \cref{lem:uniform}.
Then, it holds $\lim_{\time \to \infty} \sup_{\denmat \in \pnhd} \norm{\pflowof{\time}{\denmat}-\eq} = 0$, which implies that:
\begin{equation}
\label{eq:measure_shrink}
\lim_{\time\to\infty} \invar(\pflowof{\time}{\pnhd})
    = \invar(\setof{\eq})
    = 0
    < \invar(\pnhd).
\end{equation}
However, by \emph{Step 1} of the proof, we have that $\lim_{t \to \infty}\invar(\pflowof{\time}{\pnhd}) = \invar(\pnhd)$, since $\invar(\pflowof{\time}{\pnhd}) = \invar(\pnhd)$ for all $\time \geq 0$.
Hence, combining it with \eqref{eq:measure_shrink}, we arrive at a contradiction.

\para{Step 3} Suppose, finally, that $\eq$ is consistently asymptotically stable with $\rankeq\defeq\rank(\eq) > 1$.
Note that if $\eq$ is full-rank, it cannot be consistently asymptotically stable, as shown at the \emph{Step 2} of the proof.
Let $\A_{\eq} \defeq \setdef{\denmat \in \denmats}{\ker(\denmat) \geq \ker(\eq)}$, which is convex.
Indeed, let $\denmat, \denmatalt \in \A_{\eq}$ and $\lambda \in (0,1)$.
For $y \in \ker(\eq)$, we have:
\begin{align}
[\lambda \denmat + (1-\lambda)\denmatalt]y & = \lambda \denmat y+ (1-\lambda)\denmatalt y \notag\\
& = 0
\end{align}
\ie $y \in \ker(\lambda \denmat + (1-\lambda)\denmatalt)$, and, thus, $\ker(\lambda \denmat + (1-\lambda)\denmatalt) \geq \ker(\eq)$.

Hence, considering the restriction of the dynamics on $\A_{\eq}$, \emph{Step 2} of the proof shows that a point in $\relint\A_{\eq}$ cannot be consistently asymptotically stable on the induced topology. Therefore, the restriction of $\eq$ on $\A_{\eq}$, denoted $\eq\mid_{\A_{\eq}}$ cannot be consistently asymptotically stable on $\A_{\eq}$, since it has full-rank on $\A_{\eq}$.
So, it remains to show that $\eq$ cannot be consistently asymptotically stable on $\denmats$.

For the sake of contradiction, suppose it is.
Then, according to \cref{def:consistent}, there exist $\pnhd$ of $\eq$ in $\denmats$ such that $\lim_{\time\to\infty} \pflowof{\time}{\denmat} = \eq$ for all $\denmat\in\pnhd\cap\denmats_{\eq}$.
But, since $\pnhd$ is a neighborhood of $\eq$ in $\denmats$, \begin{equation}
    \parens*{\pnhd\cap\denmats_{\eq}}\cap\A_{\eq} = \pnhd\cap\setof{\denmatalt \in \denmats: \ker(\denmatalt) = \ker(\eq)} \supsetneq \setof{\eq}
\end{equation}
and, thus, the restriction of $\pnhd\cap\setof{\denmatalt \in \denmats: \ker(\denmatalt) = \ker(\eq)}$ on $\A_{\eq}$ is a neighborhood of $\eq$ in $\A_{\eq}$.
Therefore, by our previous argument on the induced dynamics on $\A_{\eq}$, there exists $\denmat_0 \in \pnhd\cap\setof{\denmatalt \in \denmats: \ker(\denmatalt) = \ker(\eq)}$ such that $\pflowof{\time}{\denmat_0} \not\to \eq$, as $\time \to \infty$.
This contradicts our original assumption, so our proof is complete.
%
\end{proof}

\begin{proof}[Proof of Part 5]
Suppose that $\eq$ is (locally) variationally
stable, \ie $\trof{\payfield(\denmat)(\denmat -\eq)} < 0$ for all $\denmat \in \pnhd$, where $\pnhd$ is a neighborhood of $\eq$.
Then, by Lyapunov's direct method for the energy function $\energy(\time) \defeq \fench(\eq,\dmat(\time))$, and invoking \cref{lem:stable-conj}, we get that $\eq$ is stable, thus, there exists neighborhood $\pnhd'$ of $\eq$, such that $\denmat(\time) \in \pnhd$ for all $\time\geq 0$, if $\denmat(0) \in \pnhd'$.
This means that if $\denmat(0) \in \pnhd'$, then $\trof{\payfield(\denmat(\time))(\denmat(\time) -\eq)} < 0$ for all $\time \geq 0$.
In what follows, we will show that $\denmat(\time) \to \eq$ if $\denmat(0) \in \pnhd'$.

Now, let $\denmat(0) \in \pnhd'$.
Since $\denmat(\time) \in \pnhd$ for all $\time \geq 0$, and using \cref{lem:Fench}, we have that:
\begin{equation}
    \ddt \energy(\time) = \trof{\payfield(\denmat(\time))(\denmat(\time)-\eq)} < 0
\end{equation}
where the last inequality comes from \eqref{eq:VS}.
Hence, $\energy(\time)$ is decreasing if $\denmat(0) \in \pnhd$.

Now, we will show that $\eq$ is an $\omega$-limit of the flow, \ie there exists a sequence $\setof{\time_n}_{n \in \N}$ such that $\denmat(\time_n) \to \eq$ as $n \to \infty$.
For the sake of contradiction, suppose such a sequence does not exist.
Then, $\norm{\denmat(\time) - \eq}$ is bounded away from zero, which implies that there exists $c >0$ such that $\trof{\payfield(\denmat(\time))(\denmat(\time)-\eq)} < -c$, \ie $\ddt \energy(\time) < -c$.
Integrating over time, we get
\begin{align}
    \energy(\time) - \energy(0) < -c\time \notag
\end{align}
which implies that $\energy(\time)\to -\infty$ as $\time \to \infty$.
This contradicts the non-negativity property of the Fenchel coupling, thus, we conclude that there exists a sequence $\setof{\time_n}_{n \in \N}$ such that $\denmat(\time_n) \to \eq$ as $n \to \infty$.

Finally, by \cref{prop:reciprocity}, we get that $\fench(\eq,\dmat(\time_n)) \to 0$ as $n\to\infty$, \ie $\energy(\time_n) \to 0$.
Therefore, since $\energy(\time)$ is decreasing and $\energy(\time_n) \to 0$, we conclude that $\energy(\time) \to 0$ as $\time\to\infty$, \ie $\fench(\eq,\dmat(\time)) \to 0$.
Again, using \cref{prop:reciprocity}, we conclude that $\denmat(\time) \to \eq$, \ie $\eq$ is consistently asymptotically stable.
\end{proof}

\section{Poincaré recurrence}
\label{app:recurrence}

\newmacro{\qmat}{\mathbf{Z}}		
\newmacro{\qmats}{\boldsymbol{\mathcal{Z}}}		
\newmacro{\qval}{z}		

\newmacro{\qnhd}{\boldsymbol{\mathcal{W}}}		
\newmacro{\qmap}{\boldsymbol{\Pi}}		

\newmacro{\qflow}{\boldsymbol{\zeta}}		
\DeclarePairedDelimiterXPP{\qflowof}[2]{\qflow_{#1}}{(}{)}{}{#2}		

\newmacro{\set}{\mathcal{S}}

Our aim in this appendix will be to prove our recurrence result for min-max games admitting a full-rank equilibrium.
For ease of reference, we restate the result below:

\recurrence*

As we briefly discussed in the main part of the paper, our proof strategy will consist of the following steps:
\begin{enumerate}
\item
First, we will need to collapse the space of dual space $\dmats$ of $\vecspace \equiv \herm[\vdim]$ to a subspace $\qmats$ with constant trace.
The reason for this is that the mirror map $\mirror\from\dmats\to\denmats$ is not injective, even when $\hreg$ is steep:
for all $\coef\in\R$, we have $\mirror(\dmat + \coef\eye) = \mirror(\dmat)$, which means that the inverse image of any point in $\im\mirror$ always contains a copy of the real line.
This collapse is intended to ``quotient out'' this redundancy so as to enable volume comparisons later on.
\item
The second step is to show that the dynamics \eqref{eq:FTQL} remain well-defined on the quotient space $\qmats$, and that they are incompressible, that is, the volume of a set of initial conditions is maintained over time.
This is encoded by \cref{prop:volume} in the main text.
\item
The next step in our proof will be to show that the induced dynamics on $\qmats$ admit a \emph{constant of motion}.
This will be provided by the Fenchel coupling relative to the game's interior equilibrium (and it is also the point where the full-rank assumption is needed).
\item
Finally, the last \textendash\ and most challenging \textendash\ part of the proof is to show that the level sets of the Fenchel coupling are bounded in $\qmats$ (by contrast, they are \emph{not} bounded in $\dmats$).
\end{enumerate}
Once these steps our complete, \cref{thm:recurrence} will follow from Poincaré's theorem.
In what follows, we encode each of these steps in a series of lemmas and proposition.
To lighten notation, we will suppress player indices throughout unless it is absolutely necessary to include them.

\para{Step 1: Descending to the quotient}

To quotient out the redundancy mentioned above, consider the transformed matrix variables
\begin{equation}
\label{eq:qmat}
\qmat
	= \dmat - (1/\vdim) \trof{\dmat} \eye
\end{equation}
so $\tr\qmat = 0$ for all $\dmat\in\dmats$.
Formally, this transformation can be represented via tha map $\qmap\from\dmats\to\qmats$, where
\begin{equation}
\label{eq:qmats}
\qmats
	= \setdef{\qmat\in\dmats}{\tr\qmat = 0}
\end{equation}
and $\qmap \from \dmat \mapsto \qmap(\dmat) = \qmat$ is defined via \eqref{eq:qmat} above.
As such, $\qmat$ can be seen as a representative of the equivalence relation $\dmat \sim \dmat + \coef\eye$, $\coef\in\R$, under which light $\qmats$ can be identified with the quotient $\dmats/\sim$.

By construction and the properties of $\mirror$ (\cref{lem:mirror} in particular), we have $\mirror(\dmat) = \mirror(\qmap(\dmat)) = \mirror(\qmat)$ for all $\dmat\in\dmats$.
Accordingly, \eqref{eq:FTQL} readily gives
\begin{align}
\dot\qmat
	= \ddt \bracks{\dmat - (1/\vdim) \trof{\dmat} \eye}
	&= \dot\dmat - (1/\vdim) \trof{\dot\dmat} \eye
	\notag\\
	&= \payfield(\mirror(\dmat)) - (1/\vdim) \trof{\payfield(\mirror(\dmat))} \eye
	\notag\\
	&= \payfield(\mirror(\qmat)) - (1/\vdim) \trof{\payfield(\mirror(\qmat))} \eye
\end{align}
so we can rewrite \eqref{eq:FTQL} in terms of $\qmat$ as
\begin{equation}
\label{eq:FTQL-0}
\tag{FTQL$_{\qmats}$}
\dot\qmat
	= \payfield(\denmat)
		- (1/\vdim) \trof{\payfield(\denmat)} \eye
	\qquad
\denmat
	= \mirror(\qmat).
\end{equation}
Since $\trof{\dot\qmat} = \trof{\payfield(\denmat)} - (1/\vdim) \trof{\payfield(\denmat)} \tr\eye = 0$, it follows that any traceless initial condition of \eqref{eq:FTQL-0} will remain traceless \textendash\ and hence remain in $\denmats$ for all $\run\geq0$.
We thus conclude that \eqref{eq:FTQL-0} is a well-posed dynamical system on $\qmats$ whose induced flow $\qflow\from\R\times\qmats \to \qmats$ conjugates $\dflow$ in the sense that $\qflow_{\time} \circ \qmap = \qmap \circ \dflow_{\time}$ for all $\time\geq\tstart$.

\para{Step 2: Volume preservation}

Our second step is to show that the dynamics \eqref{eq:FTQL-0} are volume-preserving.
Formally, we establish the following direct analogue of \cref{prop:volume}:

\begin{proposition}
\label{prop:volume-0}
Let $\qnhd \subseteq \qmats$ be an open set of initial conditions of \eqref{eq:FTQL-0}, and let $\qnhd_{\time} = \qflowof{\time}{\qnhd}$, $\time\geq0$, denote the evolution of $\qnhd$ under the flow $\qflow \from \R\times\qmats \to \qmats$ of \eqref{eq:FTQL-0}.
Then, $\vol(\dnhd_{\time}) = \vol(\dnhd)$ for all $\time\geq0$.
\end{proposition}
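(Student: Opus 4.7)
My strategy is to apply Liouville's formula to the vector field
\[
F(\qmat) \defeq \payfield(\mirror(\qmat)) - (1/\vdim)\trof{\payfield(\mirror(\qmat))}\eye
\]
generating \eqref{eq:FTQL-0} and to show that its divergence on $\qmats$ vanishes identically. Once $\operatorname{div} F \equiv 0$ is in hand, Liouville's formula yields $\ddt\vol(\qnhd_\time) = \int_{\qnhd_\time} \operatorname{div} F(\qmat) \dd\qmat = 0$ for all $\time\geq 0$, and integrating in $\time$ gives the claim. Before that, I would do the routine check that the flow is well defined on $\qmats$: for each $\play\in\players$, the traceless constraint $\trof{\qmat_\play} = 0$ is preserved in time, because $\trof{\dot\qmat_\play} = \trof{\payfield_\play(\denmat)} - (1/\vdim_\play)\trof{\payfield_\play(\denmat)}\cdot\vdim_\play = 0$, so $\qflow$ restricts to a smooth flow on the product $\qmats = \prod_\play \qmats_\play$ of traceless Hermitian spaces, equipped with the product Lebesgue measure inherited from the ambient real vector space.

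The heart of the argument is the following block-separability observation: since each payoff $\pay_\play$ is linear in $\denmat_\play$ \textpar{\cf \eqref{eq:pay-lin}}, the individual payoff gradient $\payfield_\play(\denmat)$ depends on the rival states $\denmat_{-\play}$ but \emph{not} on $\denmat_\play$ itself. Combined with the fact that $\denmat_\play = \mirror_\play(\qmat_\play)$ depends only on $\qmat_\play$, this means that the $\play$-th block $F_\play(\qmat) = \payfield_\play(\denmat) - (1/\vdim_\play)\trof{\payfield_\play(\denmat)}\eye$ is independent of $\qmat_\play$. Consequently $\partial F_\play/\partial \qmat_\play \equiv 0$, and expanding the divergence blockwise in any orthonormal basis of the traceless Hermitian matrices $\qmats_\play$ yields
\[
\operatorname{div} F(\qmat) = \sum_{\play \in \players} \tr\bigl(\partial F_\play/\partial \qmat_\play\bigr) = 0,
\]
which is all that is needed.

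I do not expect any substantive obstacle here: the argument is structurally identical to the classical volume-preservation proof for \ac{FTRL} dynamics in finite games \cite{CGM15,FVGL+20}, with the quotient by scalar multiples of $\eye$ playing the same role as the ``uniform shift'' quotient in the classical setting (the quotient that will be needed in the next step to ensure that the Fenchel coupling has bounded level sets). The only minor subtlety is that the divergence must be computed on the subspace $\qmats_\play$ rather than on the full $\dmats_\play$, but since $F_\play$ already takes values in $\qmats_\play$ and is $\qmat_\play$-independent, this restriction is immediate and introduces no boundary or projection terms into the divergence computation.
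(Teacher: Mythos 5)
Your proposal is correct and follows essentially the same route as the paper: both arguments observe that $\payfield_{\play}$ is independent of $\denmat_{\play}$ (hence $F_{\play}$ is independent of $\qmat_{\play}$), conclude that the generating field is divergence-free blockwise, and invoke Liouville's formula. Your version is slightly more explicit about why $\denmat_{\play}=\mirror_{\play}(\qmat_{\play})$ decouples the blocks and about restricting the divergence computation to the traceless subspace, but these are refinements of the same argument rather than a different one.
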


\begin{remark*}
In the above (as in \cref{prop:volume}), the volume form $\vol(\cdot)$ stands for the ordinary Lebesgue measure on $\qmats$.
\end{remark*}

\begin{proof}[Proof of \cref{prop:volume-0}]
The key observation here is that the dynamics \eqref{eq:FTQL-0} are \emph{incompressible} \textendash\ that is, \emph{divergence-free}.
Indeed, since $\payfield_{\play}$ does not depend on $\denmat_{\play}$ (it is not possible to suppress player indices here), we readily have $\nabla_{\qmat_{\play}^{\top}} \payfield_{\play}(\mirror(\dmat)) = 0$ for all $\play\in\players$.
This immediately implies that the field
\begin{equation}
\mathbf{W}_{\play}(\qmat)
	= \payfield_{\play}(\mirror(\qmat))	- (1/\vdim_{\play}) \trof{\payfield_{\play}(\mirror(\qmat))} \eye
\end{equation}
has $\mathrm{div}_{\qmat}(\mathbf{W}(\qmat)) = 0$, so \eqref{eq:FTQL-0} is incompressible.
Our claim then follows from Liouville's formula \cite{Arn89}.
\end{proof}

\cref{prop:volume} is proved in exactly the same way, so we omit its proof (which is not needed for the sequel, anyway).

\para{Step 3: The Fenchel coupling as a constant of motion}

We now proceed to show that the dynamics \eqref{eq:FTQL-0} admit a constant of motion, namely the Fenchel coupling relative to the game's full-rank equilibrium.
Indeed, letting $\energy(\time) = \fench(\eq,\qmat(\time))$, \cref{lem:Fench} readily gives:
\begin{align}
\label{eq:invariance}
\frac{d\energy}{d\time}
	&= \trof{\payfield_{1}(\denmat) (\denmat_{1} - \eq_{1})}
		+ \trof{\payfield_{2}(\denmat) (\denmat_{2} - \eq_{2})}
	\notag\\
	&= \pay_{1}(\denmat_{1},\denmat_{2}) - \pay_{1}(\eq_{1},\denmat_{2})
		+ \pay_{2}(\denmat_{1},\denmat_{2}) - \pay_{2}(\denmat_{1},\eq_{2})
	\notag\\
	&= -\minmax(\denmat_{1},\denmat_{2}) + \minmax(\eq_{1},\denmat_{2})
		+ \minmax(\denmat_{1},\denmat_{2}) - \minmax(\denmat_{1},\eq_{2})
	= \minmax(\eq_{1},\denmat_{2}) - \minmax(\denmat_{1},\eq_{2})
	= 0
\end{align}
where, in the last line, we used the assumption that $\eq$ is a full-rank equilibrium;
see also the relevant discussion in \cite{JW09,JPS22}.

\para{Step 4: Boundedness of the trajectories}

We are left to show that the trajectories of \eqref{eq:FTQL-0} are bounded.
The key to this will be the following technical lemma:

\begin{lemma}
\label{lem:bounded}
Let $\curr[\qmat]$, $\run=\running$, be a sequence in $\qmats$ such that $\fench(\basemat,\curr[\qmat])$ remains constant for some interior $\basemat\in\denmats$.
Then the eigenvalues $\{\qval_{\coord,\run}\}_{\coord=1,\dotsc,\vdim}$ of $\curr[\qmat]$ are bounded, \ie $\sup_{\coord,\run} \abs{\qval_{\coord,\run}} < \infty$.
\end{lemma}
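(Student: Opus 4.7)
My plan is to prove the stronger statement that $\fench(\basemat, \cdot)$ is coercive on the trace-zero space $\qmats$, so that every sublevel set is bounded; the lemma then follows immediately because each eigenvalue of a Hermitian matrix is dominated by its Frobenius norm. The crucial observation is that the assumption $\basemat \in \relint \denmats$ is equivalent to $\denmats - \basemat$ containing a genuine neighbourhood of $0$ \emph{inside} $\qmats$. This is precisely the property that fails on $\dmats$ (where $\fench(\basemat, \cdot)$ is invariant under $\dmat \mapsto \dmat + \const\eye$), and it is what justifies having first descended to the quotient $\qmats$.

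To make this precise, let $\eps > 0$ denote the smallest eigenvalue of $\basemat$; full rank guarantees $\eps > 0$. Any Hermitian perturbation $\qmat'$ with $\frobnorm{\qmat'} \leq \eps$ keeps $\basemat + \qmat' \mgeq 0$, and if additionally $\tr\qmat' = 0$, then $\basemat + \qmat' \in \denmats$. Rewriting the Fenchel coupling as
\begin{equation*}
\fench(\basemat, \qmat) = \hreg(\basemat) + \max_{\denmat \in \denmats}\braces*{\trof{\qmat(\denmat - \basemat)} - \hreg(\denmat)},
\end{equation*}
and testing the maximum at the admissible choice $\denmat_0 = \basemat + \eps\, \qmat / \frobnorm{\qmat} \in \denmats$ (for $\qmat \neq 0$), I obtain the coercivity estimate
\begin{equation*}
\fench(\basemat, \qmat) \geq \hreg(\basemat) + \eps\, \frobnorm{\qmat} - \sup_{\denmat \in \denmats} \hreg(\denmat),
\end{equation*}
with the supremum being finite by continuity of $\hreg$ on the compact set $\denmats$.

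Applying this bound to the sequence $\curr[\qmat]$, along which $\fench(\basemat, \curr[\qmat])$ equals some constant $E$, yields a uniform bound on $\frobnorm{\curr[\qmat]}$ of the form $\eps^{-1}\parens*{E - \hreg(\basemat) + \sup_{\denmat \in \denmats}\hreg(\denmat)}$. Since $\abs{\qval_{\coord, \run}} \leq \frobnorm{\curr[\qmat]}$ for every eigenvalue, this gives the claimed uniform bound on the spectrum. The one step that requires care is producing the trace-preserving neighbourhood of $\basemat$ inside $\denmats$, and this is exactly the point where full rank of $\basemat$ is genuinely used; the analogous statement fails on $\dmats$ precisely because the dynamics can be shifted there by any constant multiple of $\eye$ without changing the state, the trajectory, or the Fenchel coupling.
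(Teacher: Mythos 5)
Your argument is correct, and it takes a genuinely different route from the paper's. The paper proves the lemma by contradiction: assuming the spectrum of $\curr[\qmat]$ is unbounded, it uses $\tr\curr[\qmat]=0$ to force the eigenvalue spread $\curr[\qval]^{+}-\curr[\qval]^{-}$ to diverge, extracts convergent subsequences, partitions the indices according to whether the gap $\curr[\qval]^{+}-\qval_{\coord,\run}$ stays bounded or diverges, and then invokes an auxiliary spectral fact (that $\denval_{\coord,\run}\to0$ whenever the corresponding dual gap diverges, together with the commutation of $\curr[\qmat]$ and $\mirror(\curr[\qmat])$) to show $\trof{\curr[\qmat](\basemat-\curr[\denmat])}$ diverges and hence that $\fench(\basemat,\curr[\qmat])$ cannot stay constant. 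Your proof bypasses all of this machinery with a direct coercivity estimate: writing $\fench(\basemat,\qmat)=\hreg(\basemat)+\max_{\denmat\in\denmats}\braces{\trof{\qmat(\denmat-\basemat)}-\hreg(\denmat)}$ and testing the maximum at $\denmat_{0}=\basemat+\eps\,\qmat/\frobnorm{\qmat}$ \textendash\ which lies in $\denmats$ precisely because $\basemat\mgeq\eps\eye$, the spectral norm is dominated by the Frobenius norm, and $\tr\qmat=0$ \textendash\ yields $\fench(\basemat,\qmat)\geq\hreg(\basemat)-\max_{\denmats}\hreg+\eps\frobnorm{\qmat}$, with the maximum finite since $\hker$ is continuous on $[0,1]$ and $\denmats$ is compact. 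This is more elementary (no subsequence extraction, no appeal to the asymptotic behavior of $\mirror$), it is quantitative (it gives an explicit bound $\frobnorm{\curr[\qmat]}\leq\eps^{-1}\parens{E-\hreg(\basemat)+\max_{\denmats}\hreg}$ in terms of the constant value $E$), and it actually establishes the stronger statement that every sublevel set of $\fench(\basemat,\cdot)$ on $\qmats$ is bounded \textendash\ which is what the recurrence argument ultimately needs. You also correctly isolate where interiority of $\basemat$ and the descent to the trace-zero quotient enter: without $\tr\qmat=0$ the test point would leave $\denmats$, and indeed $\fench(\basemat,\cdot)$ is invariant under $\dmat\mapsto\dmat+\const\eye$, so no such coercivity can hold on all of $\dmats$.
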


\begin{proof}
We argue by contradiction.
Indeed, assume that $\fench(\basemat,\curr[\qmat])$ is constant but $\sup_{\coord,\run} \qval_{\coord,\run}=\infty$.
Then, letting $\curr[\qval]^{+} = \max_{\coord} \qval_{\coord,\run}$ and $\curr[\qval]^{-} = \min_{\coord} \qval_{\coord,\run}$, we will also have $\limsup_{\run\to\infty} (\curr[\qval]^{+} - \curr[\qval]^{-}) = \infty$ by the fact that $\curr[\qmat] \in \qmats$ (so $\tr\curr[\qmat] = 0$). 
Then, by relabeling indices and descending to a finer subsequence if necessary, we may assume without loss of generality that the following properties hold concurrently:
\begin{enumerate}
\item
$\curr[\denmat] = \mirror(\curr[\qmat])$ converges to some $\denmat \in \denmats$ (by the compactness of $\denmats$).
\item
All elements of the eigen-decomposition $\curr[\denmat] = \sum_{\coord} \denval_{\coord,\run} \denvec_{\coord,\run} \denvec_{\coord,\run}^{\dag}$ of $\curr[\denmat]$ also converge (by the compactness of the unitary group $U(\vdim)$ and the unit simplex).
\item
$\qval_{\vdim,\run} = \curr[\qval]^{+}$ and $\qval_{1,\run} = \curr[\qval]^{-}$ for all $\run=\running$
\item
The index set $\{1,\dotsc,\vdim\}$ can be partitioned into two nonempty sets $\set^{+}$ and $\set^{-}$ such that
\begin{enumerate*}
[\itshape a\upshape)]
\item
$\sup_{\run} \abs{\curr[\qval]^{+} - \qval_{\coord,\run}} < \infty$ for all $\coord\in\set^{+}$;
and
\item
$\curr[\qval]^{+} - \qval_{\coord,\run} \to \infty$ for all $\coord\in\set^{-}$.
\end{enumerate*}
\end{enumerate}
Since $\curr[\denmat]$ and $\curr[\qmat]$ commute (by \cref{lem:spectraplex}), we get:
\begin{align}
\trof{\curr[\qmat] \cdot (\basemat - \curr[\denmat])}
	&= \sum_{\coord=1}^{\vdim} \qval_{\coord,\run} (\base_{\coord,\run} - \denval_{\coord,\run})
	\notag\\
	&= \sum_{\coord=1}^{\vdim} (\qval_{\coord,\run} - \curr[\qval]^{+}) (\base_{\coord,\run} - \denval_{\coord,\run})
	\notag\\
	&= \sum_{\coord\in\set^{+}} (\qval_{\coord,\run} - \curr[\qval]^{+}) (\base_{\coord,\run} - \denval_{\coord,\run})
	+ \sum_{\coord\in\set^{-}} (\qval_{\coord,\run} - \curr[\qval]^{+}) (\base_{\coord,\run} - \denval_{\coord,\run})
\end{align}
where we let $\base_{\coord,\run} = \denvec_{\coord,\run}^{\diag} \basemat \denvec_{\coord,\run}$, and we used the fact that $\tr\basemat = \tr\curr[\denmat] = 1$ in the second line.
Now, by the third part of \cref{lem:spectraplex}, we have $\denval_{\coord,\run} \to 0$ for all $\coord\in\set^{-}$, so $\lim_{\run\to\infty} (\base_{\coord,\run} - \denval_{\coord,\run}) > 0$ for all $\coord\in\set^{-}$ (recall that $\base_{\coord,\run}$ converges and $\basemat$ has been assumed full-rank).
As a result, we get
\begin{equation}
\sum_{\coord\in\set^{-}} (\qval_{\coord,\run} - \curr[\qval]^{+}) (\base_{\coord,\run} - \denval_{\coord,\run})
	\to -\infty
	\quad
	\text{as $\run\to\infty$}.
\end{equation}
and hence, since all quantities involved in the sum over $\coord\in\set^{+}$ are bounded, we conclude that $\trof{\curr[\qmat] \cdot (\basemat - \curr[\denmat])} \to -\infty$ as $\run\to\infty$.
However, by the definition of the Fenchel coupling, we have
\begin{equation}
\fench(\basemat,\curr[\qmat])
	= \hreg(\basemat) + \hconj(\curr[\qmat]) - \trof{\curr[\qmat] \basemat}
	= \hreg(\basemat) - \hreg(\curr[\denmat]) - \trof{\curr[\qmat] \cdot (\basemat - \curr[\denmat])},
\end{equation}
which in turn yields $\lim_{\run\to\infty} \fench(\basemat,\curr[\qmat]) = -\infty$, a contradiction.
\end{proof}

\para{Step 5: Putting everything together}

With all this said and done, we are finally in a position to prove \cref{thm:recurrence}.

\begin{proof}[Proof of \cref{thm:recurrence}]
By \cref{lem:bounded} and the fact that $\fench(\eq,\qmat(\time))$ is a constant of motion under \eqref{eq:FTQL-0}, it follows that the trajectories of \eqref{eq:FTQL-0} remain bounded.
Hence, by \cref{prop:volume-0} and Poincaré's recurrence theorem, it follows that the dynamics \eqref{eq:FTQL-0} are recurrent, \ie for (Lebesgue) almost every initial condition $\qmat \in \qmats$, the induced trajector $\qmat(\time) \equiv \qflow_{\time}(\qmat)$ returns arbitrarily close to $\qmat$ infinitely many times.
More precisely, this means that for every neighborhood $\qnhd$ of $\qmat(\tstart)$ in $\qmats$, there exists some finite return time $\hit_{\qnhd}$ such that $\qflow_{\hit_{\nhd}}(\qmat) \in \qnhd$.
Hence, by taking a shrinking net of balls $\qnhd_{\run}$ centered at $\qmat$ and with radius $1/\run$, and letting $\time_{\run} = \hit_{\qnhd_{\run}}$, we deduce that $\qflow(\time_{\run}) \to \qmat$ as $\run\to\infty$.
Our assertion then follows by applying the above construction to some $\qmat \in \mirror^{-1}(\stateof{\tstart}) \cap \qmat$ and noting that $\stateof{\time_{\run}} = \pflow_{\time_{\run}}(\stateof{\tstart}) = \mirror(\qflow_{\time_{\run}}(\qmat)) \to \mirror(\qmat) = \stateof{\tstart}$.
\end{proof}

\section{Numerical Results}
\label{app:numerics}

In this appendix, we describe our simulations setup. We consider a simple two-player zero-sum game with payoff matrices $P_1 = 
\begin{pmatrix}
2 & 1\\
-2 & -1
\end{pmatrix}$ for the row-
and $P_2 = -P_1$ for the column-player, expressed in the quantum regime. Denoting the rows as $(\pure_1,\pure_2)$ and the columns as $(\purealt_1,\purealt_2)$, we have that the action profile $(\pure_1,\purealt_2)$ is a strict $\Nash$. We run \eqref{eq:MMW} on the two-player quantum game with initial conditions $(0.2,0.8)$ and $(0.8,0.2)$ for the row- and column-player, respectively, and generate the trajectories of the players' states up to time $\time = 100$. The way we visualize them is through each player's Bloch sphere. Note that the red diamond ``\textcolor{BrickRed}{$\blacklozenge$}'' on the figures denotes the initial mixed states of the players. As we observe, the trajectories converge to the boundary of the Bloch spheres, \ie to pure states. To facilitate the reproducibility of our results, we used we denote that we used the openly available code from: [\citenum{JPS22}, \href{https://openreview.net/forum?id=FFZYhY2z3j}{https://openreview.net/forum?id=FFZYhY2z3j}] 


\bibliographystyle{icml2023}
\bibliography{bibtex/IEEEabrv,bibtex/Bibliography-PM}

\end{document}